\newtheorem{theorem}{Theorem}
\newtheorem{assumption}{Assumption}
\newtheorem{lemma}{Lemma}
\newtheorem{definition}{Definition}
\newcommand{\E}{\mathcal{E}_{N}}
\newcommand{\linn}{\mathcal{L}_{N}}
\newcommand{\lin}{\mathcal{L}}
\begin{document}

\begin{center}{\large \textbf{{Mean-field dynamics of open quantum systems with collective operator-valued rates: validity and application} }}
\end{center}

\begin{center}
Eliana Fiorelli\textsuperscript{1,2,3*},
Markus M\"uller \textsuperscript{2,3},
Igor Lesanovsky \textsuperscript{4,5},
Federico Carollo \textsuperscript{4}
\end{center}

\begin{center}
{\bf 1} Instituto de F\'isica Interdisciplinar y Sistemas Complejos (IFISC), UIB–CSIC, UIB Campus, Palma de Mallorca, 07122, Spain
\\
{\bf 2} Institute for Theoretical Nanoelectronics (PGI-2), Forschungszentrum J\"{u}lich, 52428 J\"{u}lich, Germany
\\
{\bf 3} Institute for Quantum Information, RWTH Aachen University, 52056 Aachen, Germany
\\
{\bf 4} Institut f\"{u}r Theoretische Physik, Universit\"{a}t T\"{u}bingen, Auf der Morgenstelle 14, 72076 T\"{u}bingen, Germany
\\
{\bf 5} School of Physics and Astronomy and Centre for the Mathematics and Theoretical Physics of Quantum Non-Equilibrium Systems, University of Nottingham, Nottingham, NG7 2RD, UK 
\\
* eliana@ifisc.uib-csic.es 
\end{center}

\begin{center}
\today
\end{center}

\section*{Abstract}
{\bf
We consider a class of open quantum many-body Lindblad dynamics characterized by an all-to-all coupling Hamiltonian and by  dissipation featuring collective ``state-dependent" rates. The latter encodes local incoherent transitions that depend on average properties of the system. This type of open quantum dynamics can be seen as a generalization of classical (mean-field) stochastic Markov dynamics, in which transitions depend on the instantaneous configuration of the system, to the quantum domain. We study the time evolution in the limit of infinitely large systems, and we demonstrate the exactness of the mean-field equations for the dynamics of average operators. We further derive the effective dynamical generator governing the time evolution of (quasi-)local operators. Our results allow for a rigorous and systematic investigation of the impact of quantum effects on paradigmatic classical models, such as quantum generalized Hopfield associative memories or (mean-field) kinetically-constrained models. 
}

\newpage

\section{Introduction}

Open quantum many-body systems constitute a fascinating subject of investigation \cite{BreuerP:2002,Lindblad76}. The interplay between coherent and dissipative processes, combined with the large number of microscopic constituents forming the system, can give rise to interesting nonequilibrium  stationary or dynamical phases \cite{diehl2008,diehl2010,dallatorre2010,Schindler2013,tauber2014,marcuzzi2016,minganti2018,iemini2018,carollo2019,chertkov2022} as well as to nonequilibrium critical dynamics \cite{sieberer2013,chertkov2022,helmrich2020,jo2021,jo2022}. An intriguing aspect of the formalism of open quantum systems is that it allows one to start from a purely classical stochastic dynamics (see, e.g., the reaction-diffusion processes considered in Ref.~\cite{hinrichsen2000}), and to gradually introduce quantum effects --- such as quantum superposition --- and analyze their impact on paradigmatic classical models \cite{marcuzzi2016,jo2021,chertkov2022}. 
In the Markovian regime, open quantum dynamics are described by means of quantum master equations [see Eq.~\eref{Lindblad-generator-Schr} below] with time-independent (Lindblad) generators \cite{Lindblad76,BreuerP:2002}. Despite looking fairly simple, solving these quantum master equations is a daunting task due to the exponential growth, with the number of particles, of the resources needed to describe the quantum state. This often renders both their numerical simulation \cite{weimer2021} and their analytical solution impractical. 

One way to make progress and to achieve a first analytical understanding of the behavior of these quantum systems is that of exploiting a mean-field approach  \cite{BenedikterPS15,MerkliR18,Porta16,Pickl11}, which also proved very useful in equilibrium settings \cite{hepp1973,hioe1973}. Broadly speaking, within this framework one neglects correlations in the system and this allows one to find a reduced set of differential equations providing the time evolution of key system observables. Interestingly, in certain cases such an approach can be shown to become exact in the thermodynamic limit, see, e.g., Refs.~\cite{BenedikterPS15,MerkliR18,Porta16,Pickl11,hepp1973,hioe1973}. For what concerns open quantum dynamics, the exactness of the mean-field approach has been rigorously shown for systems with collective jump operators and with a Hamiltonian featuring an all-to-all interaction between the different subsystems\cite{alicki1983,Benatti2016,BenattiEtAl18} as well as for different versions of spin-boson models \cite{MerkliR18,davies1973,mori2013,CarolloL:PRL:21}. The validity of a mean-field approach in certain open quantum systems has also been investigated numerically \cite{kirton2017,shammah2018,huybrechts2020,wang2021,piccitto2021}. 

In this manuscript, we consider quantum systems composed of a large number of finite-dimensional particles subject to a dissipative Markovian time evolution. In particular, we assume their open quantum dynamics to be characterized by an all-to-all coupling Hamiltonian and by dissipative (stochastic) single-body transitions, whose rates depend on the full many-body state. For these open quantum dynamics, we rigorously demonstrate the validity of the mean-field approach, both for the evolution of system-average properties and for the dynamics of local observables. To give a concrete example, our results apply --- but are not limited --- to quantum generalizations of Hopfield-like associative-memory dynamics \cite{Hopfield:1982,Gayrard92}, which are recently receiving attention also due to the possibility of realizing these systems in current experiments  \cite{MarshEtAl:PhysRevX:21}. Our findings put on rigorous footing existing results on their nonequilibrium behavior \cite{RotondoEtal:2018,Fiorelli:PRA:2019,FiorelliLM22}, justifying the investigation of the impact of the quantum effects on these platforms within a mean-field approach.

Our paper is organized as follows. In Section~\ref{overview}, we give a brief overview of our work explaining, in non-technical terms, the setting as well as our findings. In Section~\ref{sec3}, we introduce the system of interest and its Lindblad generator,  while in Section~\ref{sec4} we derive our main results. In Section~\ref{sec5}, as an application of our findings, we discuss the exactness of the mean-field equations for open quantum Hopfield neural networks. Finally, in the Appendixes, we prove several Lemmata needed to demonstrate our main theorems.

\section{Overview of the paper}
\label{overview}
We provide here an overview whose aim is to introduce the class of open quantum dynamics we will focus on, and to motivate their relevance. For concreteness, we limit the discussion of this Section to a system made by an ensemble of two-level particles. Our results are, however, valid for many-body systems made by $d$-level particles, with arbitrary $d<\infty$. 

\subsection{Dissipation with operator-valued rates}
\label{motivation}
We consider a system made by an ensemble of $N$ classical (Ising) spin-$1/2$ particles. Each particle is thus a two-level system, which can either be found in an {\it excited state} $\ket{\bullet}$ or in a {\it ground state} $\ket{\circ}$ [cf.~Fig.~\ref{Fig1}(a)]. For these particles, the simplest stochastic Markovian dynamics one can imagine is that of independent spin-flips. Namely, each particle can change its state either from the excited state to the ground state, $\ket{\bullet}\to\ket{\circ}$, at a rate $\gamma_\circ$, or from the ground state to the excited state, $\ket{\circ}\to\ket{\bullet}$, at a rate $\gamma_\bullet$, as depicted in Fig.~\ref{Fig1}(a). This is a simple non-interacting ``thermal" time evolution for the $N$-body system and does not show particularly interesting dynamical nor stationary features. A more intricate dynamics can emerge when the rate for the single-particle transitions depends on the configuration of the remainder of the system, see, e.g., example in Fig.~\ref{Fig1}(b). For instance, the rate of flipping into the excited state the $k$th spin could depend on whether particles $k-1$ and $k+1$ are in their excited  or in their ground state [cf.~Fig.~\ref{Fig1}(b)]. This scenario typically occurs when considering relaxation dynamics towards thermal states of classical interacting Hamiltonians, where transition rates depend on the difference in the energy before and after the transition \cite{glauber1963,walter2015}. Another interesting framework in which one finds state-dependent rates, is that of kinetically-constrained models \cite{fredrickson1984,cancrini2008,garrahan2011}, where certain transitions may be  forbidden if a given constraint is not satisfied. For instance, in the example shown in Fig.~\ref{Fig1}(b), we illustrate a model in which a change of the state for a given particle can only occur if both the neighboring particles are excited. In certain cases, e.g., with collective all-to-all classical Hamiltonian functions, transition rates depend on collective properties of the system. A possible generalization of the example in Fig.~\ref{Fig1}(b) to collective rates is achieved by choosing rates to depend on the square of the operator describing the density of excited states in $\ket{\bullet}$, i.e., $n_\bullet=\frac{1}{N}\sum_{k=1}^{N}n^{(k)}$, where $n^{(k)}$ is the operator $n=\ket{\bullet}\!\bra{\bullet}$ for the $k$th particle [see an illustration in Fig.~\ref{Fig1}(c)].

\begin{figure}[t]
\centering
\includegraphics[width=0.8\textwidth]{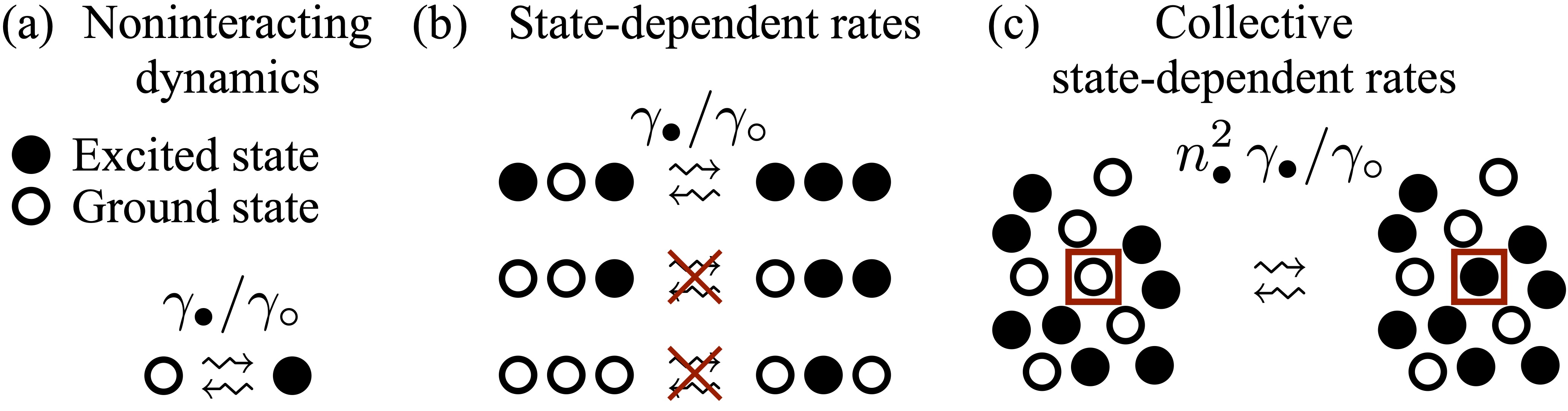}
\caption{{\bf Collective state-dependent rates.} a) A two-level system can either be found in an occupied state  $\bullet$ or in an empty one $\circ$. The simplest classical stochastic non-interacting dynamics for an ensemble of several two-level systems is that of independent spin-flips $\bullet \to\circ$ (rate $\gamma_\circ$) or $\circ\to\bullet$ (rate $\gamma_\bullet$).  In this case rates for the different transitions do not depend on the state of the neighboring particles. b) Example of a kinetically-constrained model in which the central particle can change its state only if the neighboring ones are both in the occupied state. c) In a collective all-to-all model,  the dynamics sketched in panel b) would reduce to one with transition rates which depend on the square of the density of occupied particles $n_\bullet$. }
\label{Fig1}
\end{figure}

This dynamics, just like any classical stochastic dynamics, can be written within the density-matrix formalism of open quantum systems \cite{Garrahan18}. This is done by introducing a dynamical generator --- which preserves diagonal density matrices (see, e.g., Ref.~\cite{CarolloGK:JSP:21}) --- as follows \footnote{\label{gen-dualgen}We denote dynamical generators acting on density-matrices with a $*$, as done for $\mathcal{D}^*$. We use $\mathcal{D}$ to denote instead the generator implementing the time evolution of observables.}
$$
\mathcal{D}^{*}[\rho]=\sum_{k=1}^N \left(J_\bullet^{k}\rho J_\bullet^{k\, \dagger }-\frac{1}{2}\left\{J_\bullet^{k\, \dagger } J_\bullet^{k},\rho \right\}\right)+\sum_{k=1}^N \left(J_\circ^{k}\rho J_\circ^{k\, \dagger }-\frac{1}{2}\left\{J_\circ^{k\, \dagger } J_\circ^{k},\rho \right\}\right)\, ,
$$
with  
$$
J_\bullet^k=\sqrt{\gamma_\bullet} \sigma_+^{(k)} n_\bullet\, , \qquad \qquad J_\circ^k=\sqrt{\gamma_\circ}\sigma_-^{(k)} n_\bullet\, ,
$$
and $\sigma_{+}=\ket{\bullet}\!\bra{\circ}$,  $\sigma_-=\sigma_+^\dagger$. This generator  evolves an initial density matrix $\rho$, through the equation $\dot{\rho}_t=\mathcal{D}^*[\rho_t]$. In the example above, the rates are operator-valued functions of a collective observable, namely the density of excited particles $n_\bullet$.

While formulated in a quantum language, the above dynamics is fully classical (whenever starting from a diagonal state). Nonetheless, it is now straightforward to add quantum coherent Hamiltonian contributions to such a dissipative stochastic time evolution and to investigate their impact on the behavior of the system. This can be done by considering the more general quantum master equation
\begin{equation}
\dot{\rho}_t=\mathcal{L}^*[\rho_t]:=-i[H,\rho_t]+\mathcal{D}^*[\rho_t]\, .
\label{Lindblad-generator-Schr}
\end{equation}
In this paper, we shall consider Lindblad generators with an all-to-all interacting  Hamiltonian $H$, and with dissipation characterized by collective state-dependent rates. 

\subsection{Contribution of this work}
In this work we derive the time evolution of average operators, such as the average ``magnetization" operators for the spin system $m_\alpha^N=\sum_{k=1}^N\sigma_\alpha^{(k)}/N$ (where $\sigma_\alpha$ are Pauli matrices constructed from the basis states $\ket{\bullet},\ket{\circ}$), under the dynamics generated by Lindblad operators of the form discussed in Subsection \ref{motivation} [see also Eq.~(\ref{Lindblad}-\ref{dissipator}) below], in the thermodynamic limit. In particular, we show the validity of the mean-field approximation --- obtained by factorizing expectation values of average operators (see discussion in Section \ref{Heis-MF}) --- for these models. The corresponding proof is presented in Section \ref{exactness} and follows the approach developed in Ref.~\cite{CarolloL:PRL:21}. In Section \ref{generator-quasi-local}, we further derive the effective Lindblad generator implementing the time-evolution of any (quasi-) local operator, such as a single-spin operator  $\sigma_\alpha^{(k)}$, in the thermodynamic limit.

\section{Model systems and their dynamical generators}
\label{sec3}
In this section, we present the class of systems under investigation, and we introduce the algebra of operators as well as a functional representation of the quantum states \cite{BratteliR82}. We then move to the definition of the so-called average operators --- which are nothing but sample-mean averages of a same single-particle operator \cite{BenattiEtAl18,Verbeure10} over the whole system --- and discuss their properties when considering clustering states \cite{LandfordR69,Strocchi05,thirring2013quantum}, i.e., states with sufficiently short-ranged --- in a sense made precise by Definition \ref{clustering} below --- correlations. At the end of the Section, we introduce the general form of the considered dynamical generators and prove first results about their action on local and on average operators.  

\subsection{Quasi-local algebra and quantum states}
We consider a many-body quantum system $S$, consisting of a (countably) infinite number of identical (distinguishable) particles, assumed to be $d$-level systems with $d<\infty$. Each particle can thus be associated with a natural number $k\in\mathbb{N}$. Any single-particle operator $x^{(k)}$, with $x\in M_d(\mathbb{C})$ and $M_d(\mathbb{C})$ being the algebra of $d\times d$ complex matrices, which acts non-trivially only on the $k$th particle can be lifted to be an operator of the many-body system by exploiting a tensor-product structure as 
$$
x^{(k)}={\bf 1}_d\otimes {\bf 1}_d\otimes \dots \otimes x\otimes {\bf 1}_d\otimes {\bf 1}_d \otimes \dots ,
$$
where ${\bf 1}_d$ is the identity in $M_{d}(\mathbb{C})$ and $x$ appears in the $k$th entry of the tensor product. All the (almost local) operators of the many-body system are contained in the so-called {\it quasi-local} $C^*$-algebra $\mathcal{A}$, which is obtained as the norm closure (here and throughout we consider the operator norm, denoted as $\|\cdot\|$, given by the largest eigenvalue, in modulus, of the operator) of the union of all possible local sub-algebras of the system \cite{BratteliR82}. In practice, the quasi-local algebra $\mathcal{A}$ contains all strictly local operators, i.e., all operators supported on a finite number of particles, as well as those operators which are quasi-localised, i.e., they are extended over the whole system but happen to be the limit of a converging sequence of local operators.

The full information about the state of a physical system is equivalent to the knowledge of all possible expectation values for its operators. Thus, given the algebra $\mathcal{A}$, the state of a quantum system can be generically represented as a functional, $\omega$, associating to each operator $A\in\mathcal{A}$ a complex number $\langle A\rangle$ embodying the expectation of the operator itself, $\mathcal{A}\ni A\mapsto \omega(A)=\langle A\rangle$. In order for such a functional to describe a physically-consistent state, $\omega$ must be a linear, positive and normalized [$\omega({\bf 1})=1$ with  ${\bf 1}$ being the identity of $\mathcal{A}$] functional on the quasi-local algebra \cite{BratteliR82}. In certain cases, the expectation values of single-particle operators do not depend on the considered particle, i.e., for any $x\in M_d(\mathbb{C})$ we have $\omega(x^{(k)})=\langle x\rangle$, $\forall k\in \mathbb{N}$. In these cases, the state is called translation invariant [see also Definition \ref{clustering} below]. 

\subsection{Average operators}
The quasi-local algebra $\mathcal{A}$ is the algebra of all operators which are, roughly speaking, almost localised in certain regions of the system. Often, however, when one considers many-body systems it is important to look at the behavior of collective operators, which can account for average properties of the whole system. For instance, this is the case when studying equilibrium as well as nonequilibrium phase transitions, which can be investigated and characterized via the behavior of so-called order-parameters. 

We are interested in the behavior of sequences of operators of the form 
\begin{equation}
X_{N} \equiv \frac{1}{N} \sum_{k=1}^{N} x^{(k)}, \qquad \mbox{with } \qquad x\in M_d(\mathbb{C})\, .
\label{eq:average-operators}
\end{equation}
These operators represent sample-mean averages of a same single-particle operator and are related to the random variables appearing in the law of large numbers \cite{grimmett2020probability}. For each finite $N$, the number of particles considered in the above summation is finite and thus the operator is strictly local. However, we are interested in the behavior of the average operators when $N\to\infty$. 

It turns out that the commutator between any two average operators, $[X_N,Y_N]$, goes to zero in the large $N$ limit, since its norm is bounded by $2\|x\| \|y\|/N$ \cite{LandfordR69,BratteliR82,BenattiEtAl18}. As such, these operators give rise to an emergent classical algebra in the thermodynamic limit. Still, the limiting point $X_\infty$ of the sequence $X_N$ in Eq.~\eref{eq:average-operators} does not belong to the quasi-local algebra $\mathcal{A}$, since the sequence $X_N$ does not converge in the norm topology \cite{BratteliR82}. To understand the structure of these operators in the thermodynamic limit $N\to\infty$, we need to resort to weaker forms of convergence. Here, we consider the so-called {\it weak operator topology} \cite{Strocchi05}. We will say that a sequence of operators $C_n$ converges weakly to the operator $C$, formally denoted as $C=(\mathrm{w\mbox{--}})\!\lim_{n\to\infty} C_n$, (to be read as weak-limit of the sequence $C_n$), if \footnote{We note that this form of convergence coincides with the  weak operator convergence within the so-called GNS representation of the algebra $\mathcal{A}$ induced by the state $\omega$ \cite{BratteliR82,Strocchi05}} 
\begin{equation}
\lim_{n\to\infty} \omega( A^\dagger C_n B)=\omega(A^\dagger  C B)\, \qquad \forall A,B\in \mathcal{A}\, . 
    \label{w-lim}
\end{equation}
This apparently abstract definition has a very relevant physical meaning: in the weak operator topology, we obtain information on the nature of the limiting operator $C$, by controlling all of its possible correlation functions with any quasi-local operator under the expectation associated with the quantum state $\omega$. 

For clustering quantum states, i.e., for states with sufficiently short-ranged correlations, the limiting operators $X_\infty$ of the sequences $X_N$ are nothing but multiples of the identity \cite{LandfordR69,BratteliR82,Strocchi05,Verbeure10}. This means that, $X_\infty={\rm (w\mbox{--}})\!\lim_{N\to\infty}X_N=\langle x\rangle$, where $\langle x\rangle=\omega(x)$ is the expectation of the single-particle operator $x$, where we have further assumed translation invariance of the state. (Note that on the right-hand side of the above limit the complex number $\langle x\rangle$ should be multiplied by an identity operator ${\bf 1}$. However, in order to simplify the notation we omit writing this here and in the following.) This occurs for instance for so-called {\it ergodic} states, i.e., for states that obey 
$$
\omega(x^{(k)}y^{(h)})\approx \omega(x^{(k)})\omega (y^{(h)})
$$
whenever $|k-h|$ is sufficiently large (see more general definition in, e.g., Ref.~\cite{Verbeure10}). Since in our work we will mainly look at average operators, we define clustering states through the property highlighted in Eq.~\eref{def_clust} of the following Definition.

\begin{definition}\label{clustering}
We refer to quantum states $\omega$ of the quasi-local algebra $\mathcal{A}$ as translation-invariant clustering states if the following properties are satisfied: 
\begin{eqnarray}
    & \mbox{i)} \quad \omega(x^{(k)})=\omega(x^{(h)})=\langle x\rangle, \qquad &\forall x\in M_d(\mathbb{C}), \forall k,h\in \mathbb{N}\, ; \\
    & \mbox{ii)}\lim_{N\to\infty}\omega([X_N-\langle x\rangle]^2)=0, \qquad &\forall x=x^\dagger \in M_d(\mathbb{C})\, . \label{def_clust} 
\end{eqnarray}
\end{definition}
The second property above shows that for such clustering states the variance of the operators $X_N$ vanishes in the large $N$ limit and, thus, the limiting operators $X_\infty$ must converge to multiples of the identity. It is indeed possible to show that Eq.~\eref{def_clust} implies the weak convergence of $X_N$ to $X_\infty=\langle x\rangle$, as defined by Eq.~\eref{w-lim}.

\subsection{Lindblad generators with collective operator-valued rates}
We assume that the many-body system introduced above is subject to a Markovian open quantum dynamics \cite{Lindblad76,BreuerP:2002}, implemented through a quantum master equation by means of a time-independent dynamical generator. The latter must assume a Lindblad form for the dynamics to be physically consistent \cite{Lindblad76}. 

The time-evolution of any operator $O\in \mathcal{A}$ must thus obey the equation
\begin{equation}
\dot{O}(t)=\mathcal{L}_N[O(t)]\, ,
    \label{QME}
\end{equation}
with $\mathcal{L}_N$ being the Lindblad operator evolving observables, i.e., the generator dual to the one introduced in Eq.~\eref{Lindblad-generator-Schr} (see also Footnote \ref{gen-dualgen}). The formal solution of the above equation is given by $O(t)=e^{t\mathcal{L}_N}[O]$. As usually done in order to study the emergent dynamics in the infinite system, we have first defined the dynamical generator $\mathcal{L}_N$ for an ensemble of $N$ particles, and we will then derive the asymptotic dynamics taking the limit $N\to\infty$. 

Before discussing the form of the considered dynamical generator (briefly mentioned in Section \ref{motivation}), it is convenient to introduce an orthonormal, hermitian basis $\{v_\alpha\}_{\alpha=1}^{d^2}$ for the single-particle algebra $M_{d}(\mathbb{C})$. We thus have a set of operators such that $v_\alpha=v_\alpha^\dagger$ as well as $\tr{(v_{\alpha} v_{\beta})}=\delta_{\alpha \beta}$ (implying  $\|v_\alpha\|\le 1$)  which we can employ to decompose any other operator $x\in M_d(\mathbb{C})$ through the relation
\begin{equation}
x=\sum_{\alpha=1}^{d^2} \tr{(x \, v_{\alpha})}v_{\alpha}\, .
\end{equation}
For later convenience, we also define the structure coefficients $a_{\alpha\beta}^{\gamma}$ for the chosen basis, obtained as 
\begin{equation}
[v_{\alpha},v_{\beta}] = \sum_{\gamma=1}^{d^2}  a_{\alpha\beta}^{\gamma}  v_{\gamma}, \quad a_{\alpha\beta}^{\gamma} \equiv \tr{([v_{\alpha},v_{\beta}] v_{\gamma})}.
\end{equation}

Exploiting this single-particle basis, the Lindblad generator can be decomposed into two different contributions 
\begin{equation}
 \mathcal{L}_N[O]=i[H,O]+\sum_{\ell=1}^{q} \mathcal{D}_\ell[O]\, ,
    \label{Lindblad}
\end{equation}
where $H$ is the Hamiltonian of the system assuming the form 
\begin{equation}\label{e0_totalHamiltonian}
H= \sum_{k=1}^{N} \sum_{\alpha=1}^{d^2} \epsilon_{\alpha} v_{\alpha}^{(k)} + \frac{1}{N} \sum_{k,j=1}^{N} \sum_{\alpha, \beta=1}^{d^2}  h_{\alpha \beta} v_{\alpha}^{(k)} v_{\beta}^{(j)}\, .
\end{equation}
The first term on the right-hand side of the above equation (with $\epsilon_\alpha$ real) represents a single-particle contribution to the Hamiltonian, while the second one, with $h_{\alpha\beta}=h_{\beta\alpha}^*$ considers two-body interactions in an all-to-all fashion. We note that since we have an unconstrained sum --- which double counts the interactions between particles --- the terms $h_{\alpha\beta}$ are equal to  half of the actual interaction strength. Moreover, the double sum also contains terms with $k=j$ which describe single-particle terms rather than interactions. Due to the presence of the factor $1/N$ in front of the second part of the Hamiltonian, these terms become irrelevant in the thermodynamic limit. We can thus safely keep them as this will be convenient later on. In summary, the second contribution to the Hamiltonian in Eq.~\eref{e0_totalHamiltonian} describes interactions between all pairs of particles  with a same strength proportional to $1/N$.

The terms collected in the maps, or dissipators,  $\mathcal{D}_\ell$ describe instead dissipative contributions to the time-evolution. As already discussed in Section \ref{motivation}, we take them to be of the form 
\begin{equation}
    \mathcal{D}_\ell[O]=\frac{1}{2}\sum_{k=1}^N \left(\left[J_{\ell}^{k\,  \dagger}, O\right] J_{\ell}^{k} + J_{\ell}^{k\, \dagger }[O, J_{\ell}^{k}]  \right) \, ,
    \label{dissipator}
\end{equation}
with
\begin{equation}
J_\ell^{k}=j_\ell^{(k)}\Gamma_\ell(\Delta_N^\ell)
    \label{jumps}
\end{equation} 
being the jump operators. Here, $j_\ell^{(k)}$ acts solely on site $k$ while $\Gamma_\ell(\Delta_N^\ell)=[\Gamma_\ell(\Delta_N^\ell)]^\dagger$ is an operator-valued function computed for the operator $\Delta_N^\ell=[\Delta_N^\ell]^\dagger$. We assume the latter operator to be a linear combination with real coefficients of average operators of the type defined in Eq.~\eref{eq:average-operators}, i.e., 
\begin{equation}
\Delta_N^\ell=\sum_{\alpha=1}^{d^2}r_{\ell\alpha} \left[\frac{1}{N}\sum_{k=1}^{N}v_\alpha^{(k)}\right]\,, \qquad \mbox{ with }r_{\ell \alpha}\in \mathbb{R}.
\label{delta}
\end{equation}
From their definition, we see that these  operators are bounded in norm, i.e., $\|\Delta_N^\ell\|\le \delta_\ell$, where 
\begin{equation}
 \delta_\ell =\sum_{\alpha=1}^{d^2}|r_{\ell\alpha}|<\infty\, .
 \label{delta_alpha}
\end{equation}

As discussed in Section \ref{motivation}, the structure of the jump operators $J^k_\ell$ suggests that the function $\Gamma_\ell(\Delta_N^\ell)$, when squared, gives rise to an operator-valued rate for the transition implemented by $j^{(k)}_\ell$ on the $k$th particle. Since the operator $\Delta_N^\ell$, which is the argument of the function, is an average operator, the rate has the structure of a mean-field rate which accounts for an average (collective) property of the system. We consider functions $\Gamma_\ell(\Delta_N^\ell)$ satisfying the following Assumption.

\begin{assumption}
\label{Gamma}
The operator-valued functions $\Gamma_\ell(\Delta_N^\ell)$ can be written as power series 
$$
\Gamma_\ell(\Delta_N^\ell)=\sum_{n=0}^\infty c_\ell^n (\Delta_N^\ell)^{n}\, , 
$$
with coefficient $c_\ell^n$ such that for any $z\in\mathbb{R}$
\begin{equation}
    \gamma(z)=\sum_{n=0}^\infty |c_\ell^n||z|^n <\infty\, .
\end{equation}
For later convenience, we note that the assumption on the series $\gamma(z)$ also implies that
$$
\gamma'(z):=\sum_{n=0}^\infty n|c_\ell^n||z|^{n-1} <\infty\, .
$$
\end{assumption}
The above assumption specifies that we are considering functions $\Gamma_\ell$ which admit a Taylor expansion, around zero, with infinite radius of convergence. This is a strong assumption, which we make here for the sake of simplicity. It is not strictly necessary to prove our main theorems. In Section \ref{sec5}, we show indeed that our approach can be applied also for certain operator-valued rates which do not obey Assumption \ref{Gamma}. Working within this assumption allows us to find results for a broad class of dynamical generators. 

Considering Assumption \ref{Gamma}, we can now readily prove the following result.

\begin{lemma}
\label{lemma_commutators}
For any given operator-valued function $\Gamma_\ell(\Delta_N^\ell)$ satisfying Assumption \ref{Gamma}, the following relations hold 
\begin{eqnarray*}
	&i) \left\| \left[\Gamma_\ell(\Delta_N^\ell), O\right] \right\| \le \frac{2N_O}{N}\|O\|\delta_\ell \gamma'(\delta_\ell) \, ,\\
	&ii) \left\| \left[\Gamma_\ell(\Delta_N^\ell), X_{N}\right] \right\| \le \frac{2}{N}\|x\| \delta_\ell  \gamma'(\delta_\ell)\, ,\\
	&iii) \left\| \left[\Gamma_\ell(\Delta_N^\ell),\left[\Gamma_\ell(\Delta_N^\ell), O\right]\right] \right\| \le \frac{4N_O^2}{N^2}\|O\|\delta_\ell^2 [\gamma'(\delta_\ell)]^2  \, , \\
	& iv)  \left\| \left[\Gamma_\ell(\Delta_N^\ell),\left[\Gamma_\ell(\Delta_N^\ell), X_N\right]\right] \right\| \le \frac{4}{N^2}\|x\|\delta_\ell^2 [\gamma'(\delta_\ell)]^2    \, ,
\end{eqnarray*}
with $O$ being any operator with strictly local support, $N_O$ the length of such support, and $X_N$ any average operator as defined in Eq.~\eref{eq:average-operators}.
\end{lemma}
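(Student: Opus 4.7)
The plan is to expand $\Gamma_\ell(\Delta_N^\ell) = \sum_{n=0}^\infty c_\ell^n (\Delta_N^\ell)^n$ and exploit the telescoping identity $[A^n, B] = \sum_{j=0}^{n-1} A^{n-1-j} [A,B] A^j$, which yields $\|[A^n,B]\| \le n \|A\|^{n-1} \|[A,B]\|$. Combined with the uniform bound $\|\Delta_N^\ell\| \le \delta_\ell$ from Eq.~\eref{delta_alpha}, all four estimates then reduce to bounding a single commutator of $\Delta_N^\ell$ with $O$ or $X_N$, followed by a termwise resummation of a series dominated by $\gamma(\delta_\ell)$ or $\gamma'(\delta_\ell)$, both of which are finite by Assumption~\ref{Gamma}.

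For inequality (i), the commutator $[v_\alpha^{(k)}, O]$ vanishes unless $k$ lies in the support of $O$, so summing only over those $N_O$ sites and using $\|v_\alpha\|\le 1$ yields
\begin{equation*}
\|[\Delta_N^\ell, O]\| \le \frac{1}{N}\sum_{\alpha=1}^{d^2}|r_{\ell\alpha}| \cdot N_O \cdot 2\|v_\alpha\|\,\|O\| \le \frac{2N_O\|O\|\delta_\ell}{N}.
\end{equation*}
The telescoping bound then gives $\|[(\Delta_N^\ell)^n, O]\| \le 2 n N_O \|O\|\delta_\ell^n/N$; multiplying by $|c_\ell^n|$ and summing over $n$ produces the factor $\sum_n n|c_\ell^n|\delta_\ell^n = \delta_\ell\,\gamma'(\delta_\ell)$ and hence (i). For (ii), the commutator $[v_\alpha^{(k)}, x^{(h)}]$ vanishes unless $k=h$, so the double sum in $[\Delta_N^\ell, X_N]$ collapses to a single sum over $N$ sites, yielding an additional factor $1/N$ and the bound $\|[\Delta_N^\ell, X_N]\| \le 2\|x\|\delta_\ell/N$; the same resummation closes the argument.

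For the double-commutator bounds (iii) and (iv), the key observation is that $\Gamma_\ell(\Delta_N^\ell)$ commutes with $\Delta_N^\ell$ and all of its powers. Consequently
\begin{equation*}
\bigl[\Gamma_\ell(\Delta_N^\ell),\,[(\Delta_N^\ell)^n, B]\bigr] = \sum_{j=0}^{n-1}(\Delta_N^\ell)^{n-1-j}\bigl[\Gamma_\ell(\Delta_N^\ell),[\Delta_N^\ell, B]\bigr](\Delta_N^\ell)^j,
\end{equation*}
whose norm is at most $n\delta_\ell^{n-1}\|[\Gamma_\ell(\Delta_N^\ell),[\Delta_N^\ell, B]]\|$. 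In case (iii), $B=O$ and $[\Delta_N^\ell, O]$ remains strictly local on the same $N_O$ sites with norm bounded by $2N_O\|O\|\delta_\ell/N$, so applying (i) to this new local operator gives $\|[\Gamma_\ell,[\Delta_N^\ell,O]]\| \le 4 N_O^2\|O\|\delta_\ell^2\gamma'(\delta_\ell)/N^2$, and resumming over $n$ produces the second factor $\gamma'(\delta_\ell)$. In case (iv), $[\Delta_N^\ell, X_N] = Y_N/N$ is itself $1/N$ times an average operator with single-particle seed $y=\sum_\alpha r_{\ell\alpha}[v_\alpha, x]$ of norm $\|y\|\le 2\delta_\ell\|x\|$, so applying (ii) to $Y_N$ and resumming yields the stated bound.

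The calculations are essentially routine; the point requiring the most care is the bookkeeping of supports and of powers of $1/N$ — specifically the fact that $[\Delta_N^\ell,\cdot]$ preserves the support of a strictly local operator but turns an average operator into $1/N$ times another average operator. This asymmetry is exactly what produces the $N_O^2$ factor in (iii) and its absence in (iv), while the second factor $\gamma'(\delta_\ell)$ appears in both (iii) and (iv) because each of the two commutators with $\Gamma_\ell$ contributes one such resummation.
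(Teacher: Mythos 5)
Your proof is correct and follows essentially the same route as the paper's: power-series expansion of $\Gamma_\ell$, the telescoping commutator identity, the observation that $[\Delta_N^\ell,O]$ is strictly local with the same support and norm $O(N_O/N)$, and termwise resummation into $\gamma'(\delta_\ell)$. The only cosmetic differences are that you collapse the double sum in $[\Delta_N^\ell,X_N]$ directly instead of applying (i) sitewise with $N_O=1$, and in (iii)--(iv) you pull the outer $\Gamma_\ell$ through powers of $\Delta_N^\ell$ and invoke (i)/(ii) on the inner commutator rather than writing out the full double power-series expansion; the estimates and constants are identical.
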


The proof of the above Lemma is shown in \ref{app_lemma_comm} and simply requires the evaluation of the commutators between the operator-valued rates and local operators, by exploiting the power series expansion of the former. 

We conclude this Section stating a first result concerning the action of the considered dynamical generators on strictly local observables and on average operators. In particular, we show that, in the thermodynamic limit $N\to\infty$, the maps $\mathcal{D}_\ell$ act on these operators as if they were local maps weighted by a pre-factor equal to $\Gamma_{\ell}^2(\Delta_N^\ell)$. This  corroborates the intuition that the considered jump operators implement local transitions associated with operator-valued rates. Formally, this is expressed by the following Lemma:

\begin{lemma}
\label{Cor_diss_dyn} 
The maps $\mathcal{D}_\ell$ defined by Eqs.~\eref{dissipator}-\eref{jumps} with functions $\Gamma_\ell(\Delta_N^\ell)$ obeying Assumption \ref{Gamma} are such that 
\begin{eqnarray*}
	& \left\|\mathcal{D}_\ell[O]-\Gamma_{\ell}^2(\Delta_N^\ell) \mathcal{D}_\ell^{\rm Loc}[O]\right\|\le \frac{C_O}{N}\, ,\\
	& \left\|\mathcal{D}_\ell[X_{N}]-\Gamma_{\ell}^2(\Delta_N^\ell) \mathcal{D}_\ell^{\rm Loc}[X_N]\right\|\le \frac{C_{x}}{N}\, , \qquad 
\end{eqnarray*}
with 
\begin{equation}
\mathcal{D}^{\rm Loc}_\ell[A]=\frac{1}{2}\sum_{k=1}^N \left(\left[{j}_{\ell}^{ \dagger\, (k)}, A\right] {j}_{\ell}^{(k)} + {j}_{\ell}^{ \dagger\, (k)}[A, {j}_{\ell}^{(k)}]  \right)\, ,
\label{D_loc}
\end{equation}
and 
$C_O$, $C_{x}$ appropriate $N$-independent constants. In the above expression, $O$ is any local operator with support on a finite number of sites, $N_O$ is the extension of its support, and $X_N$ the average operator constructed from the  single-particle operator $x$, as shown in Eq.~\eref{eq:average-operators}.
\end{lemma}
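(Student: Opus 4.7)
The plan is to first derive a clean operator identity that extracts the $\Gamma_\ell^2\,\mathcal{D}_\ell^{\rm Loc}[O]$ contribution, and then to control the remainder using the commutator bounds of Lemma~\ref{lemma_commutators}. Starting from the Heisenberg form $2\mathcal{D}_\ell[O]=\sum_{k}\left([J_\ell^{k\,\dagger},O]J_\ell^k+J_\ell^{k\,\dagger}[O,J_\ell^k]\right)$, substituting $J_\ell^k=j_\ell^{(k)}\Gamma_\ell$ and $J_\ell^{k\,\dagger}=\Gamma_\ell\,j_\ell^{(k)\,\dagger}$ (using that $\Gamma_\ell$ is Hermitian), and expanding each commutator via the Leibniz rule, a direct calculation yields the clean identity
\[
\mathcal{D}_\ell[O]=\Gamma_\ell\,\mathcal{D}_\ell^{\rm Loc}[O]\,\Gamma_\ell+\tfrac12\bigl([\Gamma_\ell,O]\,\mathcal{N}_\ell\,\Gamma_\ell+\Gamma_\ell\,\mathcal{N}_\ell\,[O,\Gamma_\ell]\bigr),
\]
where $\mathcal{N}_\ell:=\sum_{k=1}^N j_\ell^{(k)\,\dagger}j_\ell^{(k)}$. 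Subtracting $\Gamma_\ell^2\,\mathcal{D}_\ell^{\rm Loc}[O]$ therefore reduces the lemma to controlling (A)~$\Gamma_\ell[\mathcal{D}_\ell^{\rm Loc}[O],\Gamma_\ell]$ and (B)~the bulk residual $\mathcal{R}:=\tfrac12\bigl([\Gamma_\ell,O]\mathcal{N}_\ell\Gamma_\ell+\Gamma_\ell\mathcal{N}_\ell[O,\Gamma_\ell]\bigr)$.

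Piece~(A) is immediate: $\mathcal{D}_\ell^{\rm Loc}[O]$ is strictly local with support contained in $\mathrm{supp}(O)$ (of size at most $N_O$) and with operator norm at most $2N_O\|j_\ell\|^2\|O\|$, so Lemma~\ref{lemma_commutators}(i) gives $\|\Gamma_\ell[\mathcal{D}_\ell^{\rm Loc}[O],\Gamma_\ell]\|=O(N_O^2\|O\|/N)$. Piece~(B) is the main obstacle: each of its two summands naively scales as $\|\mathcal{N}_\ell\|\cdot\|[\Gamma_\ell,O]\|=O(N)\cdot O(N_O/N)=O(N_O)$, so the claimed $1/N$ behaviour must come from an internal cancellation between them. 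I would expose it by writing $2\mathcal{R}=[\Gamma_\ell,O]\mathcal{N}_\ell\Gamma_\ell-\Gamma_\ell\mathcal{N}_\ell[\Gamma_\ell,O]$ and, commuting $\Gamma_\ell$ past $\mathcal{N}_\ell$ (picking up $[\Gamma_\ell,\mathcal{N}_\ell]$), then past $[\Gamma_\ell,O]$ (picking up $[\Gamma_\ell,[\Gamma_\ell,O]]$), and finally $\mathcal{N}_\ell$ past $[\Gamma_\ell,O]$ (picking up $[\mathcal{N}_\ell,[\Gamma_\ell,O]]$), recasting it as
\[
2\mathcal{R}=-[\mathcal{N}_\ell,[\Gamma_\ell,O]]\,\Gamma_\ell\;-\;\mathcal{N}_\ell\,[\Gamma_\ell,[\Gamma_\ell,O]]\;+\;[\mathcal{N}_\ell,\Gamma_\ell]\,[\Gamma_\ell,O].
\]

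The middle summand is bounded by $\|\mathcal{N}_\ell\|\cdot O(N_O^2/N^2)=O(N_O^2/N)$ via Lemma~\ref{lemma_commutators}(iii) together with $\|\mathcal{N}_\ell\|\le N\|j_\ell\|^2$; the third by $\|[\mathcal{N}_\ell,\Gamma_\ell]\|\cdot O(N_O/N)=O(N_O/N)$, since Lemma~\ref{lemma_commutators}(ii) applied to the average operator $\mathcal{N}_\ell/N$ gives $\|[\mathcal{N}_\ell,\Gamma_\ell]\|=O(1)$. For the first summand I would invoke the Jacobi identity $[\mathcal{N}_\ell,[\Gamma_\ell,O]]=[\Gamma_\ell,[\mathcal{N}_\ell,O]]+[[\mathcal{N}_\ell,\Gamma_\ell],O]$: the first piece is $O(N_O^2/N)$ by Lemma~\ref{lemma_commutators}(i) since $[\mathcal{N}_\ell,O]$ is strictly local with support in $\mathrm{supp}(O)$ and norm $O(N_O\|O\|)$, while the second piece requires the auxiliary observation that, upon expanding $\Gamma_\ell$ as a power series in $\Delta_N^\ell$, $[\mathcal{N}_\ell,\Gamma_\ell]$ is itself a finite polynomial in average operators (its generating commutator $[\mathcal{N}_\ell,\Delta_N^\ell]$ is the average operator $\tfrac1N\sum_k[j_\ell^{(k)\,\dagger}j_\ell^{(k)},d_\ell^{(k)}]$, with $d_\ell^{(k)}$ the single-site operator underlying $\Delta_N^\ell$), so a term-by-term Leibniz estimate analogous to the proof of Lemma~\ref{lemma_commutators}(i) bounds $\|[[\mathcal{N}_\ell,\Gamma_\ell],O]\|$ by $O(N_O/N)$. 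The average-operator case $O=X_N$ follows the same identity and decomposition: $\mathcal{D}_\ell^{\rm Loc}[X_N]$ is itself an average operator (the sample mean of $\tfrac12([j_\ell^\dagger,x]j_\ell+j_\ell^\dagger[x,j_\ell])$), so one simply invokes items~(ii) and~(iv) of Lemma~\ref{lemma_commutators} in place of~(i) and~(iii), with the factor $N_O$ effectively replaced by unity, which produces the analogous bound $\|\mathcal{D}_\ell[X_N]-\Gamma_\ell^2\mathcal{D}_\ell^{\rm Loc}[X_N]\|\le C_x/N$.
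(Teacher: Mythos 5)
Your proposal is correct and follows essentially the same route as the paper's proof: the identity $\mathcal{D}_\ell[O]=\Gamma_\ell\mathcal{D}_\ell^{\rm Loc}[O]\Gamma_\ell+\mathcal{R}$ is exactly the paper's split into $D_1$ and $D_2$, your piece (A) is the paper's $D_{12}$, and your commutator rearrangement of $\mathcal{R}$ reproduces the paper's $D_{21}$, $D_{221}$, $D_{222}$ terms, with the same reliance on Lemma~\ref{lemma_commutators} and on the observation that $\mathcal{N}_\ell/N$ is itself an average operator. The only differences are cosmetic: your Jacobi-identity treatment of $[\mathcal{N}_\ell,[\Gamma_\ell,O]]$ is a tidier bookkeeping of the paper's direct term-by-term expansion of $D_{222}$ (both ultimately produce the $\gamma''$-type constants), and $[\mathcal{N}_\ell,\Gamma_\ell]$ is a convergent power series of monomials in average operators rather than a ``finite polynomial''.
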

The proof of the Lemma, which is reported in \ref{app_corol_eqD}, together with the expression of the constant $C_O, C_{x}$, requires the direct evaluation of the action of the dissipator $\mathcal{D}_\ell $ on the operators $O$, $X_N$.

\section{Main Results}
\label{sec4}
In this section, we present the main results of our paper. In the first subsection, we derive the Heisenberg equations of motion for average operators of single-particle observables and present the mean-field equations. The latter, as we explain, are obtained by factorizing expectation values of average operators. In the second subsection, we show that these mean-field equations are in fact exactly reproducing the time evolution of the considered average operators, in the thermodynamic limit. In the last section, we exploit the result obtained for the average operators to derive an effective dynamical map for any quasi-local operator of the many-body system. 

\subsection{Heisenberg equations and mean-field dynamics}
\label{Heis-MF}
As already anticipated in the previous section, in the study of many-body systems one is often interested in understanding the dynamical or the stationary behavior of collective operators. Here, in particular, we are interested in the average operators of Eq.~\eref{eq:average-operators}. Given that we defined a basis for the single-particle algebra we can construct a basis for all possible average operators. We define the following set of average operators
\begin{equation}
    m_\alpha^N=\frac{1}{N}\sum_{k=1}^N v_\alpha^{(k)}\, , \qquad \alpha=1,2,\dots d^2\, ,
\end{equation}
and note that any $X_N$ can be obtained as a linear combination of the above operators. The goal is thus to describe the time evolution of $m_\alpha^N$, $e^{t\mathcal{L}_N}[m_\alpha^N]$, in the thermodynamic limit, starting from a translation invariant clustering state as in Definition \ref{clustering}. The first step in this direction is to compute the Heisenberg equation of motion, namely 
\begin{equation}
    \frac{d}{dt} e^{t\mathcal{L}_N}[m_\alpha^N]=\mathcal{L}_N\left[e^{t\mathcal{L}_N}[m_\alpha^N]\right]=e^{t\mathcal{L}_N}\left[\mathcal{L}_N [m_\alpha^N]\right]\, .
    \label{eq:Heisenberg}
\end{equation}
To this end, one needs to control the action of the Lindblad operator on average operators, as shown by the second equality in the above equations. This is done with the following Lemma. 

\begin{lemma}
\label{lemma_gen_action}
Given the generator $\lin_N$ specified by Eqs.~\eref{Lindblad}-\eref{jumps}, with functions $\Gamma_\ell(\Delta_N^\ell)$ obeying Assumption \ref{Gamma}, we have that 
$$
\| \mathcal{L}_N[m^{N}_{\alpha}] -  f_\alpha(\vec{m}^N)\| \leq \frac{C_{L}}{N}
$$
where 
\begin{eqnarray*}
& f_\alpha(\vec{m}^N) =  i \sum_{\beta = 1}^{d^2} A_{\alpha \beta} m^{N}_{\beta} + i \sum_{\beta,\gamma=1}^{d^2} B_{\alpha \beta \gamma} m^{N}_{\beta} m^{N}_{\gamma} + \sum_{\ell, \beta} M_{\ell \alpha }^{\beta} \Gamma^{2}_{\ell}(\Delta^{\ell}_N) m^{N}_{\beta} \\
& A_{\alpha \beta} = \sum_{\beta'=1}^{d^2} \epsilon_{\beta'} a_{\beta' \alpha}^{\beta} \quad B_{\alpha \beta \gamma} = \sum_{\beta'=1}^{d^2} a_{\beta' \alpha}^{\gamma}(h_{\beta \beta'} + h_{\beta' \beta})\, .
\end{eqnarray*} 
Here, $M$ is a real matrix, such that the action of $\mathcal{D}^{\mathrm{Loc}}_{\ell}[\cdot]$ on an element of the single-site operator basis $v_\alpha^{(k)}$ reads
$$
\mathcal{D}_\ell^{\rm Loc}[v_\alpha^{(k)}]=\sum_{\beta=1}^{d^2}M_{\ell \alpha}^{\beta} v_\beta^{(k)}\, , 
$$
and $C_{L}$ is an $N$-independent bounded quantity.
\end{lemma}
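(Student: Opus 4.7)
The plan is to decompose $\mathcal{L}_N[m_\alpha^N] = i[H,m_\alpha^N] + \sum_\ell \mathcal{D}_\ell[m_\alpha^N]$ and match each contribution to the corresponding term of $f_\alpha(\vec{m}^N)$, absorbing all mismatches into a $C_L/N$ remainder via the triangle inequality.

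First, I would handle the Hamiltonian commutator. For the single-particle piece $\sum_{k,\beta'}\epsilon_{\beta'}v_{\beta'}^{(k)}$, only the $k'=k$ term survives in $[v_{\beta'}^{(k')},v_\alpha^{(k)}]$, which expands via the structure coefficients $a_{\beta'\alpha}^{\gamma}$. After resumming $\frac{1}{N}\sum_k v_\gamma^{(k)}=m_\gamma^N$ and using the definition $A_{\alpha\beta}=\sum_{\beta'}\epsilon_{\beta'}a_{\beta'\alpha}^{\beta}$, one obtains exactly the linear term $i\sum_\beta A_{\alpha\beta}m_\beta^N$ with no error. For the two-body piece $\frac{1}{N}\sum_{k,j,\beta',\beta}h_{\beta'\beta}v_{\beta'}^{(k)}v_\beta^{(j)}$ I would apply the Leibniz rule $[AB,C]=A[B,C]+[A,C]B$, noting again that only on-site commutators survive and each produces a structure-coefficient expansion. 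Closing back onto products of averages uses the identity $\frac{1}{N^2}\sum_{k,j}v_{\beta'}^{(k)}v_\gamma^{(j)}=m_{\beta'}^N m_\gamma^N$, after which I would symmetrize the resulting tensor to recognize $B_{\alpha\beta\gamma}=\sum_{\beta'}a_{\beta'\alpha}^{\gamma}(h_{\beta\beta'}+h_{\beta'\beta})$. Two sources of error appear here: (i) the diagonal $k=j$ contributions in the double sum, whose norm is bounded by $\mathrm{const}/N$ because they carry an extra factor $1/N$, and (ii) the reordering of $m_\gamma^N m_\beta^N$ into $m_\beta^N m_\gamma^N$, which is admissible because $\|[m_\gamma^N,m_\beta^N]\|\le 2\|v_\gamma\|\|v_\beta\|/N$.

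For the dissipative part I would apply Lemma~\ref{Cor_diss_dyn} to the average operator $m_\alpha^N$, obtaining $\|\mathcal{D}_\ell[m_\alpha^N]-\Gamma_\ell^2(\Delta_N^\ell)\mathcal{D}_\ell^{\mathrm{Loc}}[m_\alpha^N]\|\le C_{v_\alpha}/N$. Then $\mathcal{D}_\ell^{\mathrm{Loc}}$ acts only on a single site when applied to $v_\alpha^{(k)}$ (the $k'\neq k$ terms vanish since $j_\ell^{(k')}$ commutes with $v_\alpha^{(k)}$), so by the definition of the matrix $M$ one has $\mathcal{D}_\ell^{\mathrm{Loc}}[m_\alpha^N]=\frac{1}{N}\sum_k\sum_\beta M_{\ell\alpha}^{\beta}v_\beta^{(k)}=\sum_\beta M_{\ell\alpha}^{\beta}m_\beta^N$. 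Multiplying by $\Gamma_\ell^2(\Delta_N^\ell)$ and summing over $\ell$ delivers exactly the third term in $f_\alpha$.

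The final step is just bookkeeping: collect the errors from (i) the diagonal $k=j$ terms of the Hamiltonian, (ii) the reorderings of average operators, and (iii) the $q$ applications of Lemma~\ref{Cor_diss_dyn}, each of order $1/N$ with an $N$-independent prefactor depending on $\|v_\alpha\|$, $\|v_\beta\|$, $|\epsilon_\beta|$, $|h_{\beta\beta'}|$, $\delta_\ell$, and $\gamma(\delta_\ell),\gamma'(\delta_\ell)$. Summing these over the finite index ranges $\alpha,\beta,\beta',\gamma\in\{1,\dots,d^2\}$ and $\ell\in\{1,\dots,q\}$ gives a single finite constant $C_L$. The main conceptual obstacle is simply recognizing that the ostensibly cubic mismatch between $[H,m_\alpha^N]$ (a quadratic in averages) and products of already-noncommuting $m_\beta^N$'s collapses to $O(1/N)$ thanks to the $1/N$ prefactor in the all-to-all Hamiltonian; the handling of the dissipator is essentially immediate once Lemma~\ref{Cor_diss_dyn} is invoked.
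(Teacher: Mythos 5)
Your proposal is correct and follows essentially the same route as the paper's proof: exact closure of the single-particle Hamiltonian term, closure of the all-to-all term onto $m_\beta^N m_\gamma^N$ with the only genuine $O(1/N)$ error coming from the reordering commutator $[m_\gamma^N,m_{\beta'}^N]$ (the paper's $L_{22}$), and an application of Lemma~\ref{Cor_diss_dyn} for the dissipator. The only cosmetic difference is that you flag the diagonal $k=j$ terms as an extra error source, whereas keeping the unconstrained double sums consistently on both sides makes them cancel exactly; this is harmless bookkeeping and does not affect the result.
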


To prove this Lemma and deriving the constant $C_L$, as shown in  \ref{app_proof_L3}, one needs to evaluate the generator $\linn$ on $m_{\alpha}^N$ and  exploit Lemma \ref{Cor_diss_dyn}. 

It is worth stressing that Lemma \ref{lemma_gen_action} shows that, in the thermodynamic limit, the action of the Lindblad generator on the average operators $m_{\alpha}^N$ can be written as a nonlinear function $f_\alpha$ of the average operators themselves, for any $\alpha$. However, one cannot yet solve the emergent system of equations. The point is indeed that open quantum dynamics are not represented by an automorphism and thus, in principle, $e^{t\mathcal{L}_N}[m_\alpha^N m_\beta^N]\neq e^{t\mathcal{L}_N}[m_\alpha^N]e^{t\mathcal{L}_N}[m_\beta^N]$. As such, due to the nonlinear function $f_\alpha$ of the average operators, the Heisenberg equations of Eq.~\eref{eq:Heisenberg} are not closed on the operators $e^{t\mathcal{L}_N}[m_\alpha^N]$. To proceed one needs to compute the action of the generator on the function $f_\alpha$. This gives rise to an infinite, in the thermodynamic limit, hierarchy of equations which can rarely be solved. 

The evolution equations for the expectation values of the average operators, which are ultimately what one is interested in, can be derived from Eq.~\eref{eq:Heisenberg} by taking the expectation value on both sides of the equations. By defining $\omega_t\left(A\right):=\omega\left(e^{t\mathcal{L}_N}[A]\right)$ and recalling Lemma \ref{lemma_gen_action}, we have 
\begin{equation}
    \frac{d}{dt} \omega_t\left(m_\alpha^N\right)\approx \omega_t\left(f_\alpha(\vec{m}^N)\right)\, .
    \label{eq:expect}
\end{equation}
Clearly, by taking the expectation one cannot make much progress in solving the system. However, at this level, it is straightforward to introduce the so-called mean-field equations of motion. These are obtained by assuming that one can factorize the expectation of products of average operators into the product of the expectations, e.g.,  assuming that 
\begin{equation}
    \omega_t(m^N_\alpha m^N_\beta)\approx \omega_t(m^N_\alpha)\omega_t(m^N_\beta)\, .
    \label{approx}
\end{equation}
Applied to Eq.~\eref{eq:expect}, this leads to the mean-field equations of motion for the considered Lindblad generator given by 
\begin{equation}
    \frac{d}{dt} m_\alpha= f_\alpha(\vec{m})\, .
    \label{eq:mean-field}
\end{equation}
Here $m_\alpha$ is a function of time which should capture the dynamics of $\omega_t\left(m_\alpha^N\right)$ in the thermodynamic limit. The above system of nonlinear ordinary differential equations can in principle be solved analytically or in any case simulated efficiently, taking as initial conditions the ones associated with the initial state $\omega$
\begin{equation}
    m_\alpha(0)=\lim_{N\to\infty}\omega (m_\alpha^N)\, .
    \label{init-cond}
\end{equation}
However, it still remains to be shown that these equations provide the exact dynamics of average operators. Intuitively, for this to be the case, one would expect that the time-evolved state $\omega_t$ is, in the thermodynamic limit, a clustering state (in the sense of Definition \ref{clustering}), so that average operators converge to multiples of the identity justifying the approximation in Eq.~\eref{approx}.

In the next section we prove that this is indeed the case. Before going to that, we state here a technical result on the system of equations in Eq.~\eref{eq:mean-field}, which is proved in \ref{app_proof_L4}.

\begin{lemma}\label{lemma_bound_mf}
The system of equations \eref{eq:mean-field} with initial conditions $m_\alpha(0)$, defined by a quantum state $\omega$ as in Eq.~\eref{init-cond}, has a unique solution for $t\in[0,\infty)$. Morever, one has 
$$
|m_\alpha(t)|\le \|v_\alpha\|\le 1\, , \qquad \forall t\in[0,\infty)\, .
$$
\end{lemma}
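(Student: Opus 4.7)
The plan is to split the proof into a local existence step by Picard--Lindelöf and a global a priori bound that simultaneously rules out blow-up and delivers the stated norm estimate. Under Assumption \ref{Gamma}, the scalar argument $\Delta^\ell(\vec m)=\sum_\alpha r_{\ell\alpha}m_\alpha$ makes each $\Gamma_\ell^2$ a convergent power series on all of $\mathbb{R}$, so $f_\alpha$ is real-analytic on $\mathbb{R}^{d^2}$ and Lipschitz on every bounded ball. Picard--Lindelöf then supplies a unique solution on a maximal interval $[0,T_{\max})$, and the entire issue reduces to establishing the a priori bound $|m_\alpha(t)|\le\|v_\alpha\|$ throughout this interval.

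The cleanest route to that bound is to lift the ODE to a self-consistent single-site Lindblad equation. Since $\{v_\alpha\}_{\alpha=1}^{d^2}$ is an orthonormal hermitian basis of $M_d(\mathbb{C})$, I would set $\rho(t)=\sum_\alpha m_\alpha(t)\,v_\alpha$, whence $m_\alpha(t)=\mathrm{tr}(\rho(t)v_\alpha)$, and note that $\rho(0)=\sum_\alpha \omega(v_\alpha)v_\alpha$ is the single-particle reduced density matrix of the translation-invariant clustering state $\omega$, and hence genuinely positive with unit trace. Exploiting the cyclic symmetry $a_{\alpha\beta}^\gamma=a_{\beta\gamma}^\alpha=a_{\gamma\alpha}^\beta$ of the structure constants, which follows at once from $\mathrm{tr}([v_\alpha,v_\beta]v_\gamma)=\mathrm{tr}(v_\alpha[v_\beta,v_\gamma])$, one can verify that the mean-field system \eref{eq:mean-field} is the coordinate expression of the nonlinear single-site equation
$$
\dot\rho = -i[H_{\mathrm{MF}}(\rho),\rho] + \sum_{\ell=1}^q \Gamma_\ell^2\bigl(\mathrm{tr}(\rho\,\Delta^\ell_{\mathrm{op}})\bigr)\,\Bigl(j_\ell\,\rho\, j_\ell^\dagger - \tfrac{1}{2}\{j_\ell^\dagger j_\ell,\rho\}\Bigr)\, ,
$$
with $\Delta^\ell_{\mathrm{op}}=\sum_\alpha r_{\ell\alpha}v_\alpha$ and $H_{\mathrm{MF}}(\rho)$ the hermitian mean-field Hamiltonian built affinely from $\rho$ through the coefficients $\epsilon_\alpha$, $h_{\alpha\beta}$ of \eref{e0_totalHamiltonian}. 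Trace preservation, $\mathrm{tr}(\rho(t))=1$, is immediate from the Lindblad form of the dissipator; once positivity is secured, one concludes via $|m_\alpha(t)|=|\mathrm{tr}(\rho(t)v_\alpha)|\le\|v_\alpha\|\,\mathrm{tr}(\rho(t))=\|v_\alpha\|$, and $\|v_\alpha\|\le 1$ follows from $\mathrm{tr}(v_\alpha^2)=1$, which forces every eigenvalue of the hermitian $v_\alpha$ to lie in $[-1,1]$.

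The hard part is precisely showing $\rho(t)\ge 0$, because the lifted equation is genuinely nonlinear in $\rho$ through the self-consistent prefactors and therefore falls outside the standard linear theory of quantum dynamical semigroups. My plan is a frozen-coefficient Trotter approximation: slice $[0,t]$ into subintervals of width $\epsilon$, freeze both $\Gamma_\ell^2(\mathrm{tr}(\rho\,\Delta^\ell_{\mathrm{op}}))$ and $H_{\mathrm{MF}}(\rho)$ at their current values on each slice, apply the resulting linear completely positive Lindblad step, and pass to the limit $\epsilon\to 0$. Uniform convergence on any compact time interval follows from the Lipschitz bounds on $\Gamma_\ell^2$ supplied by Assumption \ref{Gamma} together with the uniform bound $\mathrm{tr}(\rho)=1$ maintained along the iteration, and since each approximation step preserves positivity, so does the limit. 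Combining this with the bound derived above gives $T_{\max}=\infty$ by standard continuation, and the lemma follows.
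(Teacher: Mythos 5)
Your proposal is correct, and it reaches the bound by a genuinely different route than the paper. The paper's proof also starts from Picard--Lindel\"of on a maximal interval $[0,T)$, but then it stays in the Heisenberg picture: it introduces the auxiliary \emph{linear} time-dependent Lindblad generator $\tilde{\mathcal{L}}_t$ (the same one used later in Theorem \ref{theorem2}), whose coefficients are built from the already-existing solution $m_\alpha(t)$ treated as known functions of time, observes that both $m_\mu(t)$ and $\omega(v_\mu(t))$ satisfy the same linear nonautonomous system $\dot y_\mu=\sum_\nu G_{\mu\nu}(t)y_\nu$ with the same initial data, and concludes $m_\mu(t)=\omega(v_\mu(t))$ by uniqueness for linear ODEs; the bound then follows from contractivity of the unital completely positive evolution, $\|v_\mu(t)\|\le\|v_\mu\|\le 1$. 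The advantage of that linearization trick is that it completely sidesteps the one genuinely delicate point in your argument, namely positivity preservation for the \emph{nonlinear} self-consistent density-matrix equation, which you handle by the frozen-coefficient Trotter scheme. Your Schr\"odinger-picture lifting is conceptually appealing (it exhibits the mean-field flow as a flow on density matrices, and the bound $|\mathrm{tr}(\rho v_\alpha)|\le\|v_\alpha\|\,\mathrm{tr}(\rho)$ is then immediate), but it buys this at the cost of the Trotter convergence step, which you leave schematic: to make it airtight you should note that the approximants remain in the compact convex set of density matrices, that $\Gamma_\ell^2(\mathrm{tr}(\sigma\,\Delta^\ell_{\mathrm{op}}))\ge 0$ because the coefficients $c_\ell^n$ are real (this is what makes each frozen step a bona fide CPTP map), and then run a discrete Gronwall estimate using the joint Lipschitz continuity of $(\sigma,\rho)\mapsto\mathcal{L}^*_\sigma[\rho]$ on that set. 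All of these ingredients are available under Assumption \ref{Gamma}, so the argument closes; it is simply longer than the paper's, which converts the same idea into a statement about a linear equation where uniqueness does all the work.
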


\subsection{Exactness of mean-field equations for  average operators}
\label{exactness}
In order to show the exactness of the mean-field equations of motion, one has to show that 
\begin{equation}\label{mf_limit}
\lim_{N \rightarrow \infty }\omega_t(m_{\alpha}^{N}) - m_{\alpha}(t) = 0, \qquad \forall t.
\end{equation} 
As briefly mentioned at the end of the previous section, in order to prove the validity of the above limit, we want to show that the state $\omega_t$ is clustering, as in Definition \ref{clustering}, when considering all relevant average operators. We do this by exploiting the approach discussed in Ref.~\cite{CarolloL:PRL:21}. We define the quantity 
\begin{equation}
\E (t) = \sum_{\alpha = 1}^{d^2} \omega_t ([m_{\alpha}^{N} -m_{\alpha}]^{2}) \,,
\end{equation}
which is a sum of positive contributions and is thus zero only when all terms vanish. Moreover, each term consists of the expectation value of the square of the distance of the operators from their mean-field counterpart. Each of the summands thus considers how close to a multiple of the identity (given by the mean-field operators) the average operators are. Namely, if $\lim_{N\to\infty}\E (t)=0$, then the state $\omega_t$ is clustering in the thermodynamic limit, and this can be used to show that the mean-field equations are exact. Indeed, via the Cauchy-Schwarz inequality, we have $|\omega_t(m_{\alpha}^{N}-m_{\alpha})| \le \sqrt{ \omega_t([m_{\alpha}^{N}-m_{\alpha}]^2)} \le \sqrt{\mathcal{E}_N(t)}$, which can thus be used to control the limit in Eq.~\eref{mf_limit}. Before going ahead we state here a Lemma, which will be useful for the proof of the exactness of the mean-field equations.

\begin{lemma}
\label{Lemma-aux}
The convergence of the square of the operator-valued function $\Gamma_\ell^2(\Delta_N^\ell)$ to the same function computed through the mean-field operators $\Gamma_{\ell}^2(\Delta_{\ell}(t))$, with 
\begin{equation}\label{delta_mean_field}
\Delta_\ell(t)=\sum_\alpha r_{\ell\alpha} m_\alpha(t)\, , 
\end{equation} 
is dominated by the convergence of the average operators $m_\alpha^N$ to the mean-field variables $m_\alpha(t)$ as
\begin{eqnarray*}
& \left|\omega\left(A^\dagger e^{t\mathcal{L}_N}\left[\left(\Gamma_\ell^2(\Delta_N^\ell)-\Gamma_\ell^2(\Delta_\ell(t))\right)X\right] B\right)\right| \\
& \le  2\gamma(\delta_{\ell})\gamma'(\delta_\ell) \| X \| \sum_{\alpha=1}^{d^2} |r_{\ell \alpha}| \sqrt{\omega(A^{\dagger} e^{t \linn}[(m_{\alpha}^N-m_{\alpha}(t))^2]A)}\sqrt{\omega(B^{\dagger}B)} \, .
\end{eqnarray*}
Here, $A,B,X$ can be either quasi-local operators or functions of average operators. 
\end{lemma}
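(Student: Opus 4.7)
The plan is to reduce the inequality to a Cauchy--Schwarz identity for the completely positive, unital (Heisenberg-picture) dynamical map $e^{t\linn}$, after factoring $\Gamma_\ell^2(\Delta_N^\ell)-\Gamma_\ell^2(\Delta_\ell(t))$ as $(\Delta_N^\ell-\Delta_\ell(t))$ times a uniformly bounded operator $\tilde Q$. The key observation enabling this one-sided factorization is that $\Delta_\ell(t)$ is a c-number, so $\Delta_N^\ell-\Delta_\ell(t)$ commutes with $\Delta_N^\ell$, even though the individual summands $(m_\alpha^N-m_\alpha(t))$ in $\Delta_N^\ell-\Delta_\ell(t)=\sum_\alpha r_{\ell\alpha}(m_\alpha^N-m_\alpha(t))$ do not commute with $\Delta_N^\ell$ at finite $N$.

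To construct $\tilde Q$, I would expand $\Gamma_\ell^2(z)=\sum_k d_\ell^k z^k$ via the Cauchy product (with $d_\ell^k=\sum_{n+m=k} c_\ell^n c_\ell^m$, absolutely convergent by Assumption~\ref{Gamma}) and apply the scalar-commuting telescoping identity $a^k-b^k=(a-b)\sum_{j=0}^{k-1}a^{k-1-j}b^j$ with $a=\Delta_N^\ell$, $b=\Delta_\ell(t)$, yielding
\begin{equation*}
\Gamma_\ell^2(\Delta_N^\ell)-\Gamma_\ell^2(\Delta_\ell(t))=\sum_\alpha r_{\ell\alpha}(m_\alpha^N-m_\alpha(t))\,\tilde Q,\qquad \tilde Q=\sum_k d_\ell^k\sum_{j=0}^{k-1}(\Delta_N^\ell)^{k-1-j}(\Delta_\ell(t))^j.
\end{equation*}
A uniform bound $\|\tilde Q\|\le 2\gamma(\delta_\ell)\gamma'(\delta_\ell)$ then follows from $\|\Delta_N^\ell\|\le\delta_\ell$ (Eq.~\eref{delta_alpha}) and $|\Delta_\ell(t)|\le\delta_\ell$ (a consequence of Lemma~\ref{lemma_bound_mf}), combined with the combinatorial identity $\sum_k k|d_\ell^k|z^{k-1}\le 2\gamma(z)\gamma'(z)$ obtained by differentiating the Cauchy product.

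For the remainder I would invoke the Kraus representation $e^{t\linn}[Y]=\sum_i K_i^\dagger Y K_i$ with $\sum_i K_i^\dagger K_i=\mathbf{1}$. Applying Cauchy--Schwarz for the state $\omega$ termwise with the split $(A^\dagger K_i^\dagger Y)(Z K_i B)$ and once more on the sum over $i$ produces the operator inequality
\begin{equation*}
|\omega(A^\dagger e^{t\linn}[YZ]B)|^2\le\omega(A^\dagger e^{t\linn}[YY^\dagger]A)\,\omega(B^\dagger e^{t\linn}[Z^\dagger Z]B).
\end{equation*}
Taking $Y=m_\alpha^N-m_\alpha(t)$ (Hermitian, so $YY^\dagger=Y^2$) and $Z=\tilde Q X$, bounding $\omega(B^\dagger e^{t\linn}[X^\dagger\tilde Q^\dagger\tilde Q X]B)\le\|\tilde Q\|^2\|X\|^2\omega(B^\dagger B)$ via $e^{t\linn}[C^\dagger C]\le\|C\|^2\mathbf{1}$ (positivity and unitality), and triangulating over $\alpha$ with weights $|r_{\ell\alpha}|$, delivers the claimed bound.

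The main subtlety is the factorization step: it is tempting to first split $\Delta_N^\ell-\Delta_\ell(t)$ into its components $(m_\alpha^N-m_\alpha(t))$ and then telescope each piece, but the individual pieces fail to commute with $\Delta_N^\ell$ at finite $N$; the noncommuting contributions cancel only when the sum is kept intact, which is what yields the clean one-sided factor $\tilde Q$. Once this is done, everything else is a routine application of Cauchy--Schwarz for CP unital maps via the Kraus decomposition.
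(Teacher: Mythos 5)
Your proof is correct and follows essentially the same route as the paper: factor out $(\Delta_N^\ell-\Delta_\ell(t))$ from $\Gamma_\ell^2(\Delta_N^\ell)-\Gamma_\ell^2(\Delta_\ell(t))$ using the fact that $\Delta_\ell(t)$ is a scalar, bound the residual operator in norm by $2\gamma(\delta_\ell)\gamma'(\delta_\ell)$, and then apply a generalized Cauchy--Schwarz inequality for the CP unital map $e^{t\mathcal{L}_N}$ termwise in $\alpha$. The only (cosmetic) differences are that the paper telescopes $\Gamma_\ell-\Gamma_\ell$ after writing the difference of squares as a product, whereas you telescope the Cauchy-product series of $\Gamma_\ell^2$ directly, and that the paper proves the Cauchy--Schwarz step via Stinespring dilation (its Lemma 6) rather than a Kraus decomposition; both yield the identical constant.
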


A proof of the above bound to the convergence of the square of the function $\Gamma_\ell(\Delta_N^\ell)$ can be found in  \ref{proof_Lemma-aux}, and exploits a generalized Cauchy-Schwarz inequality proved in Lemma \ref{Lemma-dilation} (see \ref{app_lemma6}). 

With this result, we are now ready to state the first main theorem of our paper, establishing the exactness of the mean-field equations for the time-evolution of average operators.

\begin{theorem}\label{theorem}
Given a generator as the one in Eqs.~\eref{Lindblad}-\eref{delta}, with functions $\Gamma_{\ell}(\Delta^{\ell}_N)$ satisfying Assumption \ref{Gamma}, we have that 
\begin{equation}\label{e_theorem}
if \quad \lim_{N \rightarrow \infty} \E(0) = 0, \quad then \quad  \lim_{N \rightarrow \infty} \E(t) = 0,  \forall  t< \infty .
\end{equation}
implying that the mean-field Eqs.~\eref{mf_limit} are exact. 
\end{theorem}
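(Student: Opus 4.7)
The plan is to derive a Gr\"onwall-type inequality of the form $\frac{d}{dt}\E(t) \le K\E(t) + D/N$, with $K,D$ independent of $N$ and bounded on compact time intervals, and then integrate. This yields $\E(t)\le e^{Kt}\E(0) + \frac{D}{KN}(e^{Kt}-1)$, both summands of which vanish as $N\to\infty$ by the hypothesis on $\E(0)$. The statement \eref{mf_limit} then follows from the Cauchy--Schwarz estimate $|\omega_t(m_\alpha^N)-m_\alpha(t)| \le \sqrt{\E(t)}$.

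To compute $\frac{d}{dt}\E(t)$, I would expand each $[m_\alpha^N-m_\alpha(t)]^2$, use $\frac{d}{dt}\omega_t(A)=\omega_t(\linn[A])$, and handle the $\omega_t(\linn[(m_\alpha^N)^2])$ piece via the Lindblad ``Leibniz defect''
$$
\linn[AB] - \linn[A]B - A\linn[B] = \sum_{\ell,k}\bigl[J_\ell^{k\,\dagger},A\bigr]\bigl[B,J_\ell^k\bigr],
$$
whose norm for $A=B=m_\alpha^N$ is $O(1/N)$ by Lemma \ref{lemma_commutators} (each commutator scales as $1/N$, and the sum over $k$ restores a single factor of $N$). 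Replacing $\linn[m_\alpha^N]$ by $f_\alpha(\vec m^N)$ up to $O(1/N)$ via Lemma \ref{lemma_gen_action}, and invoking the mean-field equation $\dot m_\alpha(t)=f_\alpha(\vec m(t))$, an algebraic rearrangement collapses the result to
$$
\frac{d}{dt}\E(t) = \sum_\alpha \omega_t\bigl(\{f_\alpha(\vec m^N)-f_\alpha(\vec m(t)),\, m_\alpha^N-m_\alpha(t)\}\bigr) + O(1/N).
$$

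The next step is to bound each anticommutator expectation by a multiple of $\E(t)$ by decomposing $f_\alpha(\vec m^N)-f_\alpha(\vec m(t))$ into its linear, quadratic, and operator-valued-rate contributions from Lemma \ref{lemma_gen_action}. The linear term gives a Cauchy--Schwarz contribution $\le C_1\E(t)$. The quadratic term is handled by telescoping $m_\beta^N m_\gamma^N - m_\beta(t)m_\gamma(t) = (m_\beta^N-m_\beta(t))m_\gamma^N + m_\beta(t)(m_\gamma^N-m_\gamma(t))$ and using $\|m_\gamma^N\|\le 1$ together with $|m_\beta(t)|\le 1$ (Lemma \ref{lemma_bound_mf}). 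The rate term splits as
$$
\Gamma_\ell^2(\Delta_N^\ell)m_\beta^N - \Gamma_\ell^2(\Delta_\ell(t))m_\beta(t) = \bigl[\Gamma_\ell^2(\Delta_N^\ell)-\Gamma_\ell^2(\Delta_\ell(t))\bigr]m_\beta^N + \Gamma_\ell^2(\Delta_\ell(t))\bigl[m_\beta^N-m_\beta(t)\bigr],
$$
with the second summand (a scalar times a fluctuation) treated as in the linear case and the first bounded by Lemma \ref{Lemma-aux}.

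The hard part is precisely this last piece: a naive norm bound on $\Gamma_\ell^2(\Delta_N^\ell)-\Gamma_\ell^2(\Delta_\ell(t))$ together with Cauchy--Schwarz would only produce an $O(\sqrt{\E(t)})$ estimate, which cannot close a Gr\"onwall inequality. Lemma \ref{Lemma-aux}, built on the generalized Cauchy--Schwarz of Lemma \ref{Lemma-dilation}, overcomes this obstruction by extracting an explicit linear dependence on the fluctuations $m_{\alpha'}^N-m_{\alpha'}(t)$ from the rate difference---a structure ultimately traceable to the scalar-telescoping identity $(\Delta_N^\ell)^n-\Delta_\ell(t)^n = (\Delta_N^\ell-\Delta_\ell(t))\sum_{k=0}^{n-1}(\Delta_N^\ell)^k\Delta_\ell(t)^{n-1-k}$, which is valid because $\Delta_\ell(t)$ is scalar. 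Applying the lemma with appropriate choices of $A,B,X$ and recombining the resulting square roots via $\sqrt{ab}\le(a+b)/2$ produces the required $\E(t)$-linear bound. The remaining technical check is that all constants are uniform in $t$ on compact intervals (which follows from $|m_\alpha(t)|\le 1$, $\|\Delta_N^\ell\|\le\delta_\ell$, and the convergence of $\gamma,\gamma'$ at $\delta_\ell$), so that the Gr\"onwall comparison closes cleanly.
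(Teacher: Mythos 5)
Your proposal is correct and follows essentially the same route as the paper's proof: a Gr\"onwall inequality obtained from the Lindblad Leibniz defect (bounded via Lemma \ref{lemma_commutators}), the substitution $\mathcal{L}_N[m_\alpha^N]\to f_\alpha(\vec m^N)$ from Lemma \ref{lemma_gen_action}, the same telescoping of the quadratic and rate terms, and Lemma \ref{Lemma-aux} to extract the linear-in-fluctuation structure of $\Gamma_\ell^2(\Delta_N^\ell)-\Gamma_\ell^2(\Delta_\ell(t))$ needed to close the bound at order $\E(t)$ rather than $\sqrt{\E(t)}$. Your identification of that last point as the crux of the argument matches the paper exactly.
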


\begin{proof} The proof of the theorem exploits Gronwall's Lemma. This Lemma states that, if two positive, bounded, $N$-independent constants $C_1$ and $C_2$, such that $\dot{\mathcal{E}}_{N}(t) \leq C_1 \E (t) + C_2/N$ exist, then 
$$
\E(t) \leq e^{C_1 t} \E(0) + C_2(e^{C_1 t}-1)/(C_1 N)
$$ 
which would be enough to prove the theorem using the assumption on the initial value of $\E(0)$ and taking the limit $N\to\infty$. The goal is thus to find such constants $C_1,C_2$.

Let us then inspect the time derivative of $\E(t)$, 
\begin{equation}
\dot{\mathcal{E}}_{N}(t) = \sum_{\alpha=1}^{d^2} \frac{d}{dt} \omega_t \left( [m_{\alpha}^{N} - m_{\alpha}(t)]^2 \right) . 
\end{equation} 
It is convenient to focus on each $\alpha$-th contribution of the latter expression separately, 
\begin{eqnarray}\label{derivative_Dt}
D_t^\alpha =& \frac{d}{dt} \omega_t \left( [m_{\alpha}^{N} - m_{\alpha}(t)]^2 \right)  \\
=& \omega_t(\linn[(m_{\alpha}^N-m_{\alpha}(t))^2])-2 \dot{m}_{\alpha}(t) \omega_t (m_{\alpha}^N-m_{\alpha}(t)),
\end{eqnarray}
where we have exploited that $\dot{\omega}_t(X) = \omega_t(\linn[X])$, and that $m_{\alpha}(t)$, $\dot{m}_\alpha(t)$ are scalar functions. We now focus on the term $\linn[(m_{\alpha}^N-m_{\alpha}(t))^2]$. To this end, it is worth noticing that, given any two operators, $A$, $B$, one has 
\begin{equation}
\linn[AB] = \linn[A]B + A\linn[B] +\sum_{\ell,k} [J_{\ell}^{k \, \dagger},A][B,J_{\ell}^{k}],
\end{equation}
where $J_{\ell}^{k}$ are the jump operators defined by Eq.~\eref{jumps}. Let us focus on the last term, $P \equiv \sum_{\ell,k} [J_{\ell}^{k \, \dagger},m_{\alpha}^N][m_{\alpha}^N,J_{\ell}^{k}]$, where we set $A= B = m_{\alpha}^N-m_{\alpha}(t)$. It is 
\begin{equation}
\eqalign{
P= \sum_{\ell=1}^{q}\sum_{k=1}^{N} & \left(j_{\ell}^{(k) \, \dagger}[m_{\alpha}^N,\Gamma_\ell(\Delta_N^{\ell})]+[m_{\alpha}^N, j_{\ell}^{(k) \, \dagger}]\Gamma_\ell(\Delta_N^{\ell})\right) \\
& \times \left([\Gamma_\ell(\Delta_N^{\ell}),m_\alpha^N]j_{\ell}^{(k)}+\Gamma_\ell(\Delta_N^{\ell})[j_{\ell}^{(k)},m_{\alpha}^N]\right) .
}
\end{equation}
By exploiting Lemma \ref{lemma_commutators}, we have that $\|P \| \leq C_{P} / N$, with $C_{P}$ being the  $N$-independent constant
\begin{equation}
    C_P = \sum_{\ell=1 }^{q}\| j_\ell\|^2 [2 \delta_\ell \gamma'(\delta_\ell) + d^2 a_{\mathrm{max}} \gamma(\delta_\ell)]^2 \, .
\end{equation}
Here, $a_{{\mathrm {max}}}={\mathrm {max}}_{\alpha,\beta,\gamma} |a_{\alpha\beta}^\gamma|$.  Inserting
\begin{equation}
\linn[( m_{\alpha}^N-m_{\alpha}(t) )^2] = \linn[m_{\alpha}^N][m_{\alpha}^N-m_{\alpha}(t)] + [m_{\alpha}^N-m_{\alpha}(t)]\linn[m_{\alpha}^N] +P \,,
\end{equation}
in the time derivative \eref{derivative_Dt}, this reads
\begin{equation}
\eqalign{
D_t^\alpha = & \omega_t([\linn[m_{\alpha}^N]-\dot{m}_{\alpha}(t)][m_{\alpha}^N-m_{\alpha}(t) ])\\
& +\omega_t([m_{\alpha}^N-m_{\alpha}(t) ][\linn[m_{\alpha}^N]-\dot{m}_{\alpha}(t)])  +\omega_t(P) \, .
}
\end{equation}
As the second term on the right-hand side of the above expression is the complex conjugate of the first one, we can focus on the latter,
\begin{equation}\label{partiI_derivative_time}
D_t^{\alpha,I} = \omega_t([\linn[m_{\alpha}^N]-\dot{m}_{\alpha}(t)][m_{\alpha}^N-m_{\alpha}(t) ]) \, .
\end{equation}

Making use of Lemma \ref{lemma_gen_action} and of the mean-field Eqs.~\eref{eq:mean-field} for the quantities $\linn[m_{\alpha}^N]$ and $\dot{m}_{\alpha}(t)$, respectively, we get 
\begin{equation}
\eqalign{
& \linn[m_{\alpha}^N]-\dot{m}_{\alpha}(t) =L+ f_\alpha(\vec{m}^{N})-f_\alpha(\vec{m}(t))   \\ 
& = L+i \sum_{\gamma=1}^{d^2} A_{\alpha \beta} (m_{\beta}^{ N} - m_{\beta}(t)) + i\sum_{\beta, \gamma=1}^{d^2}   B_{\alpha \beta \gamma} (m_{\beta}^{N}  m_{\gamma}^{N}- m_{\beta}(t) m_{\gamma}(t) ) \\
& + \sum_{\ell=1}^{q}  \sum_{\beta=1}^{d^2}\ M_{\ell \alpha}^{\beta} [\Gamma^{2}_{\ell}(\Delta_{N}^{\ell}) m_{ \beta}^{N}- \Gamma^{2}_{\ell}(\Delta_{\ell}(t)) m_{ \beta}(t)] .
\label{aux_Theo_1}
}
\end{equation}
In the above equation $L$ is the operator difference $L=\mathcal{L}_N[m_\alpha^N]-f_\alpha(\vec{m}^N)$, which, as a consequence of Lemma \ref{lemma_gen_action}, obeys $\|L\|\le C_{L}/N$, with $C_{L}$ the  $N$-independent bounded constant obtained in the proof of Lemma \ref{lemma_gen_action}. 

In the second line of the above equation, the last term can be reshaped as follows
\begin{equation}
\eqalign{
 m_{\beta}^{N } m_{\gamma}^{N}- m_{\beta}(t) m_{\gamma}(t)  =  ( m_{\beta}^{N}-m_{\beta}(t))  m_{\gamma}^{N} +  m_{\beta}(t) [m_{\gamma}^{N}(t) - m_{\gamma}(t)],
}
\end{equation}
and, similarly, the last line of Eq.~\eref{aux_Theo_1} can be re-written, adding and subtracting the term $\Gamma^{2}_{\ell}(\Delta_{\ell}(t)) m_{ \beta}^{N}$, as follows 
\begin{equation}
\eqalign{
& \Gamma^{2}_{\ell}(\Delta_N^{\ell}) m_{ \beta}^{N}- \Gamma^{2}_{\ell}(\Delta_{\ell}(t)) m_{ \beta}(t)  = \\  
& \Gamma^{2}_{\ell}(\Delta_{\ell}(t))\left( m_{\beta}^{N}(t)-m_{\beta}(t)\right) +\left[ \Gamma^{2}_{\ell}(\Delta_N^{\ell})-\Gamma^{2}_{\ell}(\Delta_{\ell}(t)) \right] m_{\beta}^{N} \, .
}
\end{equation}
Thus, the expression for $\linn[m_{\alpha}^N]-\dot{m}_{\alpha}(t)$ reads
\begin{equation}
\eqalign{
 \linn[m_{\alpha}^N]-\dot{m}_{\alpha}(t)& =L+ i \sum_{\beta=1}^{d^2} A_{\alpha \beta} (m_{\beta}^{ N} - m_{\beta}(t)) \\
& + i\sum_{\beta, \gamma=1}^{d^2}   B_{\alpha \beta \gamma} [ ( m_{\beta}^{N}-m_{\beta}(t))  m_{\gamma}^{N} +  m_{\beta}(t) (m_{\gamma}^{N}(t) - m_{\gamma}(t))] \\
& + \sum_{\ell=1}^{q}\sum_{\beta=1}^{d^2} M_{\ell \alpha}^{\beta} \left\lbrace \Gamma^{2}_{\ell}(\Delta_{\ell}(t))\left[ m_{\beta}^{N}(t)-m_{\beta}(t)\right]\right. \\
& \left. +  \left[ \Gamma^{2}_{\ell}(\Delta^{\ell}_{N})-\Gamma^{2}_{\ell}(\Delta_{\ell}(t)) \right] m_{\beta}^{N} \right\rbrace \, .
}
\end{equation}
Inserting this expression in the time derivative in Eq.~\eref{partiI_derivative_time}, we get
\begin{equation}
\eqalign{
D_{t}^{\alpha,I} &=\omega_t\left(L[m_\alpha^N-m_\alpha(t)]\right) + i\sum_{\beta=1}^{d^2} A_{\alpha \beta} \, \omega_t \left( [m_{\beta}^{N} -m_{\beta}(t)][m_{\alpha}^{N} -m_{\alpha}(t)] \right) \\
& + i\sum_{\gamma, \beta =1}^{d^2} B_{\alpha \beta \gamma} \left\lbrace \omega_t \left( [m_{\beta}^{N} -m_{\beta}(t)]m_{\gamma}^{N}[m_{\alpha}^{N} -m_{\alpha}(t)] \right) \right. \\
& \left. + \omega_t \left( m_{\beta}(t)[m_{\gamma}^{N} -m_{\gamma}(t)][m_{\alpha}^{N} -m_{\alpha}(t)] \right) \right\rbrace \\
& + \sum_{\ell=1}^{q}\sum_{ \beta=1}^{d^2} M_{\ell \alpha}^{\beta} \left\lbrace \omega_t \left( \Gamma_{\ell}^{2}(\Delta_{\ell}(t))[m_{\beta}^{N} -m_{\beta}(t)][m_{\alpha}^{N} -m_{\alpha}(t)]  \right)  \right. \\ 
& + \left. \omega_t \left(   \left[ \Gamma^{2}_{\ell}(\Delta^{\ell}_{N})-\Gamma^{2}_{\ell}(\Delta_{\ell}(t)) \right] m_{\beta}^{N} [m_{\alpha}^{N} -m_{\alpha}(t)]\right) \right\rbrace.   
}
\end{equation}

We want to find upper bounds to the modulus of all terms forming $D_t^{\alpha,I}$. The contribution due to $L$ can be bounded as 
$$
\left|\omega_t\left(L[m_\alpha^N-m_\alpha(t)]\right)\right|\le \frac{2C_L}{N}\, .
$$

The remaining ones are of the type (I) $\omega_t([m_{\beta}^{N} -m_{\beta}(t)]X[m_{\alpha}^{N} -m_{\alpha}(t)])$, and (II) $\omega_t( \left[ \Gamma^2_\ell(\Delta^{\ell}_N)-\Gamma^2_\ell(\Delta_{\ell}(t))\right] X[m_{\alpha}^{N} - m_{\alpha}(t)])$, with $X$ some operators. Terms such as (I) can be bounded as \cite{CarolloL:PRL:21}
\begin{equation}
|\omega_t([m_{\beta}^{N} -m_{\beta}(t)]X[m_{\alpha}^{N} -m_{\alpha}(t)])| \le \left\| X \right\| \E(t) \, ,
\end{equation}
and thus we have
\begin{eqnarray}
& |\omega_t([m_{\beta}^{N} -m_{\beta}(t)][m_{\alpha}^{N} -m_{\alpha}(t)])| \le  \E(t), \\
& |\omega_t([m_{\beta}^{N} -m_{\beta}(t)]m_{\gamma}^{N}[m_{\alpha}^{N} -m_{\alpha}(t)])| \le \E(t), \\
& |\omega_t(m_{\gamma}(t)[m_{\beta}^{N} -m_{\beta}(t)][m_{\alpha}^{N} -m_{\alpha}(t)])| \le \E(t),\\
& | \omega_t \left( \Gamma_{\ell}^{2}(\Delta_{\ell}(t))[m_{\beta}^{N} -m_{\beta}(t)][m_{\alpha}^{N} -m_{\alpha}(t)]  \right) | \le \gamma^2(\delta_\ell)\E(t)
\end{eqnarray}
with $X=\mathbb{I}$, $X=m_{\gamma}^{N}$, $X=m_{\gamma}(t)$, and $X = \Gamma_{\ell}^{2}(\Delta_{\ell}(t)) $, respectively. Note that we have further exploited that $\|m_{\alpha}^{N}\| \leq 1$, and, by Lemma \ref{lemma_bound_mf}, that $|m_{\alpha}(t)| \leq 1 $ and $|\Delta_\ell (t)| \leq \delta_{\ell}$.
For terms such as (II), we exploit Lemma \ref{Lemma-aux} and a Cauchy-Schwarz inequality to obtain 
\begin{equation}
\eqalign{
& |\omega_t(\left[ \Gamma^2_\ell(\Delta^{\ell}_N)-\Gamma^2_\ell(\Delta_{\ell}(t))\right]X[m_{\alpha}^{N} - m_{\alpha}(t)])| \le \\
&  2 \gamma(\delta_\ell)\gamma'(\delta_{\ell})\| X \|\sum_{\beta} |r_{\ell \beta}| \sqrt{\omega_t([m_{\beta}^N-m_{\beta}(t)]^2)}\sqrt{\omega_t([m_{\alpha}^N-m_{\alpha}(t)]^2)} \le \\
& 2 \delta_\ell \gamma(\delta_\ell)\gamma'(\delta_{\ell}) \| X \| \E(t) \, ,
}
\end{equation}
with $\| X \|=\| m_{\gamma}^{N}\| \le 1$. As a consequence, we can derive
\begin{equation}
|D_t^{\alpha,I} | \le \frac{C_0}{2}\E(t) +\frac{2C_L}{N}\, ,
\label{eq_def_c0}
\end{equation}
with 
$$
C_0=2\left(d^{2} A + d^42 B+ qd^2 M [\gamma^2(\delta)+2\delta \gamma(\delta)\gamma'(\delta)]\right)
$$
where $ A \equiv {\mathrm {max}}_{\alpha, \gamma}|A_{\alpha, \gamma}|$, $B \equiv {\mathrm {max}}_{\alpha, \beta, \gamma}|B_{\alpha \beta \gamma} |$, $M \equiv {\mathrm {max}}_{\ell, \alpha, \gamma} |M_{\ell \alpha}^{\gamma}|$, $\delta \equiv {\mathrm {max}}_{\ell}\delta_{\ell}$, and, recalling the definitions
\begin{equation*}
    \gamma(z):=\sum_{k=0}^\infty |c_\ell^k||z|^k <\infty\, ,\qquad \gamma'(z):=\sum_{k=1}^\infty k|c_\ell^k||z|^{k-1} <\infty\, ,
\end{equation*}
we have $\gamma(\delta)\ge\gamma(\delta_\ell)$, $\gamma'(\delta)\ge\gamma'(\delta_\ell)$ for all $\ell$.  Therefore, for the time derivative $D_t$, it is
\begin{equation}
|D_t |\le 2\sum_{\alpha=1}^{d^2}|D_t^{\alpha,I}|+|\omega_t\left(P\right)| \leq C_1 \, \E(t) + \frac{C_2}{N}\, ,
\end{equation}
where $C_1=d^2 C_0$ and $C_2=d^2(C_P+4 C_L)$. 

As a result, the time derivative of the cost function can be bounded by
\begin{equation}
\dot{\mathcal{E}}_{N}(t)  \le \left|\dot{\mathcal{E}}_{N}(t)   \right| \le C_1\E(t) + \frac{C_2}{N}\, .
\end{equation}
\end{proof}

\subsection{Open quantum dynamics of quasi-local operators}
\label{generator-quasi-local}
In this section, we show how the exactness of the mean-field equations can be exploited to derive the effective dynamical generator which implements the time evolution of any quasi-local operator in the thermodynamic limit.

Let us consider the following time-dependent Lindblad generator 
\begin{equation}
\tilde{\mathcal{L}}_t\left[\cdot \right]= i\left[\tilde{H},\cdot\right]+\sum_{\ell}\Gamma^2_\ell(\Delta_\ell(t))\mathcal{D}_\ell^{\rm Loc}\left[\cdot\right]\, ,
    \label{ql-gen}
\end{equation}
where $\mathcal{D}_\alpha^{\rm Loc}$ is the dissipator introduced in Eq.~\eref{D_loc}, $\Delta_\ell(t)$ is the linear combination of mean-field variables  
$$
\Delta_\ell(t)=\sum_\alpha r_{\ell\alpha} m_\alpha(t)\, , 
$$
and the Hamiltonian is given by
$$
\tilde{H}=\sum_{k=1}^N \sum_{\alpha=1}^{d^2} \epsilon_\alpha v_\alpha^{(k)}+\sum_{k=1}^N \sum_{\alpha,\beta=1}^{d^2} h_{\alpha \beta} \left(m_\alpha(t) v_\beta^{(k)}+m_\beta(t) v_\alpha^{(k)}\right)\, .
$$
Such a Lindblad operator is well-defined for any time $t$ since the variables $m_\alpha(t)$ are well-defined due to Lemma \ref{lemma_bound_mf}. 

Through this generator, we can define the (time-ordered) dynamical map $\Lambda_{t,s}^{N}\left[\cdot\right]$  such that
$$
\frac{d}{dt } \Lambda_{t,s}^N\left[\cdot\right]=\Lambda_{t,s}^N\circ \tilde{\mathcal{L}_t} \left[\cdot\right]\, , \qquad \mbox{ and } \qquad \frac{d}{ds } \Lambda_{t,s}^N\left[\cdot\right]=-\tilde{\mathcal{L}_s}\circ\Lambda_{t,s}^N  \left[\cdot\right]\, .
$$
Moreover, since this generator acts independently on the different particles, we also have that if an operator $O$ has support only on certain sites, then also $\Lambda_{t,s}^N\left[O\right]$ will have support on the same sites.  

As we shall prove in the following Theorem, the generator $\tilde{\mathcal{L}}_t$ is in fact the generator of the dynamics of any quasi-local observable, or operator, of the system, in the thermodynamic limit. This generator is thus valid for any operator in $\mathcal{A}$ and any sufficiently clustering state $\omega$ [see Definition \ref{clustering}]. Moreover, we note that the restriction of the state $\omega$ to a single particle or to a finite number of particles can be represented by a density matrix $\rho$. If the latter is in product form over the considered particles at the initial time, the structure of the generator $\tilde{\mathcal{L}}_t$, which acts separately on the different particles, guarantees that the time-evolved state $\rho_t$ for the finite set of particles considered will remain in product form for all times, in the thermodynamic limit. This is another way in which the validity of the mean-field approach can be understood, see e.g., the discussion in Ref.~\cite{Pickl11} where a closed many-body system is considered. Our observation here thus confirms that such an approach based on the study of the single-particle, or few-particle, reduced density matrix is in fact equivalent to the approach adopted in this work, which focuses on average operators as well as on quasi-local ones.

\begin{theorem}\label{theorem2}
The time evolution implemented by $\mathcal{L}_N$ converges, in the weak operator topology for the quasi-local algebra, to the time evolution implemented by the time-dependent generator $\tilde{\mathcal{L}}_t$ through the map $\Lambda_{t,0}^N$. That is, 
$$
\lim_{N\to\infty}\omega\left(A^\dagger e^{t\mathcal{L}_N}\left[O\right]B\right)=\lim_{N\to\infty}\omega\left(A^\dagger \Lambda_{t,0}^N\left[O\right]B\right)\, ,
$$
for all $A,B,O\in \mathcal{A}$ and any $t<\infty$. 
\end{theorem}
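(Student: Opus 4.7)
The plan is to use a Duhamel-type interpolation between the true Heisenberg dynamics $e^{t\mathcal{L}_N}$ and the mean-field propagator $\Lambda_{t,0}^N$, reducing the statement to the mean-field exactness of Theorem \ref{theorem}. By density of strictly-local operators in $\mathcal{A}$ and the operator-norm contractivity of both $e^{t\mathcal{L}_N}$ and $\Lambda_{t,0}^N$ (they are Heisenberg-picture Lindblad/time-ordered propagators, hence unital CP), it suffices to establish the claim for strictly-local $O$. Setting $\phi(s):=e^{s\mathcal{L}_N}\bigl[\Lambda_{t,s}^N[O]\bigr]$ and differentiating using the ODEs for $\Lambda_{t,s}^N$ yields the identity
$$
e^{t\mathcal{L}_N}[O]-\Lambda_{t,0}^N[O]=\int_0^t e^{s\mathcal{L}_N}\bigl[(\mathcal{L}_N-\tilde{\mathcal{L}}_s)[\Lambda_{t,s}^N[O]]\bigr]\,ds.
$$
Because $\tilde{\mathcal{L}}_s$ is a sum of terms acting on single sites, $O_{t,s}:=\Lambda_{t,s}^N[O]$ retains the strictly-local support of $O$ and is bounded in operator norm uniformly in $N$ and $s\in[0,t]$.

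Next I would expand $(\mathcal{L}_N-\tilde{\mathcal{L}}_s)[O_{t,s}]$ term by term. A direct computation shows that $i[H-\tilde H(s),O_{t,s}]$ equals a finite sum (indexed by the support of $O_{t,s}$) of terms of the form $h_{\alpha\beta}[v_\alpha^{(k)},O_{t,s}](m_\beta^N-m_\beta(s))$ and $h_{\alpha\beta}(m_\alpha^N-m_\alpha(s))[v_\beta^{(k)},O_{t,s}]$, with no $O(1/N)$ remainder, while Lemma \ref{Cor_diss_dyn} reduces the dissipative mismatch to $\bigl(\Gamma_\ell^2(\Delta_N^\ell)-\Gamma_\ell^2(\Delta_\ell(s))\bigr)\mathcal{D}_\ell^{\mathrm{Loc}}[O_{t,s}]$ up to an $O(1/N)$ operator-norm remainder. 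Taking the sandwich expectation $\omega(A^\dagger e^{s\mathcal{L}_N}[\,\cdot\,]B)$, I invoke Lemma \ref{Lemma-aux} on the $\Gamma_\ell^2$-difference contributions and a Kadison--Schwarz-type bound $\Phi[ZX]\Phi[ZX]^*\le\|X\|^2\Phi[Z^2]$ (for $\Phi=e^{s\mathcal{L}_N}$ and $Z=m_\alpha^N-m_\alpha(s)$, which is the mechanism underlying the dilation Lemma~6 of the appendix) for the $(m_\alpha^N-m_\alpha(s))$ factors; after a final Cauchy--Schwarz in $\omega$, each contribution is bounded by a constant independent of $N$ times
$$
\sqrt{\omega\bigl(A^\dagger e^{s\mathcal{L}_N}[(m_\alpha^N-m_\alpha(s))^2]A\bigr)}\sqrt{\omega(B^\dagger B)}+O(1/N).
$$

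The main obstacle, and the key technical step, is to show that for any quasi-local $A$, the quantity $\mathcal{E}_N^A(s):=\sum_\alpha \omega\bigl(A^\dagger e^{s\mathcal{L}_N}[(m_\alpha^N-m_\alpha(s))^2]A\bigr)$ tends to zero as $N\to\infty$, uniformly on compact time intervals. My plan is to rerun the Gronwall argument of the proof of Theorem \ref{theorem} essentially verbatim, with the state $\omega$ replaced by the positive linear functional $\omega_A(\cdot):=\omega(A^\dagger\cdot A)$. Translation invariance entered Theorem \ref{theorem} only to pin down the mean-field initial data; the substantive ingredients---the operator identities of Lemmas \ref{lemma_commutators}--\ref{lemma_gen_action} together with the inner-product-level bound of Lemma \ref{Lemma-aux}---carry over unchanged to $\omega_A$, since they are statements about operator norms and about the Lindblad evolution itself. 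Quasi-locality of $A$ ensures that $\omega_A$ inherits the clustering property of $\omega$ and that $\mathcal{E}_N^A(0)\to 0$; the Gronwall inequality $\dot{\mathcal{E}}_N^A(s)\le C_1'\mathcal{E}_N^A(s)+C_2'/N$ then propagates this vanishing to all finite $s$. Combined with the uniform operator-norm bound on $(\mathcal{L}_N-\tilde{\mathcal{L}}_s)[O_{t,s}]$, dominated convergence in $s\in[0,t]$ yields the claimed weak operator limit. The delicate point I expect to encounter is the faithful adaptation of every cross-term in the Gronwall expansion, particularly dissipative terms such as $\omega_A\bigl(e^{s\mathcal{L}_N}[\Gamma_\ell^2(\Delta_\ell(s))(m_\alpha^N-m_\alpha(s))^2]\bigr)$, so that they close back onto $\mathcal{E}_N^A$ itself under the outer factors $A^\dagger(\cdot)A$, without translation invariance or normalization.
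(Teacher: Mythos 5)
Your proposal is correct and follows essentially the same route as the paper's proof: the Duhamel interpolation $e^{t\mathcal{L}_N}[O]-\Lambda_{t,0}^N[O]=\int_0^t e^{s\mathcal{L}_N}\bigl[(\mathcal{L}_N-\tilde{\mathcal{L}}_s)[\Lambda_{t,s}^N[O]]\bigr]ds$, the term-by-term decomposition of the generator mismatch controlled via Lemma \ref{Cor_diss_dyn}, Lemma \ref{Lemma-aux} and the dilation inequality of Lemma \ref{Lemma-dilation}, the rerun of the Gronwall argument for the $A$-deformed functional $\omega(A^\dagger\cdot A)$ (whose clustering at $t=0$ follows from quasi-locality of $A$ and the asymptotic commutativity of average with quasi-local operators), and the final density/contractivity reduction from quasi-local to strictly local $O$ are all exactly the paper's steps, differing only in the order in which the density reduction is invoked.
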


\begin{proof}
In order to keep a compact  notation, we define 
$$
\omega^{A^\dagger B}(O):=\omega(A^\dagger O B)\, ,
$$
and consider the difference 
$$
I_O=\omega^{A^\dagger B}\left(e^{t\mathcal{L}_N}\left[O\right]\right)-\omega^{A^\dagger B}\left(\Lambda_{t,0}^N\left[O\right]\right)\, .
$$
Let us first consider the case in which $O$ is strictly local and thus supported only on a finite number of sites. This means that there exists $k_{min}\le k_{max}$ such that $(k_{max}<\infty)$ 
$$
\left[v_\alpha^{(k)},O\right]=0\, , \forall v_\alpha\, , 
$$
whenever $k<k_{min}$ or $k>k_{max}$. The support of the operator then has length $N_O=k_{max}-k_{min}+1$. Since we focus here on the limit $N\to\infty$ we consider always $N>k_{max}$. 

We now start with the actual proof. The quantity $I_O$ can be written as 
$$
I_O=\omega^{A^\dagger B}\left(\int_0^t ds\, \frac{d}{ds} \left[e^{s\mathcal{L}_N}\circ \Lambda_{t,s}^N\left[O\right] \right]\right)\, .
$$
Calculating explicitly the derivative we find 
\begin{equation}\label{th2_I_O}
I_O=\int_0^t ds \, \omega^{A^\dagger B}\left(e^{s\mathcal{L}_N}\circ \left[\mathcal{L}_N-\tilde{\mathcal{L}}_s \right]\circ \Lambda_{t,s}^N[O]\right)\, .
\end{equation}
Now, let us define $O_{t,s}=\Lambda_{t,s}^N[O]$. This is an operator which has the same support of $O$, that is thus finite. 
Calculating the action of $\mathcal{L}_N$ on $O_{t,s}$ we find that 
\begin{equation}
\eqalign{
\mathcal{L}_N[O_{t,s}]= & \, i\sum_{\alpha}\epsilon_\alpha \sum_{k=k_{min}}^{k_{max}}\left[v_\alpha^{(k)},O_{t,s}\right] \\
& + i\sum_{\alpha \beta}h_{\alpha,\beta }\left(m_\alpha^N \sum_{k=k_{min}}^{k_{max}}\left[v_\beta^{(k)},O_{t,s}\right]+ \sum_{k=k_{min}}^{k_{max}}\left[v_\alpha^{(k)},O_{t,s}\right]m_\beta^N\right)\\ 
& + \sum_{\ell} \Gamma_\ell^2(\Delta_N^\ell) \mathcal{D}_\ell^{\rm Loc}\left[ O_{t,s}\right]+O(N)\, ,
}
\end{equation}
where the $O(N)$ takes into account the difference between the dissipative part of the original Lindblad and its local action modulated by the operator-valued rate which converges to zero in norm in the thermodynamic limit (cf.~Lemma \ref{Cor_diss_dyn}). With this, we can also calculate the difference between the action of the two generators appearing in Eq.~\eref{th2_I_O}. This is equivalent to 
\begin{equation}
\eqalign{
\mathcal{L}_N[O_{t,s}]-\tilde{\mathcal{L}}_s [O_{t,s}]=&i\sum_{\alpha \beta}h_{\alpha \beta }(m_\alpha^N-m_\alpha(s)) \sum_{k=k_{min}}^{k_{max}}\left[v_\beta^{(k)},O_{t,s}\right]+\\
&+i\sum_{\alpha \beta}h_{\alpha \beta }\sum_{k=k_{min}}^{k_{max}}\left[v_\alpha^{(k)},O_{t,s}\right](m_\beta^N-m_\beta(s))+\\ 
&+ \sum_{\ell} (\Gamma_\ell^2(\Delta_N^\ell)-\Gamma_\ell^2(\Delta_\ell(s))) \mathcal{D}_\ell^{\rm Loc}\left[ O_{t,s}\right]+O(N)\, .
}
\end{equation}
Now, we plug this back into the expression for $I_O$. We find
\begin{equation}
    \eqalign{
       I_O=&\int_0^t ds \, \omega^{A^\dagger B}\left(e^{s\mathcal{L}_N}\left[i\sum_{\alpha \beta}h_{\alpha \beta }(m_\alpha^N-m_\alpha(s)) \sum_{k=k_{min}}^{k_{max}}\left[v_\beta^{(k)},O_{t,s}\right]\right]\right) +\\
       &\int_0^t ds \, \omega^{A^\dagger B}\left(e^{s\mathcal{L}_N}\left[i\sum_{\alpha \beta}h_{\alpha \beta }\sum_{k=k_{min}}^{k_{max}}\left[v_\alpha^{(k)},O_{t,s}\right](m_\beta^N-m_\beta(s))\right]\right) +\\
       &\int_0^t ds \, \omega^{A^\dagger B}\left(e^{s\mathcal{L}_N}\left[\sum_{\ell} (\Gamma_\ell^2(\Delta_N^\ell)-\Gamma_\ell^2(\Delta_\ell(s))) \mathcal{D}_\ell^{\rm Loc}\left[ O_{t,s}\right]\right]\right) +\\
       &\int_0^t ds\, \omega^{A^\dagger B}\left(e^{s\mathcal{L}_N}\left[O(N)\right]\right) \, .
    }
    \label{terms-I}
\end{equation}
The last term of the above sum, which we call $I_O^4$ is easy to treat since $O(N)$ tends to zero in norm, and thus have $|I_O^4|\le \|A\|\|B\| \|O(N)\|$. The other terms need more care. Let us start with the first one, which we call $I_O^1$. Denoting 
$$
O_{t,s}^\alpha = \sum_{k=k_{min}}^{k_{max}}\left[v_\alpha^{(k)}, O_{t,s}\right]\, , 
$$
we can write
$$
I_O^1 = i\sum_{\alpha,\beta} h_{\alpha \beta}\int_0^t ds\, \omega^{A^\dagger B}\left(e^{s\mathcal{L}_N}\left[(m_\alpha^N-m_\alpha(s)) O_{t,s}^\beta\right]\right) \, .
$$
Using Lemma \ref{Lemma-dilation} proved in the Appendix (see also Ref. \cite{BenattiEtAl18}) we can write 
\begin{equation}
\eqalign{
|I_O^1|&\le d^4 h_{{\mathrm {max}}}\int_0^t ds\, \left|\omega^{A^\dagger B}\left(e^{s\mathcal{L}_N}\left[(m_\alpha^N-m_\alpha(s)) O_{t,s}^\beta\right]\right) \right|\le \\
& \le d^4 h_{{\mathrm {max}}} 2 N_O \|B\| \|v_\beta\|\|O\| \int_0^t ds\, \sqrt{\omega^{A^\dagger A}\left(e^{s\mathcal{L}_N}\left[(m_\alpha^N-m_\alpha(s))^2 \right]\right)}
}
\end{equation}
where we defined $h_{{\mathrm {max}}}={\mathrm {max}}_{\alpha, \beta} h_{\alpha \beta}$, and we further used that 
$$
\|O_{t,s}^\beta\|\le 2 N_O \|v_\beta\|\|O\|\, .
$$

Let us consider the term inside the square root. We can define a state obtained from $\omega$ through $A$ as
$$
\tilde{\omega}^{A^\dagger A}(X):=\frac{\omega\left(A^\dagger X A\right)}{\omega\left(A^\dagger A\right)}\,.
$$
The state $\tilde{\omega}^{A^\dagger A}$ is also clustering in the sense of Definition \ref{clustering}, as long as $\omega$ is clustering and $A$ is quasi-local. We thus have 
$$
\omega^{A^\dagger A}\left(e^{s\mathcal{L}_N}\left[(m_\alpha^N-m_\alpha(s))^2 \right]\right)=\omega(A^\dagger A) \tilde{\omega}^{A^\dagger A}\left(e^{s\mathcal{L}_N}\left[(m_\alpha^N-m_\alpha(s))^2 \right]\right)\, .
$$
As done in Theorem \ref{theorem}, we can define for the state $\tilde{\omega}^{A^\dagger A}$ the cost function 
$$
\mathcal{E}_N^A(t)=\sum_{\alpha}\tilde{\omega}^{A^\dagger A}\left(e^{t\mathcal{L}_N}\left[(m_\alpha^N-m_\alpha(t))^2 \right]\right)\, .
$$
Repeating all the steps in the proof of the theorem for this new cost function we find that 
$$
\mathcal{E}^A_N(t)\le e^{t C_1}\mathcal{E}^A_N(0)+\frac{1}{N}\frac{C_2}{ C_1}\left(e^{C_1 t}-1 \right) \, .
$$
The second term on the right-hand side goes to zero in the large $N$ limit. The first one also goes to zero since the state $\tilde{\omega}^A$ is clustering, in the sense that 
$$
\tilde{\omega}^{A^\dagger A}\left(\left[m_\alpha^N-m_\alpha(0) \right]^2\right)\to 0
$$
for $N\to\infty$. This can be seen by considering that average operators commute with quasi-local operators 
so that 
$$
\lim_{N\to\infty}\tilde{\omega}^{A^\dagger A}\left(\left[m_\alpha^N-m_\alpha(0) \right]^2\right)=\frac{1}{\omega(A^\dagger A)}\lim_{N\to\infty}{\omega}\left(A^\dagger A \left[m_\alpha^N-m_\alpha(0) \right]^2\right)\, .
$$
Using appropriately the Cauchy-Schwarz inequality, we then find 
$$
\lim_{N\to\infty}\tilde{\omega}^{A^\dagger A}\left(\left[m_\alpha^N-m_\alpha(0) \right]^2\right)\le 2 \|A\|^2 \lim_{N\to\infty}\sqrt{\omega\left(\left[m_\alpha^N-m_\alpha(0) \right]^2\right) }
$$
where we used $\|m_\alpha^N-m_\alpha(0)\|\le 2$. By assumption on the state $\omega$, the right-hand side of the above inequality goes to zero. 
All together this shows that 
$$
\lim_{N\to\infty}|I_O^1|=2d^4h_{{\mathrm {max}}} N_O \|B\|\|v_\beta\| \|O\|\omega(A^\dagger A) \lim_{N\to\infty}\int_0^t ds\, {\sqrt{\mathcal{E}_N^A(s)}}=0\, .
$$ 

The second term in Eq.~\eref{terms-I}, which we call $I_O^2$, can be treated exactly in the same way as above, so that we are left with the third term $I_O^3$. For the sake of clarity, we have
$$
I_O^3=\sum_{\ell}\int_0^t ds \, \omega^{A^\dagger B}\left(e^{s\mathcal{L}_N}\left[ (\Gamma_\ell^2(\Delta_N^\ell)-\Gamma_\ell^2(\Delta_\ell(s))) \mathcal{D}_\ell^{\rm Loc}\left[ O_{t,s}\right]\right]\right)\, , 
$$
and exploiting Lemma \ref{Lemma-aux} we immediately find
\begin{eqnarray*}
|I_O^3|\le & 2\sum_{\ell,\beta}|r_{\ell\beta}| \gamma_\ell(\delta_\ell)\gamma_\ell'(\delta_\ell)\sqrt{\omega(B^\dagger B)}\|\mathcal{D}_\ell^{\rm Loc}\left[ O_{t,s}\right]\| \\
& \times \int_0^t ds\, \sqrt{\omega^{A^\dagger A}\left(e^{s\mathcal{L}_N}\left[(m^N_\beta-m_\beta(s))^2\right]\right)}\, .
\end{eqnarray*}
Note that $\|\mathcal{D}_\ell^{\rm Loc}\left[ O_{t,s}\right]\|$ remains finite since the operator $O_{t,s}$ has local support. 
Now, looking at the square root inside the integral and recalling the discussion used to show that the term $I_O^1$ converges to zero, we can show that also $I_O^3$ is vanishing in the $N\to\infty$ limit. 

Collecting all these results together, we have shown that 
\begin{equation}
\lim_{N\to\infty }|I_O|=\lim_{N\to\infty}\left(|I_O^1|+|I_O^2|+|I_O^3|+|I_O^4|\right)=0\, , 
\label{theo-local}
\end{equation}
which concludes the proof of the theorem for any operator $O$ with strictly local support. 

To extend this result to the case of any operator $O$ of the quasi-local algebra $\mathcal{A}$, we proceed as follows.
First, we observe that any quasi-local operator $O$ can be approximated with arbitrary accuracy by an operator with local support. This means that for any $\varepsilon>0$, we can always find an operator $O_\varepsilon$, such that 
$$
\|O-O_\varepsilon \|\le \varepsilon\, .
$$
Then, we consider again the quantity $I_O$ and rewrite it as 
\begin{equation}
    \eqalign{
    I_O&=\omega^{A^\dagger B}\left(e^{t\mathcal{L}_N}[O]\right)-\omega^{A^\dagger B}\left(\Lambda_{t,0}^N[O]\right)=\\
    &=\omega^{A^\dagger B}\left(e^{t\mathcal{L}_N}[O-O_\varepsilon]\right)-\omega^{A^\dagger B}\left(\Lambda_{t,0}^N[O-O_\varepsilon]\right)+I_{O_\varepsilon}\, .
    }
\end{equation}
Since $e^{t\mathcal{L}_N}$ and $\Lambda_{t,0}^N$ are both contractions, the first two terms are bounded by $\|A\|\|B\|\|O-O_\varepsilon \|$. As such we have 
$$
|I_O|\le 2\|A\|\|B\|\varepsilon +|I_{O_\varepsilon}|\, .
$$
Now, we chose $\varepsilon$ such that $\varepsilon=\chi/(4\|A\|\|B\|)$. Moreover, 
due to the result for operators with strictly local support summarized by Eq.~\eref{theo-local}, given any $\chi>0$ there exists a $\tilde{N}$ such that for any $N>\tilde{N}$ we have 
$$
|I_{O_\varepsilon}|<\frac{\chi}{2}\, .
$$
With these considerations, we can thus say that $\forall \chi>0$ there exists a $\tilde{N}$ such that $\forall N>\tilde{N}$ we have 
$$
|I_O|< \chi\, ,
$$
which is nothing but the definition of the limit appearing in the theorem.
\end{proof}

\section{Application to quantum Hopfield-type neural networks}
\label{sec5}

\subsection{Quantum Hopfield-type neural networks}
In this section, we apply our results to open quantum generalizations of Hopfield-type models. These systems origin within the field of classical neural networks (NNs). They are fundamental models realizing associative memory behavior \cite{Amit_book}, i.e., they are capable of retrieving complete information from corrupted data, following a learning rule. 
A paradigmatic instance of associative memory is the so-called Hopfield neural network (HNN) \cite{Hopfield:1982}, that we will consider in the following. The HNN is a classical spin network featuring all-to-all interactions \cite{Amit_book,AmitGS:1985a}, described by the energy function $E=-\frac{1}{2}\sum_{i\neq j=1}^{N} w_{i j} \, \sigma_z^{(i)} \sigma_z^{(j)}$, where $N$ is the number of spins and $\sigma_z^{(i)}$ are classical Ising spins. The interaction couplings, $w_{ij}$, are chosen in such a way that a set of $p$ spin configurations, $\lbrace \xi_{i}^{\mu} \rbrace_{i = 1,...,N}$ for $\mu=1,2,\dots p$, can be stored and retrieved by the system which corresponds to the patterns being the minima of the energy function. The different spin configurations $\lbrace \xi^\mu_i\rbrace_i$ can represent patterns, such as images or letters of an alphabet. Among the different learning rules, widely known is the \emph{Hebb's prescription}, that sets $w_{i j} = \frac{1}{N}\sum_{\mu=1}^{p}\xi_{i}^{\mu} \xi_{j}^{\mu}$. 

In practice, for what concerns theoretical investigations on these models, the patterns can be chosen to be generated by independent identically distributed (i.i.d.) random variables that can assume the values $\xi_i^{\mu} = \pm 1$. For $p/N \ll 1$, the spin configurations which have minimal energy are those in which all spins are aligned with the patterns. 
The retrieval mechanism emerges when endowing the HNN with a Glauber thermal single spin-flip dynamics with inverse bath-temperature $\beta^{-1}$ \cite{glauber1963}.

Quantum generalizations of HNNs have been introduced in Refs.~\cite{Rotondo:JPA:2018,FiorelliLM22} to embed these systems into the more general framework of open quantum Markovian evolution defined by Eq.~\eref{e0_totalHamiltonian}-\eref{jumps} and to investigate the impact on quantum effects on their retrieval dynamics. Here, the system is described in terms of $N$ spin-$1/2$ particles, undergoing a Markovian evolution with jump operators
\begin{equation}\label{hopfield rates}
J_{\pm}^{(k)} = \sigma_{\pm}^{(k)}\Gamma_{\pm}^{\mathrm{HN}}(\Delta E_k), \qquad \Gamma_{\pm}^{\mathrm{HN}  } (\Delta E_k)= \frac{ e^{\pm \frac{\beta}{2}\Delta E_{k} }}{ \sqrt{2\cosh{(\beta \Delta E_k)}}} \, ,
\end{equation}
where $$ \Delta E_{k} = \frac{1}{N}\sum_{\mu=1}^{p}\xi^{\mu}_{k}\sum_{j \neq k} \xi_{j}^{\mu} \sigma^{(j)}_{z}$$ 
represents the energy difference associated with the configuration before and after the transition. We note that this operator quantifies the energy change associated with a spin-flip at site $j$. It is thus not a simple multiple of the identity but rather a many-body operator depending on the state of all spins. It is worth noticing that the operator-valued rates $\Gamma_{\ell}^{\mathrm{HN}}(\Delta E_k)$ for this model do not act on the $k$-th spin and thus commute with the operator $ \sigma_{\alpha}^{(k)}$, $\alpha= \pm$. The Hamiltonian term is chosen to be a homogeneous transverse field, $H=\Omega \sum_{i=1}^{N}\sigma_{x}^{(i)}$, and competes with the dissipative HNN dynamics. In the thermodynamic limit and for $p/N \ll 1$, the quantum model has been analyzed via the dynamical evolution of some macroscopic quantities. In fact, under a mean-field approximation, i.e., neglecting correlations among average operators, the retrieval properties of quantum HNNs in the parameter regimes $(\Omega, \beta)$ have been characterized \cite{RotondoEtal:2018}. It was shown that for large temperatures quantum HNNs display a so-called paramagnetic (disordered) phase, for which pattern retrieval is not possible. For small temperatures and for sufficiently small values of the transverse-field strength, the system shows instead a ferromagnetic phase and can operate as an associative memory. Interestingly, for large values of $\Omega$ a quantum retrieval phase can be observed, characterized by a limit-cycle regime in which the state of the system features a nonzero overlap with one of the patterns. However, while a mean-field theory holds true for the classical HNN, no exact proof is known for quantum generalizations of the model. We will now show that Theorem \ref{theorem} and Theorem \ref{theorem2} apply to these cases. 

Before proceeding, we introduce the quantity 
$$
\Delta E = \frac{1}{N}\sum_{\mu}\xi^{\mu}_{k}\sum_{j} \xi_{j}^{\mu} \sigma^{(j)}_{z}\, ,
$$
together with the operator-valued rates
\begin{equation}\label{hopfield_rates_global}
\Gamma_{\pm }^{\mathrm{HN}}(\Delta E) \equiv \frac{ e^{\pm \frac{\beta}{2}\Delta E }}{ \sqrt{2\cosh{(\beta \Delta E)}}}\, .
\end{equation}
The difference between these rates and the ones defined in Eq.~\eref{hopfield rates} is that the former also account for the self-energy contribution, i.e., the sum involves all sites. In the following, we will show that replacing the operator-valued rates \eref{hopfield rates} with the ones defined by \eref{hopfield_rates_global}, i.e., replacing $\Delta E_k$ with $\Delta E$, yields the same equation of motions in the thermodynamic limit. Nonetheless, the representation of the system dynamics via \eref{hopfield_rates_global} is more closely related to the results we presented above.

\subsection{Exactness of the mean-field approach}
It is worth noticing that the operator valued rates Eqs.~\eref{hopfield rates}, \eref{hopfield_rates_global} do not satisfy Assumption~\ref{Gamma}. Indeed, while given a real number $x$, $(\cosh{(\beta x)})^{-1/2}$ is a real analytic function, the function $(\cosh{(\beta z)})^{-1/2}$, with $z$ complex is not an entire function. This means that there is no power series for $(\cosh{(\beta x)})^{-1/2}$, centered in $x=0$ which has an infinite radius of convergence. Clearly, as we also mentioned just after Assumption~\ref{Gamma}, requiring that the operator-value rate functions admit a power series with infinite radius of convergence is a strong requirement which is not strictly necessary for our treatment. In fact, looking back at our proof of the theorems, we see that in order to exploit our theorems, the operator-valued functions $\Gamma_\ell$ must possess two fundamental requirements: i) they must obey an equivalent of Lemma \ref{lemma_commutators}, i.e., they must commute with local or average operators up to terms which must converge, in norm, to zero sufficiently fast with $N$ and ii)  they must obey an equivalent of Lemma \ref{Lemma-aux}, i.e., their convergence to their mean-field counterpart must be dominated by the (quadratic) convergence of average operators to their mean-field counterpart. As we show below, these requirements are satisfied by the rates in Eqs.~\eref{hopfield rates}- \eref{hopfield_rates_global}.

As we deal with a system of spin-$1/2$ particles, we consider as a basis of the single-particle algebra $M_2(\mathbb{C})$ the Pauli operators $\sigma_{x,y,z}^{(k)}$ and the identity. Since the rates are proportional to $\sigma_z$ at the different sites, we have $[\Gamma_{\ell}^{\mathrm{HN}}(\Delta E),\sigma_z^{(k)}] = 0$ $\forall i=1,...,N$ and $\ell =  \pm$. To show that these rates obey Lemma \ref{lemma_commutators}, we thus need to evaluate the norm of the commutator  $[\Gamma_{\ell}^{\mathrm{HN}}(\Delta E),\sigma_{+}^{(k)}]$. For doing this, we notice that $\sigma^{(k)}_{+} \Gamma_{\ell}^{\mathrm{HN}}(\Delta E) =  \Gamma_{\ell}^{\mathrm{HN}}(\Delta E-c\sigma^{(k)}_{z})\sigma^{(k)}_{+} $, where $|c|\le 2p/N$, since we have  $|w_{ij}| \leq p/N$. Therefore, we can write
\begin{equation}
\eqalign{
  &  \| [\Gamma_{\ell}^{\mathrm{HN}}(\Delta E),\sigma^{(k)}_{+}] \| \leq \|   \Gamma_{\ell}^{\mathrm{HN}}(\Delta E) -  \Gamma_{\ell}^{\mathrm{HN}}(\Delta E-c\sigma^{(k)}_{z})\| \, .
}
\label{bound-HNN}
\end{equation}
It is straightforward to check that given two real values $x,y$, one has 
\begin{equation}
\eqalign{
\Gamma_{\pm}^{\mathrm{HN}}(x) -  \Gamma_{\pm}^{\mathrm{HN}}(y)&=\frac{e^{\pm \frac{\beta x}{2}}\left[\cosh(y)-\cosh(x)\right]}{\sqrt{\cosh(x)\cosh(y)}\left[\sqrt{2\cosh(y)}+\sqrt{2\cosh(x) }\right]} +\\
&+\frac{e^{\pm \frac{\beta x}{2}}-e^{\pm \frac{\beta y}{2}}}{\sqrt{2\cosh(y)}}\, ,
}
\end{equation}
which shows that the difference between the functions can be bounded by the sum of the difference between two entire functions multiplied by bounded terms (since $\Delta E$ only has finite eigenvalues). As such, in the right hand side of Eq.~\eref{bound-HNN} we can expand the entire functions in their Taylor series (with infinite radius of convergence), and noticing that the difference between the arguments of the two functions is of order $1/N$, one can find a suitable constant $C_\sigma$ such that 
$$
 \| [\Gamma_{\ell}^{\mathrm{HN}}(\Delta E),\sigma^{(k)}_{+}] \|\le \frac{C_\sigma}{N}\, .
$$
The same result holds for  $[\Gamma_{\ell}^{\mathrm{HN}}(\Delta E),\sigma^{(k)}_{-}]$. Therefore, for any single-site operator we have that $\|[\Gamma_{\ell}^{\mathrm{HN}}(\Delta E),x^{(k)}]\|\sim 1/N$ and thus also for average operators for which we have 
\begin{equation}\label{bound1:Hopfield}
     \| [\Gamma_{\ell}^{\mathrm{HN}}(\Delta E),X_N] \| \leq 2 x_{{\mathrm {max}}}  \frac{C_{\sigma}}{N}\, , 
\end{equation}
where $x_{{\mathrm {max}}}$ is the modulus of the matrix element of $x$ with largest absolute value. 
We further note that, since the Pauli matrices $\sigma_\pm,\sigma_z$ form a basis for the single-site algebra, any local operator $O$ can be written as a finite linear combination of products of Pauli matrices solely acting non-trivially on the support of $O$. Using our results on single-site operators and the linearity of the commutator we can thus find a bound for any strictly local operator $O\in\mathcal{A}$.

We now consider the double commutator $[\Gamma_{\ell}^{\mathrm{HN}}(\Delta E),[\Gamma_{\ell}^{\mathrm{HN}}(\Delta E), \sigma_{\alpha}^{(k)}]]$. Clearly this vanishes when considering $\sigma_\alpha = \sigma_z$ or equal to the identity. When focusing on $\sigma^{(k)}_{+}$, we have by direct computation,
\begin{equation}
\eqalign{
  &  \| [\Gamma_{\ell}^{\mathrm{HN}}(\Delta E),[\Gamma_{\ell}^{\mathrm{HN}}(\Delta E),\sigma_+^{(k)}]] \| \leq \|   \Gamma_{\ell}^{\mathrm{HN}}(\Delta E) -  \Gamma_{\ell}^{\mathrm{HN}}(\Delta E-c\sigma^{(k)}_{z})]\|^2  \, ,
}
\end{equation}
which, because of the argument above, is of order $1/N^2$. We thus find
\begin{equation}\label{bound2:Hopfield}
\eqalign{
  &  \| [\Gamma_{\ell}^{\mathrm{HN}}(\Delta E),[\Gamma_{\ell}^{\mathrm{HN}}(\Delta E),X_N]] \| \leq  2 x_{{\mathrm {max}}}  \frac{C_{\sigma}^2}{N^2} \, .
}
\end{equation}
An analogous result can be obtained again for strictly local operators. The results in Eq.~\eref{bound1:Hopfield} and in Eq.~\eref{bound2:Hopfield} can then be directly employed to show the validity of an equivalent of Lemma \ref{Cor_diss_dyn} for the operator-valued rates describing the quantum Hopfield NN. Indeed, by proceeding according to the proof of the above mentioned Lemma, and exploiting the specific form of the jump operators as defined by Eq.~\eref{hopfield_rates_global}, we find 
\begin{equation}\label{dissipative_effective_1}
\| \mathcal{D}_{\ell}[x_k] -  \Gamma_{\ell}^{ \mathrm{HN}\, 2}(\Delta E) \mathcal{D}_{\ell}^{\mathrm{Loc}}[x_k] \| \leq  \frac{x_{{\mathrm {max}}} }{N} \Gamma ( 3 C_{\sigma} +C_{\sigma} ^{2} ), 
\end{equation}
where $\Gamma $ is the norm of the operator-valued rate $\Gamma_{\ell}^{\mathrm{HN}}(\Delta E)$. 

One can check that the operator-valued rate defined by Eq.~\eref{hopfield rates} obeys the same bounds shown above. Hence, when considering $\Gamma_{\ell}^{\mathrm{HN}}(\Delta E_i)$, and identifying with $\tilde{\mathcal{D}}_{\ell}[\cdot]$ the corresponding dissipator, we get 
\begin{equation}\label{dissipative_effective_2}
\| \tilde{\mathcal{D}}_{\ell}[x_k] -  \Gamma_{\ell}^{ \mathrm{HN}\, 2}(\Delta E_i) \mathcal{D}_{\ell}^{\mathrm{Loc}}[x_k] \| \leq  \frac{x_{{\mathrm {max}}} }{N}  \Gamma  ( 3 C_{\sigma} +C_{\sigma}^{2} )\, . 
\end{equation}
At this point, we can show that in the thermodynamic limit the operator-valued rates \eref{hopfield rates} and \eref{hopfield_rates_global} give rise to the same equations of motions. To this end, let us consider the difference
\begin{equation}
\eqalign{
D= &  \| \mathcal{D}_{\ell}[x_h]  -  \tilde{\mathcal{D}}_{\ell}[x_h] \|  \\ 
= & \| \left( (\Gamma_{\ell}^{ \mathrm{HN}}(\Delta E) )^{2} - (\Gamma_{\ell}^{ \mathrm{HN}}(\Delta E_k))^{2} \right) \mathcal{D}_{\ell}^{\mathrm{Loc}}[x_h] + L_1 + L_2 \| \, ,
 }
\end{equation}
where $L_1 $  and $L_2$ follow the bounds as given by Eqs. \eref{dissipative_effective_1}, \eref{dissipative_effective_2}, respectively. Let us then focus on the norm of the remaining term. Noticing that $\Delta E_k=\Delta E - \frac{c}{2}\sigma_z^{(k)}$, we consider the difference
\begin{equation}
\eqalign{
    &  \left( (\Gamma_{\ell}^{ \mathrm{HN}}(\Delta E) )^{2} - (\Gamma_{\ell}^{ \mathrm{HN}}(\Delta E - \frac{c}{2}\sigma_z^{(k)}))^{2} \right) \mathcal{D}_{\ell}^{\mathrm{Loc}}[x_h]    \\
    & = \left( e^{\ell \beta \Delta E} \cosh{(\ell \beta \Delta E - \frac{c}{2}\sigma_z^{(k)})} -e^{\ell \beta (\Delta E - \frac{c}{2}\sigma^{(k)}_{z})} \cosh{(\ell \beta \Delta E )}  \right) W \, ,
}
\end{equation}
where $W = \mathcal{D}_{\ell}^{\mathrm{Loc}}[x_k]/(2 \cosh{(\beta \Delta E)} \cosh{(\beta \Delta E_k)})$ is a norm-bounded operator. We can thus focus on the norm
\begin{equation}
\eqalign{
    & \| e^{\ell \beta \Delta E} \cosh{(\ell \beta \Delta E - \frac{c}{2}\sigma_z^{(k)})} -e^{ \ell \beta (\Delta E - \frac{c}{2}\sigma_{z}^{(k)})} \cosh{(\ell \beta \Delta E )}  \| \\
    & \leq \| \sinh{(\beta c \sigma_z^{(k)} /2 )}\| \leq \sum_{n=0}^{\infty} \frac{1}{(2n+1)!} (\beta p)^{(2n+1)} \frac{1}{N^{(2n+1)}} \\
    & \leq \frac{1}{N} \sum_{n=0}^{\infty} \frac{1}{(2n+1)!} (\beta p)^{(2n+1)} \leq \frac{1}{N} \sinh{(\beta p)} \, ,}
\end{equation}
having exploited the convergence of the series $\sum_n \frac{1}{(2n+1)!} |b|^n < \infty$, $\forall b \in \mathbb{R}$.
As a result, we derive 
\begin{equation}
     \| \mathcal{D}_{\ell}[x_h]  -  \tilde{\mathcal{D}}_{\ell}[x_h] \| \leq \frac{1}{N}[\sinh(\beta p) \| W \| + 2x_{{\mathrm {max}}} \Gamma  ( 3 C_{\sigma} +C_{\sigma} ^{2} )]
\end{equation}

We now explicitly show that the operator-valued rates in Eq.~\eref{hopfield_rates_global} can be written as a linear combination of suitable average operators. To this end, we perform a mapping \cite{CarolloL:PRL:21} [see Fig.~\ref{Fig2}] on the all-to-all classical energy function
\begin{equation}
E=-\frac{1}{2}\sum_{i,j} w_{ij} \sigma_z^{(i)} \sigma_z^{(j)} = -\frac{1}{2 N}\sum_{\mu=1}^{p}  \left( \sum_{i=1}^{N}\xi_i^{\mu} \sigma_z^{(i)} \right)^2,
\end{equation}
where the expression of $w_{ij}$ in terms of the patterns $\xi_i^{\mu}$ has been written. We will now reorder the $p$ rows of the patterns $(\xi_1^{\mu}, ...,\xi_{N}^{\mu}) $, each one corresponding to a $ \lbrace \sigma_z^{(i)} \rbrace_{i=1,..., N}$ spin configuration. The first pattern, $\xi_i^{1}$ takes the values $\pm 1$ at random positions. We relabel the spins as follows: the ones for which $\xi_{j}^1 = +1$ are taken to the left, and the remaining ones, for which $\xi_{j}^1 = -1$, to the right, as shown in Fig.~\ref{Fig2}. Thus, there exists $\tilde{h}$ such that $\xi_h^1=1$ for $h\leq \tilde{h}$, and $\xi_h^1=-1$ otherwise. Next, we consider the second pattern, $\xi_{i}^{2}$. In the subset corresponding to $\xi_h^1=1$ we relabel the spins such that $\xi_i^{2}=1$ are moved to the left, and $\xi_i^{2}=-1$ are moved to the right. The same is done for the subset corresponding to $\xi_h^1=-1$. This procedure can be repeated up to the last pattern. For large $N$, such a mapping yields $2^p$ subset of spins, pictorially illustrated in Fig.~\ref{Fig2}, each one described by macroscopic spin operators, that interact among each other. In the following, we will denote these subsets as $\Lambda_k$, $k=1,...,2^p$. Furthermore, being $\xi_i^{\mu}$ i.i.d.~random variables, and so long as $N \gg 1 $, each pattern $(\xi_1^{\mu},...,\xi_{N}^{\mu})$  contains, at leading order, an equal number of $+1$ and $-1$. Thus each one of the $2^p$ subsets has at leading order the same number of spins, $N_{\mathrm{s}}=N / 2^p$ (assuming that $N/2^p$ is an integer number). Under this mapping, the energy function reads
\begin{equation}
E= -\frac{1}{2}\sum_{h,k=1}^{2^p N_{\mathrm{s}}}\tilde{w}_{hk} S^{(h)}_{z}S^{(k)}_z = -\frac{1}{ 2^{p+1}N_{\mathrm{s}}}\sum_{\mu=1}^{p}\left( \sum_{h=1}^{2^{p}}f_{h}^{\mu} S^{(h)}_z \right)^2,
\end{equation}
which describes the interaction between large-spin operators $S^{(h)}_z$, where $S^{(h)}_{\alpha}= \sum_{i \in \Lambda_{h}} \sigma^{(i)}_{\alpha}$ is defined by the sum of spin-$1/2$ operators belonging to the $h$-th subset $\Lambda_h$. The coefficients $f_{h}^{\mu}$ [cf.~Fig.~\ref{Fig2}], which can assume the values $\pm 1$, represent the pattern values for spins in the subset $\Lambda_h$.  Most notably, they enter the definition of $\tilde{w}_{hk}=\frac{1}{ 2^{p}} \sum_{\mu=1}^{p} f_{h}^{\mu}f_{k}^{\mu}$, which specifies the interaction coupling between the $k$-th and $h$-th large spins. Furthermore, when considering the set of spin $1/2$ belonging to the set $\Lambda_k$, the operator $\Delta E$  becomes
\begin{equation}
\Delta E_{ \Lambda_k}^{N_{s}} = \frac{1}{ N_{\mathrm{s}}} \sum_{h=1}^{2^p}\tilde{w}_{hk} S^{(h)}_{z} =  \sum_{h=1}^{2^p} \tilde{w}_{hk} \, m_{z,h}^{N_{\mathrm{s}}}
\end{equation}
where we introduced the average magnetization operator
\begin{equation}\label{macro_spin_average}
m_{\alpha,k}^{N_{\mathrm{s}}} \equiv \frac{S^{(k)}_{\alpha}}{N_{\mathrm{s}}} \, ,
\end{equation}
for $\alpha=x,y,z,$ and $k=1,...,2^p$. The mapping clarifies how to derive the average operator description that we have employed for deriving a mean field description.

\begin{figure}[t]
\centering
\includegraphics[width=0.8\textwidth]{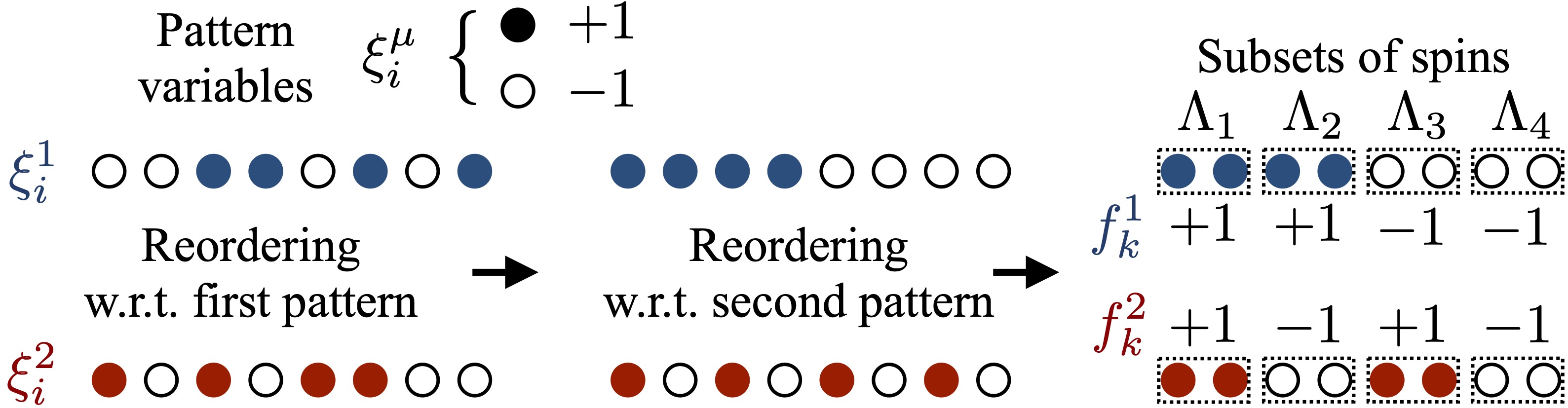}
\caption{{\bf Sketch of the mapping to large spins.} Example of the mapping discussed in the main text for $N=8$ spins and $p=2$ patterns. Each of the variables $\xi_i^\mu$ composing the patterns can assume either the value $+1$ or the value $-1$. The first step of the mapping consists in permuting the spins in a way that, after the transformation, the first pattern has all $\xi_i^1=+1$ appearing before the $\xi_i^1=-1$. This reshuffles also the structure of the second pattern. In the second step, we permute the spins inside the two sub-blocks identified by the transformed first pattern $\xi_i^1$. Spins are reordered in such a way that the second pattern $\xi_i^2$ has the values $+1$ appearing before the values $-1$ in each of the sub-block identified by the first pattern. This procedure generates $2^p$ subsets of spins $\Lambda_k$ (which for large $N$ form large-spin subsystems), such that if $m,n\in\Lambda_k$, then $\xi_m^\mu=\xi_n^\mu=f_k^\mu$, $\forall \mu$.  Here, the function $f_k^\mu$ is a representation of the pattern $\xi^\mu_i$ in terms of the subsets $\Lambda_k$. }
\label{Fig2}
\end{figure} 

We will now conclude and show that Theorem \ref{theorem} can be applied in this case. To this end, we recall that we have recovered the result of Lemma \ref{lemma_commutators} and \ref{Cor_diss_dyn} (with Lemma \ref{lemma_gen_action} and \ref{lemma_bound_mf} being actually independent of Assumption \ref{Gamma}). Hence, we need only to verify that an equivalent of Lemma \ref{Lemma-aux} holds true. It is sufficient to write
\begin{equation}
\eqalign{
 &  \Gamma_{\ell}^{\mathrm{HN}\, 2}(\Delta E_{\Lambda_k}^{N_{\mathrm{s}}})-\Gamma_{\ell}^{ \mathrm{HN}\, 2 }(\Delta E_{\Lambda_k}(t)) \\ 
 & =  \frac{e^{\ell \beta \Delta E_{\Lambda_k}^{N_{\mathrm{s}}}}\cosh(\beta \Delta E_{\Lambda_k}(t)) - e^{\ell \beta \Delta E_{\Lambda_k}(t)} \cosh(\beta \Delta E_{\Lambda_k}^{N_{\mathrm{s}}})}{2\cosh(\beta \Delta E_{\Lambda_k}^{N_{\mathrm{s}}})\cosh(\beta \Delta E_{\Lambda_k}(t))}  \\
 & = \sinh(\ell \beta (\Delta E_{\Lambda_k}^{N_{\mathrm{s}}} - \Delta E_{\Lambda_k}(t)))\frac{1}{2 \cosh(\beta \Delta E_{\Lambda_k}^{N_{\mathrm{s}}})\cosh(\beta \Delta E_{\Lambda_k}(t))}  \\
& = \sum_{h}\tilde{w}_{h k }(m_{z, h}^{{N_{\mathrm{s}}}} - m_{z, h}(t))Q_{k} \, ,
}
\end{equation}
i.e.~the difference of the operator-valued rates evaluated on averaged operator $\Delta E^{N_{\mathrm{s}}}$ and on the linear combination of mean-field variables $\Delta E(t)$, is dominated by an entire function, $\sinh(\cdot)$ times a norm-bounded one.  The power series expansion of the former is employed, and all the norm-bounded, remaining part is kept in $Q_{k}$, that reads
\begin{equation}
    Q_k = R \sum_{n=0}^{\infty} \frac{1}{(2n+1)!} (\ell \beta) ^{2n+1} \left( \sum_{h} \tilde{w}_{hk} (m_{z, h}^{{N_{\mathrm{s}}}}-m_{z, h}(t)) \right)^{2n}  ,
\end{equation}
where $R = 1/({2 \cosh{(\beta \Delta E_{\Lambda_k}^{N_{\mathrm{s}}})} \cosh{(\beta \Delta E_{\Lambda_k}(t))}})$. Thus the norm of $Q_k$ can be bounded as
\begin{equation}
\eqalign{
\|Q_k \| & = \|R \|\left\| \sum_{n=0}^{\infty} \frac{1}{(2n+1)!} (\ell \beta) ^{2n+1} \left( \sum_{h} \tilde{w}_{hk} (m_{z, h}^{{N_{\mathrm{s}}}}-m_{z, h}(t)) \right)^{2n}   \right\| \\
& \leq \| R \| \sum_{n=0}^{\infty} \frac{1}{(2n+1)!} \beta ^{(2n+1)} \left( \sum_{h} 2|\tilde{w}_{hk}| \right)^{2n} \\
& \leq\frac{ \| R \| }{ 2 \delta_{ E_{\Lambda_k}}}\sum_{n=0}^{\infty} \frac{1}{(2n+1)!} (2\beta\delta_{E_{\Lambda_k}})^{2n+1} = \frac{ \| R \| }{ 2 \delta_{E_{\Lambda_k}}} \sinh(2\beta   \delta_{E_{\Lambda_k}} ) \, ,}
\end{equation}
having exploited $\| \Delta E_{\Lambda_k} \| \leq \sum_{h} |\tilde{w}_{hk}| \equiv \delta_{E_{\Lambda_k}}$.
In this way, the proof of the lemma can be retraced, yielding 
\begin{equation}
\eqalign{
    &| \omega(A^{\dagger} e^{t \linn} [(\Gamma^{2}_{\ell}(\Delta E_{\Lambda_{k}^{{N_{\mathrm{s}}}}}) - \Gamma^{2}_{\ell}(\Delta E_{\Lambda_{k}}(t)))X] B) | \\
    & \leq C \| X \| \sum_{h} |\tilde{w}_{hk} | \sqrt{\omega(A^{\dagger } e^{t \linn}[(m_{z, h}^{{N_{\mathrm{s}}}}-m_{z, h}(t))^2] A)} \sqrt{\omega(B^{\dagger }B)} \, ,
    }
\end{equation}
with $C = \| R \| \sinh(2\beta \delta_{E_{\Lambda_{k}}})/(2 \delta_{E_{\Lambda_{k}}})$.
As such, Theorem \ref{theorem} can be applied. Using also the results on the commutator of the rates with local operators, Theorem \ref{theorem2} can be proved as well for the quantum generalization of the HNN dynamics. 

\section{Conclusions}

In this manuscript we considered many-body open quantum systems that evolve under a dynamical generator written in Lindblad form. We introduced the dissipative part of the latter as a generalization of classical stochastic dynamical generators where single-site transitions occur at a rate that depends on collective properties of the system itself. In the quantum setting, these are represented by operator-valued functions, $\Gamma$, assumed to be (real) analytic functions of average operators. We then added the coherent part of the dynamics by means of a single-particle Hamiltonian and an all-to-all two-body interacting Hamiltonian. Firstly we showed that, for large system size, the dissipative map on strictly local and average operators acts as a local dissipative map, weighted by the square of the operator-valued rates (Lemma \ref{Cor_diss_dyn}). We then moved forward to analyze the dynamics of average operators in terms of their Heisenberg equations. In fact, Theorem \ref{theorem} shows that the latter are exactly given by the mean-field equations of motions (given by factorizing expectation values of operators) in the thermodynamic limit. This is our second main result. Thirdly, we focused on the dynamics of quasi-local operators. Here, starting from the exactness of the mean-field equations, we derived the effective dynamical generator which provides their dynamics, in the thermodynamic limit (Theorem \ref{theorem2}). Finally, we showed the relevance of this results for the class of open quantum-Hopfield models in the limit of large system size and vanishing storage-capacity. 

It would be interesting to modify our approach in order to investigate a more general form of Lindblad operators, e.g., going beyond a collective all-to-all coupling Hamiltonian or permutation-invariant models. For instance, one could consider a translation-invariant Hamiltonian with two-body interactions and a generic translation-invariant dissipator, such as those emerging in the presence of light-mediated interactions (see, e.g., Ref.~\cite{williamson2020}), and develop an approach to these models by analyzing the dynamical behaviour of a suitable (possibly infinite) set of average operators defined in Fourier space. In contrast to all-to-all coupling models, where only the zero Fourier modes are relevant, translation-invariant systems require the consideration of all modes. 

\section{Acknowledgments}

EF and MM acknowledge support by the ERC Starting Grant QNets through Grant Number 804247. EF, MM and IL are grateful for funding from the Deutsche Forschungsgemeinschaft (DFG) through Grant No. 449905436. We also acknowledge funding by the DFG through the Research Unit FOR 5413/1, Grant No. 465199066 and under Germany’s Excellence Strategy – EXC-Number 2064/1 – Project number 390727645. FC~is indebted to the Baden-W\"urttemberg Stiftung for the financial support of this research project by the Eliteprogramme for Postdocs.

\appendix
\setcounter{lemma}{0}

\section{Lemmata}
\subsection{Proof of Lemma 1} \label{app_lemma_comm}

\begin{lemma}
If the function $\Gamma_\ell(\Delta_N^\ell)$ obeys Assumption \ref{Gamma}, then 
\begin{eqnarray*}
	&i) \left\| \left[\Gamma_\ell(\Delta_N^\ell), O\right] \right\| \le \frac{2N_O}{N}\|O\|\delta_\ell \gamma'(\delta_\ell) \, ,\\
	&ii) \left\| \left[\Gamma_\ell(\Delta_N^\ell), X_{N}\right] \right\| \le \frac{2}{N}\|x\| \delta_\ell  \gamma'(\delta_\ell)\, ,\\
	&iii) \left\| \left[\Gamma_\ell(\Delta_N^\ell),\left[\Gamma_\ell(\Delta_N^\ell), O\right]\right] \right\| \le \frac{4N_O^2}{N^2}\|O\|\delta_\ell^2 [\gamma'(\delta_\ell)]^2  \, , \\
	& iv)  \left\| \left[\Gamma_\ell(\Delta_N^\ell),\left[\Gamma_\ell(\Delta_N^\ell), X_N\right]\right] \right\| \le \frac{4}{N^2}\|x\|\delta_\ell^2 [\gamma'(\delta_\ell)]^2    \, ,
\end{eqnarray*}
with $O$ being any operator with strictly local support, $N_O$ the length of such support, and $X_N$ any average operator as defined in Eq.~\eref{eq:average-operators}. 
\end{lemma}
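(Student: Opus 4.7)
The plan is to reduce every bound to the basic commutator $[\Delta_N^\ell,\cdot]$ by exploiting the power-series expansion of $\Gamma_\ell$ from Assumption~\ref{Gamma} and the identity
\begin{equation*}
[A^{n},B]=\sum_{k=0}^{n-1}A^{k}[A,B]A^{n-1-k},
\end{equation*}
which gives $\|[A^{n},B]\|\le n\,\|A\|^{n-1}\,\|[A,B]\|$. Applied to $A=\Delta_N^\ell$ (with $\|\Delta_N^\ell\|\le\delta_\ell$) and then summed term-by-term against the power series $\Gamma_\ell(\Delta_N^\ell)=\sum_n c_\ell^n(\Delta_N^\ell)^n$, this yields
\begin{equation*}
\|[\Gamma_\ell(\Delta_N^\ell),B]\|\le\|[\Delta_N^\ell,B]\|\sum_{n\ge 1}n|c_\ell^n|\delta_\ell^{n-1}=\gamma'(\delta_\ell)\,\|[\Delta_N^\ell,B]\|,
\end{equation*}
so everything comes down to controlling $\|[\Delta_N^\ell,B]\|$ for $B=O$ and $B=X_N$.

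For (i), I expand $[\Delta_N^\ell,O]=\sum_\alpha r_{\ell\alpha}\frac1N\sum_{k=1}^{N}[v_\alpha^{(k)},O]$; by the strictly local support of $O$ only the $N_O$ sites inside the support contribute, and each commutator is bounded by $2\|v_\alpha\|\|O\|\le 2\|O\|$. Summing gives $\|[\Delta_N^\ell,O]\|\le (2N_O/N)\|O\|\delta_\ell$, and multiplying by $\gamma'(\delta_\ell)$ yields (i). For (ii), the double sum $[\Delta_N^\ell,X_N]=\sum_\alpha r_{\ell\alpha}N^{-2}\sum_{k,j}[v_\alpha^{(k)},x^{(j)}]$ collapses to the diagonal $k=j$, giving $\|[\Delta_N^\ell,X_N]\|\le (2/N)\|x\|\delta_\ell$, and the same factor $\gamma'(\delta_\ell)$ finishes (ii).

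For the double commutators (iii) and (iv), the key observation is that $\Delta_N^\ell$ commutes with itself, so in the expansion
\begin{equation*}
[(\Delta_N^\ell)^{n},(\Delta_N^\ell)^{k}[\Delta_N^\ell,B](\Delta_N^\ell)^{m-1-k}]=(\Delta_N^\ell)^{k}[(\Delta_N^\ell)^{n},[\Delta_N^\ell,B]](\Delta_N^\ell)^{m-1-k}
\end{equation*}
the outer powers slide through. Applying the single-commutator identity twice gives $\|[(\Delta_N^\ell)^{n},[(\Delta_N^\ell)^{m},B]]\|\le nm\,\delta_\ell^{n+m-2}\,\|[\Delta_N^\ell,[\Delta_N^\ell,B]]\|$, and summing the double power series produces the prefactor $[\gamma'(\delta_\ell)]^{2}\delta_\ell^{2}$. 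Finally, I bound the basic double commutator directly: for (iii), the triple sum $N^{-2}\sum_{k,j,\alpha,\beta}r_{\ell\alpha}r_{\ell\beta}[v_\alpha^{(k)},[v_\beta^{(j)},O]]$ is nonzero only when both $k$ and $j$ lie in the support of $O$, yielding at most $N_O^{2}$ nonzero terms each of norm $\le 4\|O\|$, hence $\|[\Delta_N^\ell,[\Delta_N^\ell,O]]\|\le (4N_O^{2}/N^{2})\|O\|\delta_\ell^{2}$. For (iv), the triple sum over $k,j,i$ in $[\Delta_N^\ell,[\Delta_N^\ell,X_N]]$ collapses to $k=j=i$, leaving $N$ nonzero terms and the bound $(4/N^{2})\|x\|\delta_\ell^{2}$.

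There is no real obstacle: the argument is pure bookkeeping, and the only slightly delicate step is noticing that all intermediate $\Delta_N^\ell$-factors commute with $\Delta_N^\ell$ itself, which is what lets the double commutator cleanly factor into two applications of the single-commutator estimate rather than producing cross-terms. Absolute convergence of the series $\gamma(z)$ and $\gamma'(z)$ from Assumption~\ref{Gamma} justifies all the term-by-term manipulations and interchanges of summation with the operator norm.
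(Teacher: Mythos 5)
Your proposal is correct and follows essentially the same route as the paper's proof: both rest on the telescoping identity $[A^n,B]=\sum_k A^k[A,B]A^{n-1-k}$ applied term-by-term to the power series of $\Gamma_\ell$, the locality of $[\Delta_N^\ell,O]$ to get the $N_O/N$ factor, and an iterated application of the same estimate for the double commutators. The only cosmetic difference is that the paper derives (ii) and (iv) by reducing $X_N$ to single-site operators via permutation invariance, whereas you collapse the double and triple sums directly to their diagonals; the bounds are identical.
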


\begin{proof}
Given an operator $O$ which is supported only on a finite number of sites, we can always find two integer numbers $k_{min}\le k_{max}<\infty$ defining its support. In particular, $k_{min}$ is the largest number for which 
$$
\left[v_\alpha^{(k)},O\right]=0\, , \qquad \forall v_\alpha ,
$$
whenever $k<k_{min}$. The integer number $k_{max}$ is instead the smallest one for which 
$$
\left[v_\alpha^{(k)},O\right]=0\, , \qquad \forall v_\alpha ,
$$
for all  $k>k_{max}$. We then say that the operator $O$ has support which extends from site $k_{min}$ to site $k_{max}$ and that $N_O=k_{max}-k_{min}+1$ is the length, or extension, of its support. 

With this observation we can proceed with the proof of $i)$. This is done by directly evaluating the commutator. We have that 
$$
\left[\Gamma_\ell\left(\Delta_N^\ell\right), O\right]=\sum_{n=0}^\infty c_\ell^n \left[\left(\Delta_N^\ell\right)^n,O\right]=\sum_{n=0}^\infty c_\ell^n \sum_{j=0}^{n-1}\left(\Delta_N^\ell\right)^j\left[\Delta_N^\ell, O\right]\left(\Delta_N^\ell\right)^{n-1-j}\, .
$$
Next we evaluate the commutator of $\Delta_N^\ell$ and the local operator $O$. This gives 
$$
\left[\Delta_N^\ell, O\right]=\frac{1}{N}\sum_{k=1}^N\left[\sum_{\alpha=1}^{d^2}r_{\ell\alpha}v_\alpha^{(k)},O\right]\, .
$$
Because of the locality of the operator $O$, we further find 
$$
\left[\Delta_N^\ell, O\right]=\frac{1}{N}\sum_{k=k_{min}}^{k_{max}}\left[\sum_{\alpha=1}^{d^2}r_{\ell\alpha}v_\alpha^{(k)},O\right]\, .
$$
We then define the operator $O^\ell:=\left[\Delta_N^\ell, O\right]$ which, because of the above observation, is a local operator supported on the same sites of $O$ and with norm 
$$
\|O^\ell\|\le \frac{2N_O}{N}\|O\|\delta_\ell\, ,
$$
where $\delta_\ell$ is defined by Eq.~\eref{delta_alpha}. Plugging back this information in the commutator $i)$ and taking appropriate norm bounds (recall that $\|\Delta_N^\ell\|\le \delta_\ell$) we have
\begin{equation}
\|\left[\Gamma_\ell\left(\Delta_N^\ell\right), O\right]\|\le \frac{2N_O}{N}\|O\|\delta_\ell \gamma'(\delta_\ell)
\label{check_i)}
\end{equation}
where the quantity $\gamma'(\delta_\ell)$ is defined as the series 
$$
\gamma'(\delta_\ell)=\sum_{n=0}^\infty |c_\ell^n|n\, \delta_\ell^{n-1} <\infty\, .
$$
The relation in Eq.~\eref{check_i)} is exactly relation $i)$ reported in the Lemma. 

We now prove $iii)$ using some of the previous results. Considering the commutator in $i)$ we have already shown that 
$$
\left[\Gamma_\ell(\Delta_N^\ell), O\right]=\sum_{n=0}^\infty c_\ell^n \sum_{j=0}^{n-1}(\Delta_N^\ell)^j O^\ell (\Delta_N^\ell)^{n-1-j}\, .
$$
Now, to prove $iii)$ we need to consider a further commutator with $\Gamma_\ell(\Delta_N^\ell)$. Using the power series definition of $\Gamma_\ell(\Delta_N^\ell)$, the double commutator can be written as 
\begin{eqnarray*}
& \left[\Gamma_\ell(\Delta_N^\ell),\left[\Gamma_\ell(\Delta_N^\ell), O\right]\right] \\
& =\sum_{n,n'=1}^\infty c_\ell^n c_\ell^{n'} \sum_{j=0}^{n-1}(\Delta_N^\ell)^j \left[\sum_{i=0}^{n'-1}(\Delta_N^\ell)^i \left[\Delta_N^\ell,O^\ell\right](\Delta_N^\ell)^{n'-1-i} \right](\Delta_N^\ell)^{n-1-j}\, .
\end{eqnarray*}
Now we focus on the operator $O^{\ell \ell}:=[\Delta_N^\ell,O^\ell]$. Expanding $\Delta_N^\ell$, we can write
$$
O^{\ell\ell}=\frac{1}{N}\sum_{k=1}^N \left[\sum_{\alpha=1}^{d^2}r_{\ell\alpha} v_\alpha^{(k)},O^\ell\right]=\frac{1}{N}\sum_{k=k_{min}}^{k_{max}} \left[\sum_{\alpha=1}^{d^2}r_{\ell\alpha} v_\alpha^{(k)},O^\ell\right]\, ,
$$
where in the second equality we used the fact that $O^\ell$ is a strictly local operator with same support as $O$. This shows that 
$$
\|O^{\ell\ell}\|\le \frac{2N_O}{N}\|O^\ell\|\delta_\ell \le \frac{4N_O^2}{N^2}\|O\|\delta_\ell^2 \, ,
$$
which we can use to find the bound in $iii)$ 
$$
\left\|\left[\Gamma_\ell(\Delta_N^\ell),\left[\Gamma_\ell(\Delta_N^\ell), O\right]\right]\right\|\le \frac{4N_O^2}{N^2}\|O\|\delta_\ell^2 \left[\gamma'(\delta_\ell)\right]^2\, . 
$$

Now, we can straightforwardly prove relation $ii)$ and $iv)$ using $i)$ and $iii)$. For $ii)$ we consider that 
$$
\left\|\left[\Gamma_\ell(\Delta_N^\ell),X_N\right]\right\|
\le \frac{1}{N}\sum_{k=0}^N\left\|\left[\Gamma_\ell(\Delta_N^\ell),x^{(k)}\right]\right\|\, .
$$
Now, the norm of the commutator on the right-hand side does not really depend on $k$ due to the permutation invariance of the operator $\Gamma_\ell(\Delta_N^\ell)$ so that we have
$$
\left\|\left[\Gamma_\ell(\Delta_N^\ell),X_N\right]\right\|\le \left\|\left[\Gamma_\ell(\Delta_N^\ell),x^{(k)}\right]\right\|\, .
$$
We can exploit the result of $i)$, noticing that $x^{(k)}$ is a local operator with support equal to $N_{x^{(k)}}=1$, to find 
$$
\left\|\left[\Gamma_\ell(\Delta_N^\ell),X_N\right]\right\|\le \frac{2}{N}\|x\|\delta_\ell\gamma'(\delta_\ell)\, .
$$
We can proceed in a similar way for $iv)$. Indeed, we have 
\begin{eqnarray*}
\left\|\left[\Gamma_\ell(\Delta_N^\ell),\left[\Gamma_\ell(\Delta_N^\ell), X_N\right]\right]\right\|&\le \frac{1}{N} \sum_{k=1}^N \left\|\left[\Gamma_\ell(\Delta_N^\ell),\left[\Gamma_\ell(\Delta_N^\ell), x^{(k)}\right]\right]\right\|\\
&\le \left\|\left[\Gamma_\ell(\Delta_N^\ell),\left[\Gamma_\ell(\Delta_N^\ell), x^{(k)}\right]\right]\right\|\, ,
\end{eqnarray*}
and since $x^{(k)}$ is local, exploiting $iii)$ we can conclude that 
$$
\left\|\left[\Gamma_\ell(\Delta_N^\ell),\left[\Gamma_\ell(\Delta_N^\ell), X_N\right]\right]\right\|\le \frac{4}{N^2}\|x\|\delta_\ell^2 [\gamma'(\delta_\ell)]^2\, .
$$

\end{proof}

\subsection{Proof of Lemma 2}\label{app_corol_eqD} 

\begin{lemma}
The maps $\mathcal{D}_\ell$ defined by Eqs.~\eref{dissipator}-\eref{jumps} with functions $\Gamma_\ell(\Delta_N^\ell)$ obeying Assumption \ref{Gamma} are such that 
\begin{eqnarray*}
	& \left\|\mathcal{D}_\ell[O]-\Gamma_{\ell}^2(\Delta_N^\ell) \mathcal{D}_\ell^{\rm Loc}[O]\right\|\le \frac{C_O}{N}\, ,\\
	& \left\|\mathcal{D}_\ell[X_{N}]-\Gamma_{\ell}^2(\Delta_N^\ell) \mathcal{D}_\ell^{\rm Loc}[X_N]\right\|\le \frac{C_{x}}{N}\, , \qquad 
\end{eqnarray*}
with 
\begin{equation}
\mathcal{D}^{\rm Loc}_\ell[A]=\frac{1}{2}\sum_{k=1}^N \left(\left[{j}_{\ell}^{ \dagger\, (k)}, A\right] {j}_{\ell}^{(k)} + {j}_{\ell}^{ \dagger\, (k)}[A, {j}_{\ell}^{(k)}]  \right).
\end{equation}
and 
$C_O$, $C_{x}$ appropriate $N$ independent constants. In the above expression, $O$ is any local operator with support on a finite number of sites, $N_O$ is the extension of its support, and $X_N$ an average operator of the single-particle operator $x$ as defined in Eq.~\eref{eq:average-operators}
\end{lemma}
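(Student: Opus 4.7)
My plan is to expand $\mathcal{D}_\ell[O]$ by substituting $J_\ell^k = j_\ell^{(k)}\Gamma_\ell(\Delta_N^\ell)$ and $J_\ell^{k\,\dagger}=\Gamma_\ell\,j_\ell^{(k)\,\dagger}$, and shuttling the collective factor $\Gamma_\ell$ through the local ones so as to assemble $\Gamma_\ell^2$ on the left. Applying the Leibniz rule $[AB,O]=A[B,O]+[A,O]B$ to both $[\Gamma_\ell j_\ell^{(k)\,\dagger},O]$ and $[O,j_\ell^{(k)}\Gamma_\ell]$ decomposes each summand as
$$
\frac{1}{2}\!\left([J_\ell^{k\,\dagger},O]J_\ell^k + J_\ell^{k\,\dagger}[O,J_\ell^k]\right) = \Gamma_\ell^2\cdot\frac{1}{2}\!\left([j_\ell^{(k)\,\dagger},O]j_\ell^{(k)} + j_\ell^{(k)\,\dagger}[O,j_\ell^{(k)}]\right) + E_k,
$$
where $E_k$ gathers all remainder commutators involving $\Gamma_\ell$. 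Summing over $k$ the leading piece reproduces $\Gamma_\ell^2\,\mathcal{D}_\ell^{\rm Loc}[O]$ exactly, so the entire task reduces to showing $\bigl\|\sum_k E_k\bigr\|\le C_O/N$.

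I would split $E_k$ into a ``structural'' piece and a ``bulk'' piece. The structural piece, produced when the rightmost $\Gamma_\ell$ is pulled past the local factor $j_\ell^{(k)\,\dagger}[O,j_\ell^{(k)}]$ (and its Hermitian counterpart), has the form $\Gamma_\ell[L_k,\Gamma_\ell]$ with $L_k$ strictly local on $\mathrm{supp}(O)\cup\{k\}$. Since $L_k=0$ whenever $k\notin\mathrm{supp}(O)$, only $N_O$ terms survive, each bounded by Lemma~\ref{lemma_commutators}(i), giving a contribution of order $1/N$. The bulk piece is $\frac{1}{2}\sum_k\bigl([\Gamma_\ell,O]\,Q_k\,\Gamma_\ell + \Gamma_\ell\,Q_k\,[O,\Gamma_\ell]\bigr)$ with $Q_k=j_\ell^{(k)\,\dagger}j_\ell^{(k)}$; setting $C=[\Gamma_\ell,O]$ and $S=\sum_k Q_k$, it equals $\frac{1}{2}(CS\Gamma_\ell - \Gamma_\ell SC)$, which I rearrange via the algebraic identity
$$
CS\Gamma_\ell - \Gamma_\ell SC = \Gamma_\ell[C,S] + [C,\Gamma_\ell]\,S + C\,[S,\Gamma_\ell].
$$

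The middle and right summands are immediately controlled: $\|[C,\Gamma_\ell]\|=\|[[\Gamma_\ell,O],\Gamma_\ell]\|=O(1/N^2)$ by Lemma~\ref{lemma_commutators}(iii) absorbs the extensive norm $\|S\|=O(N)$, and $\|[S,\Gamma_\ell]\|=O(1)$ follows from Lemma~\ref{lemma_commutators}(ii) applied to the average $S/N$, combined with $\|C\|=O(1/N)$. The main obstacle is the first summand $\Gamma_\ell[C,S]$, since the naive estimate $\|C\|\cdot\|S\|$ returns only $O(1)$. I would handle it through the Jacobi identity
$$
[[\Gamma_\ell,O],S] = -[[O,S],\Gamma_\ell] - [[S,\Gamma_\ell],O].
$$
The first Jacobi term is a commutator of $\Gamma_\ell$ with the strictly local operator $[O,S]$, supported on $\mathrm{supp}(O)$, hence $O(1/N)$ by Lemma~\ref{lemma_commutators}(i). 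For the second, I expand $[S,\Gamma_\ell]=\sum_k[Q_k,\Gamma_\ell]$ and apply Jacobi again to obtain $[[Q_k,\Gamma_\ell],O] = -[[\Gamma_\ell,O],Q_k]$ for $k\notin\mathrm{supp}(O)$; a direct power-series expansion of $\Gamma_\ell$ then shows $\|[[\Gamma_\ell,O],Q_k]\|=O(1/N^2)$ uniformly in such $k$, because both $[\Delta_N^\ell,O]$ and $[\Delta_N^\ell,Q_k]$ are individually $O(1/N)$ and Assumption~\ref{Gamma} ensures convergence of the second-derivative series $\sum_n |c_\ell^n|\,n(n-1)\delta_\ell^{n-2}$. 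Summing $O(1/N^2)$ over the $N-N_O$ bulk indices returns the desired $O(1/N)$.

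The average-operator bound follows by linearity. I apply the estimate just obtained to each single-site operator $x^{(h)}$ (so $N_O=1$ and the per-site constants are uniform in $h$), and the outer $1/N$ prefactor in $X_N=N^{-1}\sum_h x^{(h)}$ delivers the $C_x/N$ scaling. The explicit constants $C_O$ and $C_x$ are assembled by collecting the norms $\|O\|$ or $\|x\|$, $\|j_\ell\|$, $\delta_\ell$, together with $\gamma(\delta_\ell)$, $\gamma'(\delta_\ell)$ and $\gamma''(\delta_\ell)$ generated by the invocations of Lemma~\ref{lemma_commutators}.
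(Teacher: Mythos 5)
Your proof is correct and, at its core, follows the same strategy as the paper's: the Leibniz rule extracts the leading term $\Gamma_\ell^2\mathcal{D}_\ell^{\rm Loc}[O]$, and the remainder is controlled through the commutator bounds of Lemma~\ref{lemma_commutators}, with the convergence of the second-derivative series $\gamma''(\delta_\ell)$ entering at exactly the same point (the paper needs it for its terms $D_{222}'$ and $D_{222}'''$; you need it for the uniform estimate $\|[[\Gamma_\ell,O],Q_k]\|=O(1/N^2)$). Where you genuinely differ is in the bookkeeping of the dangerous $O(1)$ bulk contribution $\tfrac12(CS\Gamma_\ell-\Gamma_\ell SC)$: the paper attacks it through a cascade of nested Leibniz expansions ($D_2\to D_{21},D_{22}\to D_{221},D_{222}\to D_{222}',D_{222}'',D_{222}'''$), expanding $[\Gamma_\ell,O]$ in its power series and commuting it term by term against $j_\ell^{\dagger\,(k)}j_\ell^{(k)}$, whereas you first isolate the single hard piece $\Gamma_\ell[C,S]$ via the identity $CS\Gamma_\ell-\Gamma_\ell SC=\Gamma_\ell[C,S]+[C,\Gamma_\ell]S+C[S,\Gamma_\ell]$ and then dispose of it with two applications of the Jacobi identity, so that everything except the bulk double commutators $[[\Gamma_\ell,O],Q_k]$ reduces to direct invocations of Lemma~\ref{lemma_commutators}. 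Both routes yield the same $1/N$ scaling and essentially the same constants; yours makes it more transparent that the extensive operator $S$ is harmless precisely because it only ever appears inside commutators. One small point to make explicit: when summing $[[Q_k,\Gamma_\ell],O]$ over $k$, the $N_O$ sites inside $\mathrm{supp}(O)$ are not covered by your Jacobi reduction (there $[O,Q_k]\neq 0$), but each such term is trivially $O(1/N)$ since $\|[Q_k,\Gamma_\ell]\|=O(1/N)$ by Lemma~\ref{lemma_commutators}(i), so nothing is lost.
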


\begin{proof} 
To prove the Lemma, we start considering an operator $O$ with local support, extended over $N_O$ sites, and compute the action of $\mathcal{D}_\ell $ on it. We have
$$
\mathcal{D}_\ell[O]=\frac{1}{2}\sum_{k=1}^N\left(\left[\Gamma_\ell(\Delta_N^\ell) j_\ell^{\dagger\, (k)}, O\right]j_{\ell}^{(k)}\Gamma_\ell(\Delta_N^\ell)+\Gamma_\ell(\Delta_N^\ell)j_\ell^{\dagger \, (k)}\left[O,j_\ell^{(k)}\Gamma_\ell(\Delta_N^\ell)\right]\right)\, .
$$
Using that $[AB,C]=A[B,C]+[A,C]B$, we rewrite this as
\begin{equation}
\eqalign{
\mathcal{D}_\ell[O]=&\frac{1}{2}\sum_{k=1}^N\left(\Gamma_\ell(\Delta_N^\ell)\left[ j_\ell^{\dagger\, (k)}, O\right]j_{\ell}^{(k)}\Gamma_\ell(\Delta_N^\ell)+\Gamma_\ell(\Delta_N^\ell) j_\ell^{\dagger\, (k)}\left[O,j_\ell^{(k)}\right]\Gamma_\ell(\Delta_N^{\ell})\right)\\
+&\frac{1}{2}\sum_{k=1}^N\left(\left[\Gamma_\ell(\Delta_N^\ell), O\right]j_\ell^{\dagger\, (k)}j_{\ell}^{(k)}\Gamma_\ell(\Delta_N^\ell)+\Gamma_\ell(\Delta_N^\ell)j_\ell^{\dagger\, (k)}j_\ell^{(k)}\left[O,\Gamma_\ell(\Delta_N^\ell)\right]\right)\, .
\label{diss-two-terms}
}
\end{equation}
Let us start considering the first term on the right-hand side, which we call $D_1$. Due to the locality of $O$, we can truncate the sum to $k_{min}$ and $k_{max}$ which define the support of $O$. That is, 
$$
D_1=\frac{1}{2}\sum_{k=k_{min}}^{k_{max}}\left(\Gamma_\ell(\Delta_N^\ell)\left[ j_\ell^{\dagger\, (k)}, O\right]j_{\ell}^{(k)}\Gamma_\ell(\Delta_N^\ell)+\Gamma_\ell(\Delta_N^\ell) j_\ell^{\dagger\, (k)}\left[ O, j_{\ell}^{(k)}\right]\Gamma_\ell(\Delta_N^\ell)\right)\, .
$$
Now, we define the following operators 
$$
\tilde{O}_1=\frac{1}{2}\sum_{k=k_{min}}^{k_{max}}\left[ j_\ell^{\dagger\, (k)}, O\right]j_{\ell}^{(k)}\, , \qquad  \tilde{O}_2=\frac{1}{2}\sum_{k=k_{min}}^{k_{max}}j_\ell^{\dagger\, (k)}\left[O , j_{\ell}^{(k)}\right]\, 
$$
which are local, have the same support of $O$, and are such that $\|\tilde{O}_{1/2}\|\le N_O\|O\|\|j_\ell\|^2 $. Through such operators we write
\begin{equation}
\eqalign{
D_1&=\Gamma_\ell(\Delta_N^\ell)\tilde{O}_1\Gamma_\ell(\Delta_N^\ell)+\Gamma_\ell(\Delta_N^\ell)\tilde{O}_2\Gamma_\ell(\Delta_N^\ell)=\\
&=\Gamma_\ell^2(\Delta_N^\ell)\left(\tilde{O}_1+\tilde{O}_2\right)+\Gamma_\ell(\Delta_N^\ell)\left(\left[\tilde{O}_1,\Gamma_\ell(\Delta_N^\ell)\right]+\left[\tilde{O}_2,\Gamma_\ell(\Delta_N^\ell)\right]\right)\, .
}
\label{D_1-two-terms}
\end{equation}
Now, it is important to note that $\tilde{O}_1+\tilde{O}_2=\mathcal{D}_\ell^{\rm Loc}[O]$ and thus that 
$$
D_1=\Gamma_\ell^2(\Delta_N^\ell)\mathcal{D}_\ell^{\rm Loc}[O]+\Gamma_\ell(\Delta_N^\ell)\left(\left[\tilde{O}_1,\Gamma_\ell(\Delta_N^\ell)\right]+\left[\tilde{O}_2,\Gamma_\ell(\Delta_N^\ell)\right]\right)\, .
$$
The first term on the right-hand side, which we call $D_{11}$ is already the term which we expect the quantity $\mathcal{D}_\ell[O]$ to converge to. We thus have to show that the rest, i.e., the second term in $D_1$ and the second term on the right hand side of the Eq.~\eref{diss-two-terms}, is vanishingly small in the large $N$ limit. For what concerns the second term in Eq.~\eref{D_1-two-terms}, which we call $D_{12}$, using relation $i)$ in Lemma \ref{lemma_commutators}, we find
$$
\|D_{12}\|=\left\|\Gamma_\ell(\Delta_N^\ell)\left(\left[\tilde{O}_1,\Gamma_\ell(\Delta_N^\ell)\right]+\left[\tilde{O}_2,\Gamma_\ell(\Delta_N^\ell)\right]\right)\right\|\le 4 \gamma(\delta_\ell)\frac{N_O^2}{N}\delta_\ell \gamma'(\delta_\ell)\|O\|\|j_\ell\|^2\, .
$$
We are thus left with the second term in Eq.~\eref{diss-two-terms}. This is given by 
\begin{equation}
    \eqalign{
D_2&=\frac{1}{2}\sum_{k=1}^{N}\left(\left[\Gamma_\ell(\Delta_N^\ell),O\right]j_\ell^{\dagger \, (k)} j_\ell^{(k)} \Gamma_\ell(\Delta_N^\ell)-\Gamma_\ell(\Delta_N^\ell) j_\ell^{\dagger \, (k)} j_\ell^{(k)}\left[\Gamma_\ell(\Delta_N^\ell),O\right]\right)=\\
&=\frac{1}{2}\sum_{k=1}^{N}\left[\Gamma_\ell(\Delta_N^\ell),O\right]\left[j_\ell^{\dagger \, (k)} j_\ell^{(k)}, \Gamma_\ell(\Delta_N^\ell)\right]+
\\
&+\frac{1}{2}\sum_{k=1}^N
\left(\left[\Gamma_\ell(\Delta_N^\ell),O\right]\Gamma_\ell(\Delta_N^\ell)j_\ell^{\dagger\, (k)}j_\ell^{(k)} -\Gamma_\ell(\Delta_N^\ell)j_\ell^{\dagger\, (k)}j_\ell^{(k)}\left[\Gamma_\ell(\Delta_N^\ell),O\right]\right)\, .
    }
\end{equation}
Looking at the above equation, we split $D_2$ into two parts. We have
\begin{equation}
\eqalign{
    D_{21}&=\frac{1}{2}\sum_{k=1}^{N}\left[\Gamma_\ell(\Delta_N^\ell),O\right]\left[j_\ell^{\dagger \, (k)} j_\ell^{(k)}, \Gamma_\ell(\Delta_N^\ell)\right] \, ,\\
    D_{22}&=\frac{1}{2}\sum_{k=1}^N
\left[\left[\Gamma_\ell(\Delta_N^\ell),O\right],\Gamma_\ell(\Delta_N^\ell)j_\ell^{\dagger\, (k)}j_\ell^{(k)}\right]\, .
}
\end{equation}
Using Lemma \ref{lemma_commutators}, we have that 
$$
\|D_{21}\|\le \sum_{k=1}^N \frac{2N_O}{N^2}\|O\|\delta_\ell^2 [\gamma'(\delta_\ell)]^2\|j_\ell\|^2\le \frac{2N_O}{N}\|O\|\delta_\ell^2 [\gamma'(\delta_\ell)]^2\|j_\ell\|^2\, .
$$
Next we focus on $D_{22}$. We can write it as $$
D_{22}=\frac{1}{2}\sum_{k=1}^N \left[\left[\Gamma_\ell(\Delta_N^\ell),O\right],\Gamma_\ell(\Delta_N^\ell)\right]j_\ell^{\dagger\, (k)}j_\ell^{(k)}+\frac{1}{2}\sum_{k=1}^N \Gamma_\ell(\Delta_N^\ell)\left[\left[\Gamma_\ell(\Delta_N^\ell),O\right],j_\ell^{\dagger\, (k)}j_\ell^{(k)}\right]\, .
$$
Due to Lemma \ref{lemma_commutators}, the first term above, which we call $D_{221}$, is bounded by 
$$
\left\|D_{221}\right\|=\left\|\frac{1}{2}\sum_{k=1}^N \left[\left[\Gamma_\ell(\Delta_N^\ell),O\right],\Gamma_\ell(\Delta_N^\ell)\right]j_\ell^{\dagger\, (k)}j_\ell^{(k)}\right\|\le \frac{2N_O^2}{N}\|O\|\delta_\ell^2 [\gamma'(\delta_\ell)]^2 \|j_\ell\|^2\, .
$$
For the second term of $D_{22}$, which we call $D_{222}$, we use that (see proof of Lemma \ref{lemma_commutators})
$$
\left[\Gamma_\ell(\Delta_N^\ell),O\right]=\sum_{n=0}^\infty c_\ell^n\sum_{i=0}^{n-1}(\Delta_N^\ell)^i O^\ell (\Delta_N^\ell)^{n-1-i} \, ,$$
with
$$
\quad O^\ell =\left[\Delta_N^\ell,O\right]=\frac{1}{N}\sum_{k=k_{min}}^{k_{max}}\left[\sum_{\alpha=1}^{d^2}r_{\ell\alpha} v_\alpha^{(k)}\right] \, .
$$
Because of this, we have that 
$$
\left[\left[\Gamma_\ell(\Delta_N^\ell),O\right],j_\ell^{\dagger \, (k)}j_\ell^{(k)}\right]=\sum_{n=0}^\infty c_\ell^n \sum_{i=0}^{n-1} \left[(\Delta_N^\ell)^i O^\ell (\Delta_N^\ell)^{n-1-i},j_\ell^{\dagger\, (k)}j_\ell^{ (k)}\right]\, , 
$$
and thus 
\begin{equation}
\eqalign{
    D_{222}&=\frac{1}{2}\sum_{k=1}^N\Gamma_\ell(\Delta_N^\ell)\sum_{n=0}^\infty \sum_{i=0}^{n-1} c_\ell^n \left(\left[(\Delta_N^\ell)^i,j_\ell^{\dagger \, (k)}j_\ell^{(k)}\right]O^\ell (\Delta_N^\ell)^{n-1-i}\right)+\\
    &+\frac{1}{2}\sum_{k=1}^N\Gamma_\ell(\Delta_N^\ell)\sum_{n=0}^\infty \sum_{i=0}^{n-1} c_\ell^n \left((\Delta_N^\ell)^i\left[O^\ell,j_\ell^{\dagger \, (k)}j_\ell^{(k)}\right] (\Delta_N^\ell)^{n-1-i}\right)+\\
    &+\frac{1}{2}\sum_{k=1}^N\Gamma_\ell(\Delta_N^\ell)\sum_{n=0}^\infty \sum_{i=0}^{n-1} c_\ell^n \left((\Delta_N^\ell)^i O^\ell \left[(\Delta_N^\ell)^{n-1-i},j_\ell^{\dagger \, (k)}j_\ell^{(k)}\right]\right)\, .
}
    \label{D_222}
\end{equation}
We note that
$$
\| [(\Delta_N^\ell)^n, j_\ell^{\dagger\, (k)} j_\ell^{(k)} ] \| \le \frac{2}{N} n\delta_\ell^{n}\|j_\ell\|^2 \, ,
$$
and that 
$$
\|[O^\ell,j_{\ell}^{\dagger \, (k)}j_\ell^{(k)}]\|\le \frac{4N_O}{N}\delta_\ell \|O\| \|j_\ell\|^2\, ,\qquad \mbox{ if } \quad k\in[k_{min},k_{max}]\, ,
$$
or $\|[O^\ell,j_{\ell}^{\dagger \, (k)}j_\ell^{(k)}]\|=0$ otherwise. 

Diving into three terms,  $D_{222}',D_{222}'',D_{222}'''$, the three terms appearing in Eq.~\eref{D_222}, through the above bounds we find
\begin{equation}
    \eqalign{
        \|D_{222}'\|&\le \frac{N\gamma(\delta_\ell)}{2}\sum_{n=0}^\infty \sum_{i=0}^{n-1}|c_\ell^n|\frac{2}{N} i \delta_\ell^{n-1} \|j_\ell\|^2\frac{2N_O}{N}\|O\|\delta_\ell \\
        &\le  \frac{2N_O\gamma(\delta_\ell)}{N} \|j_\ell\|^2\|O\|\delta_\ell^2\sum_{n=0}^\infty |c_\ell^n|n^2 \delta_\ell^{n-2}\\
        &= \frac{2N_O\gamma(\delta_\ell)}{N}\|j_\ell\|^2\|O\|\delta_\ell^2 \gamma''(\delta_\ell) \, ,
    }
\end{equation}
as well as 
\begin{equation}
    \eqalign{
        \|D_{222}''\|&\le \frac{N_O\gamma(\delta_\ell)}{2}\sum_{n=0}^\infty  |c_\ell^n| \delta_\ell^{n-1}(n-1)\frac{4N_O}{N}\delta_\ell \|O\|\|j_\ell\|^2\\
        &\le \frac{2N_O^2\gamma(\delta_\ell)}{N}\|j_\ell\|^2\delta_\ell \|O\|\gamma'(\delta_\ell)\, ,
    }
\end{equation}
and 
\begin{equation}
    \eqalign{
        \|D_{222}'''\|\le \frac{2N_O\gamma(\delta_\ell)}{N}\|j_\ell\|^2\delta_\ell^2 \|O\|\gamma''(\delta_\ell)\, ,
    }
\end{equation}
just like for $D_{222}'$. 

With all of these bounds, we can now prove the first part of the Lemma. We have
$$
\mathcal{D}_\ell[O]=D_{11}+D_{12}+D_{21}+D_{221}+D_{222}'+D_{222}''+D_{222}'''\, ,
$$
from which we find 
$$
\|\mathcal{D}_\ell[O]-D_{11}\|\le \frac{C_O}{N}\, ,
$$
where $C_O$ is an $N$-independent constant obtained by combining all of the above bounds, and reads
$$
C_O= 2 N_O \|O\| \| j_{\ell} \|^2 \left\lbrace \delta_{\ell}\gamma'(\delta_{\ell})[ \delta_{\ell} \gamma'(\delta_{\ell})(1+N_O)+ 3 N_O \gamma(\delta_{\ell})]  + 2 \gamma(\delta_{\ell}) \delta^2_{\ell} \gamma''(\delta_{\ell}) \right\rbrace.
$$

Now, considering that 
$$
\mathcal{D}_\ell[X_N]=\frac{1}{N}\sum_{k=1}^{N}\mathcal{D}_\ell[x^{(k)}]\, , 
$$
we find 
$$
\left\|\frac{1}{N}\sum_{k=1}^N\left(\mathcal{D}_\ell[x^{(k)}]-\mathcal{D}_\ell^{\rm Loc}[x^{(k)}]\right)\right\|\le \left\|\mathcal{D}_\ell[x^{(k)}]-\mathcal{D}_\ell^{\rm Loc}[x^{(k)}]\right\|\le \frac{C_{x^{(k)}}}{N}\, ,
$$
where $C_{x^{(k)}} $ is an $N$-independent constant reading
$$
C_{x^{(k)}} = 2 \| x \| \| j_{\ell} \|^2 \left\lbrace [2 \delta_{\ell}\gamma'(\delta_{\ell}) +3\gamma(\delta_{\ell})]\delta_{\ell}\gamma'(\delta_{\ell}) + 2 \gamma(\delta_{\ell}) \delta^2_{\ell} \gamma''(\delta_{\ell})\right\rbrace. 
$$

\end{proof}

\subsection{Proof of Lemma 3}\label{app_proof_L3}

\begin{lemma}
Given the generator $\lin_N$ specified by Eqs.~\eref{Lindblad}-\eref{jumps}, with functions $\Gamma_\ell(\Delta_N^\ell)$ obeying Assumption \ref{Gamma}, we have that 
$$
\| \mathcal{L}_N[m^{N}_{\alpha}] -  f_\alpha(\vec{m}^N)\| \leq \frac{C_{L}}{N} \, ,
$$
where 
\begin{eqnarray*}
& f_\alpha(\vec{m}^N) =  i \sum_{\beta = 1}^{d^2} A_{\alpha \beta} m^{N}_{\beta} + i \sum_{\beta,\gamma=1}^{d^2} B_{\alpha \beta \gamma} m^{N}_{\beta} m^{N}_{\gamma} + \sum_{\ell=1}^{q}\sum_{\beta=1}^{d^2} M_{\ell \alpha }^{\beta} \Gamma^{2}_{\ell}(\Delta^{\ell}_N) m^{N}_{\beta} \\
& A_{\alpha \beta} = \sum_{\beta'} \epsilon_{\beta'} a_{\beta' \alpha}^{\beta} \quad B_{\alpha \beta \gamma} = \sum_{\beta'} a_{\beta' \alpha}^{\gamma}(h_{\beta \beta'} + h_{\beta' \beta});
\end{eqnarray*} 
$M$ is a real matrix, such that the action of $\mathcal{D}^{\mathrm{Loc}}_{\ell}[\cdot]$ on an element of the single-site operator basis $v_\alpha^{(k)}$ reads
$$
\mathcal{D}_\ell^{\rm Loc}[v_\alpha^{(k)}]=\sum_{\beta=1}^{d^2}M_{\ell \alpha}^{\beta} v_\beta^{(k)}\, , 
$$
and $C_{L}$ is an $N$ independent constant.
\end{lemma}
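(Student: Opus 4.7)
The plan is to split $\mathcal{L}_N[m_\alpha^N]$ into its Hamiltonian and dissipative contributions and show that each matches the corresponding piece of $f_\alpha(\vec m^N)$ up to a norm-$O(1/N)$ remainder. Writing $\mathcal{L}_N[m_\alpha^N] = i[H,m_\alpha^N] + \sum_{\ell=1}^q \mathcal{D}_\ell[m_\alpha^N]$, I would treat the three pieces (single-site Hamiltonian, two-body Hamiltonian, dissipators) in turn, and at the end collect the errors into a single $N$-independent constant $C_L$.

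For the single-site Hamiltonian $\sum_{k,\beta'}\epsilon_{\beta'}v_{\beta'}^{(k)}$, the commutator with $m_\alpha^N=\frac{1}{N}\sum_j v_\alpha^{(j)}$ collapses to diagonal terms because distinct sites commute: $[v_{\beta'}^{(k)},v_\alpha^{(j)}]=\delta_{kj}\sum_\beta a_{\beta'\alpha}^\beta v_\beta^{(k)}$. Summing over $k$ gives exactly $i\sum_\beta A_{\alpha\beta}m_\beta^N$ with no remainder. For the two-body part $H_2=\frac{1}{N}\sum_{k,j,\beta,\beta'}h_{\beta\beta'}v_\beta^{(k)}v_{\beta'}^{(j)}$, I expand $[v_\beta^{(k)}v_{\beta'}^{(j)},v_\alpha^{(i)}]$ via Leibniz, use that the commutator vanishes unless $i\in\{k,j\}$, and separate the $k\neq j$ terms from the $k=j$ diagonal. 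The off-diagonal piece produces a double sum $\frac{1}{N^2}\sum_{k\neq j}v_\gamma^{(k)}v_{\beta'}^{(j)}$, which differs from $m_\gamma^N m_{\beta'}^N$ by $\frac{1}{N^2}\sum_k v_\gamma^{(k)}v_{\beta'}^{(k)}$, bounded in norm by $1/N$. The $k=j$ contribution itself is $O(1/N)$ in norm because the prefactor $1/N$ is multiplied by a single sum over $N$ sites of bounded operators. Finally, the two orderings $m_\gamma^N m_{\beta'}^N$ and $m_{\beta'}^N m_\gamma^N$ coincide up to another $O(1/N)$ term because the commutator of average operators is norm-bounded by $2/N$. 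After relabeling dummy indices and symmetrizing $h_{\beta\beta'}\mapsto h_{\beta\beta'}+h_{\beta'\beta}$, the two $k\neq j$ contributions combine into $i\sum_{\beta\gamma}B_{\alpha\beta\gamma}m_\beta^N m_\gamma^N$ exactly.

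For the dissipative piece I invoke Lemma~\ref{Cor_diss_dyn} with $X_N=m_\alpha^N$ (so $x=v_\alpha$), giving $\|\mathcal{D}_\ell[m_\alpha^N]-\Gamma_\ell^2(\Delta_N^\ell)\mathcal{D}_\ell^{\rm Loc}[m_\alpha^N]\|\le C_{v_\alpha}/N$. Since $\mathcal{D}_\ell^{\rm Loc}$ acts independently on each site, $\mathcal{D}_\ell^{\rm Loc}[m_\alpha^N]=\frac{1}{N}\sum_k\mathcal{D}_\ell^{\rm Loc}[v_\alpha^{(k)}]=\sum_\beta M_{\ell\alpha}^\beta m_\beta^N$, where the real matrix elements $M_{\ell\alpha}^\beta=\tr(\mathcal{D}_\ell^{\rm Loc}[v_\alpha]v_\beta)$ are well defined (and reality of $M$ follows from self-adjointness of $v_\alpha$ together with the hermiticity-preservation of the GKSL dissipator on hermitian operators). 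Summing over $\ell$ produces the third term of $f_\alpha(\vec m^N)$ with error bounded by $\sum_\ell C_{v_\alpha}/N$.

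Collecting everything, the discrepancy $\mathcal{L}_N[m_\alpha^N]-f_\alpha(\vec m^N)$ is a sum of finitely many explicit terms, each of which I have bounded in norm by $\mathrm{const}/N$. Adding these bounds gives the announced inequality with $C_L$ an explicit polynomial in $\epsilon_\alpha$, $h_{\alpha\beta}$, $\|v_\alpha\|$, $\|j_\ell\|$, $d$, $q$, $\delta_\ell$, $\gamma(\delta_\ell)$, $\gamma'(\delta_\ell)$, and $\gamma''(\delta_\ell)$. The main obstacle I anticipate is not any deep analytic step, but the careful bookkeeping for the two-body Hamiltonian piece: one must simultaneously control the $k=j$ diagonal contribution, the difference between the restricted sum and the product of averages, and the non-commutativity of $m_\beta^N$ and $m_\gamma^N$, all of which are $O(1/N)$ and must be kept clearly separated in order to state an explicit $C_L$.
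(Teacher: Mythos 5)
Your proposal is correct and follows essentially the same route as the paper's proof: the single-site Hamiltonian term is matched exactly, the two-body term reduces to $i\sum_{\beta\gamma}B_{\alpha\beta\gamma}m^N_\beta m^N_\gamma$ up to the $O(1/N)$ reordering commutator of average operators, and the dissipative part is handled by Lemma~\ref{Cor_diss_dyn} together with $\mathcal{D}^{\rm Loc}_\ell[m^N_\alpha]=\sum_\beta M_{\ell\alpha}^\beta m^N_\beta$. The only (harmless) difference is that the paper keeps the unconstrained double sum in the two-body Hamiltonian, so the products of averages appear exactly and the diagonal/off-diagonal bookkeeping you describe is not needed.
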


\begin{proof}
The proof of this Lemma simply requires the calculation of the action of the Lindblad generator on $m^{N}_{\alpha}$, 
\begin{equation*}
    \lin_N[m_{\alpha}^N]=i[H,m_{\alpha}^N]+\sum_{\ell=1}^{q} \mathcal{D}_\ell[m_{\alpha}^N]\,.
\end{equation*} 
The single-particle Hamiltonian contribution reads
\begin{equation}
\eqalign{
L_1 & =   i\sum_{\beta'}\epsilon_{\beta'}\sum_{k}[v_{\beta'}^{(k)},m_{\alpha}^N]= i\sum_{\beta'}\epsilon_{\beta'}\sum_{k}[v_{\beta'}^{(k)},\frac{1}{N}\sum_{k'} v_{\alpha}^{(k')}] \\ 
&= i \sum_{\beta', \beta} \epsilon_{\beta'} a_{\beta' \alpha}^{\beta} \frac{1}{N}\sum_{k} v_{\beta}^{(k)} \\
& = i \sum_{\beta', \beta} \epsilon_{\beta'} a_{ \beta' \alpha }^{\beta}m^{N}_{\beta} = i \sum_{\beta = 1}^{d^2} A_{\alpha \beta} m^{N}_{\beta}.
}
\end{equation}
The contribution of the all-to-all, two-particle interaction gives instead
\begin{equation*}
\eqalign{
L_2 & = \frac{1}{N} \sum_{k,l}\sum_{\beta, \beta'} h_{\beta \beta'}[v_{\beta}^{(k)}v_{\beta'}^{(l)}, \frac{1}{N}\sum_{i} v_{\alpha}^{(i)}] \\ &= \sum_{\gamma} \sum_{\beta, \beta'} h_{\beta \beta'} (a_{\beta' \alpha}^{\gamma}m^{N}_{\beta}m^{N}_{\gamma} + a_{\beta \alpha}^{\gamma} m^{N}_{\gamma} m^{N}_{\beta'})  \\ 
& = \sum_{\gamma, \beta} B_{\alpha \beta \gamma} m^{N}_{\beta}m^{N}_{\gamma} + \sum_{\gamma} \sum_{\beta, \beta'}h_{\beta' \beta} a_{\beta' \alpha}^{\gamma} [m_{\gamma}^N, m_{\beta}^N].
}
\end{equation*}
We can see that the last term in the second line, that we will denote as $L_{22},$ has a vanishing norm in the thermodynamic limit. Indeed, it is
 $[m_{\gamma}^N, m_{\beta}^N] = \frac{1}{N^2} \sum_k [v_{\gamma}^{(k)}, v_{\beta}^{(k)}]= \frac{1}{N} \sum_{\eta} a_{\gamma \beta}^{\eta} m_{\eta}^{N}$, so that
 \begin{equation}
     \| L_{22}\| \leq \frac{1}{N} d^8 h_{{\mathrm {max}}} a_{max}^2 \, , 
 \end{equation}
where $h_{{\mathrm {max}}} = {\mathrm {max}}_{\beta,\beta'} h_{\beta, \beta'}$ , and $a_{{\mathrm {max}}} = {\mathrm {max}}_{\alpha, \beta, \gamma} a_{\alpha  \beta}^{\gamma}$.
As for the dissipative term, that we denote as $L_3= \sum_{\ell=1}^{q} \mathcal{D}_{\ell}[m^N_{\alpha} ]$, from Lemma \ref{Cor_diss_dyn} it is
\begin{equation*}
\eqalign{
\| L_3 - \sum_{\ell=1}^{q} \Gamma_{\ell}^{2}(\Delta_{N}^{\ell}) \mathcal{D}^{\mathrm{Loc}}_{\ell}[m^{N}_{\alpha} ] \| \leq q\frac{C_v}{N} \, ,
}
\end{equation*}
where $C_v = {\mathrm {max}}_{\forall \ell} \left\lbrace 2 \| j_{\ell} \|^2  [2\delta_{\ell} \gamma'(\delta_{\ell}) +3\gamma(\delta_{\ell})] \delta_{\ell} \gamma'(\delta_{\ell}) +2 \gamma(\delta_{\ell}) \delta^2_{\ell} \gamma''(\delta_{\ell}) \right\rbrace$. By considering the three contribution $L_{1,2,3}$, it is
\begin{equation}
    \|L_1 + L_2+ L_3 - f_{\alpha}(\vec{m}^{N}) \| \leq \frac{1}{N} (d^8 h_{{\mathrm {max}}} a^{2}_{{\mathrm {max}}} +qC_v) \, ,
\end{equation}
from which we find $C_{L} = d^8 h_{{\mathrm {max}}} a^{2}_{{\mathrm {max}}} +qC_v$.
\end{proof} 

\subsection{Proof of Lemma 4}\label{app_proof_L4}

\begin{lemma}
The system of equations \eref{eq:mean-field} with initial conditions $m_\alpha(0)$, defined by a quantum state $\omega$ as in Eq.~\eref{init-cond}, has a unique solution for $t\in[0,\infty)$. Moreover, one has 
$$
|m_\alpha(t)|\le \|v_\alpha\|\le 1\, , \qquad \forall t\, .
$$
\end{lemma}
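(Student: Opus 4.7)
The plan is to prove local existence and uniqueness of the mean-field ODEs via the Picard--Lindel\"of theorem, then to establish the a priori bound $|m_\alpha(t)| \le \|v_\alpha\| \le 1$ by reinterpreting the mean-field trajectory as the single-particle expectations of a self-consistent Lindblad evolution, and finally to extend the local solution to $[0,\infty)$ using that bound.

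First I would observe that the right-hand side $f_\alpha(\vec m)$ of \eref{eq:mean-field} is a polynomial of degree two in $\vec m$ multiplied by the scalar functions $\Gamma_\ell^2(\Delta_\ell)$ with $\Delta_\ell = \sum_\alpha r_{\ell\alpha} m_\alpha$. Under Assumption \ref{Gamma}, $\Gamma_\ell$ is real-analytic on $\mathbb{R}$ with infinite radius of convergence, hence $f_\alpha:\mathbb{R}^{d^2}\to\mathbb{C}$ is itself real-analytic and, in particular, locally Lipschitz. Picard--Lindel\"of then yields a unique local solution $\vec m(t)$ on some maximal interval $[0,T_{\max})$ with initial condition given by Eq.~\eref{init-cond}.

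To obtain the a priori bound, I would construct on $M_d(\mathbb{C})$ the time-dependent single-particle Lindblad generator
\begin{equation*}
\mathcal{L}^{*}_{\mathrm{mf},t}[\rho] = -i\bigl[h(t),\rho\bigr] + \sum_\ell \Gamma_\ell^2(\Delta_\ell(t))\Bigl(j_\ell\,\rho\,j_\ell^\dagger - \tfrac{1}{2}\{j_\ell^\dagger j_\ell,\rho\}\Bigr),
\end{equation*}
with $h(t)=\sum_\alpha \epsilon_\alpha v_\alpha + \sum_{\alpha,\beta} h_{\alpha\beta}\bigl[m_\alpha(t) v_\beta + m_\beta(t) v_\alpha\bigr]$, feeding in the trajectory $\vec m(t)$ constructed in the first step. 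At every $t$ this has proper Lindblad form, with Hermitian $h(t)$ and non-negative rates $\Gamma_\ell^2(\Delta_\ell(t))\ge 0$, so the associated time-ordered propagator is completely positive and trace-preserving. Taking $\rho(0)$ to be the single-particle reduction of the translation-invariant state $\omega$, for which $\mathrm{tr}(\rho(0) v_\alpha) = m_\alpha(0)$, the resulting $\rho(t)$ is a bona fide density matrix for all $t\in[0,T_{\max})$. A direct computation of $\dot{\tilde m}_\alpha(t) = \mathrm{tr}\bigl(\rho(t)\,\mathcal{L}_{\mathrm{mf},t}[v_\alpha]\bigr)$ for $\tilde m_\alpha(t):=\mathrm{tr}(\rho(t) v_\alpha)$ shows that $\tilde{\vec m}(t)$ solves the same mean-field ODE \eref{eq:mean-field} as $\vec m(t)$, with the same initial data; the key step is matching the structure constants arising from the commutator $[h(t),v_\alpha]$ to the tensors $A_{\alpha\beta}$, $B_{\alpha\beta\gamma}$ of Lemma \ref{lemma_gen_action}, and matching the action $\mathcal{D}_\ell^{\mathrm{Loc}}$ on $v_\alpha$ to $M_{\ell\alpha}^\beta$. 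Uniqueness then forces $\tilde m_\alpha(t) = m_\alpha(t)$, and since $\mathrm{tr}(v_\alpha^2) = 1$ implies $\|v_\alpha\| \le 1$, we conclude
\begin{equation*}
|m_\alpha(t)| = |\mathrm{tr}(\rho(t) v_\alpha)| \le \|\rho(t)\|_1 \|v_\alpha\| = \|v_\alpha\| \le 1 .
\end{equation*}

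Global existence follows immediately: the a priori bound confines $\vec m(t)$ to the compact set $\{|m_\alpha|\le 1\}$ on $[0,T_{\max})$, the vector field $f$ is bounded there, and so finite-time blow-up is ruled out; hence $T_{\max}=+\infty$. The main obstacle, in my view, is the identification step in the middle: one must carefully verify that the expectations $\tilde m_\alpha(t)$ computed from the single-particle Lindblad flow, which is driven by the external trajectory $\vec m(t)$, solve precisely the mean-field ODE along the diagonal $\tilde{\vec m}=\vec m$. The bookkeeping of commutator structure constants and of the dissipative map on basis operators is routine but must be done explicitly; once in place, uniqueness of the ODE closes the argument.
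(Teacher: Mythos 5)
Your proposal is correct and follows essentially the same route as the paper: the paper likewise realizes the mean-field trajectory as single-particle expectations under the self-consistently driven auxiliary Lindblad generator $\tilde{\mathcal{L}}_t$ (in the Heisenberg rather than the Schr\"odinger picture), deduces $|m_\alpha(t)|\le 1$ from contractivity of that legitimate quantum evolution, and concludes global existence from the resulting a priori bound. The only point worth tightening is your identification step: since the auxiliary flow is driven by the \emph{external} trajectory $\vec m(t)$, both $\tilde m_\alpha$ and $m_\alpha$ satisfy the same linear non-autonomous system $\dot y_\alpha=\sum_\nu G_{\alpha\nu}(t)\,y_\nu$ with identical initial data, and it is uniqueness for this linear system (not for the nonlinear mean-field ODE) that forces $\tilde m_\alpha\equiv m_\alpha$ --- precisely the paper's argument that the difference $y_\mu=\omega(v_\mu)-m_\mu$ vanishes identically.
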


\begin{proof} We write the system of differential equations appearing in Eq.~\eref{eq:mean-field} in a vector form as 
$$
\frac{d}{dt}\vec{m}=\vec{f}(\vec{m})\, , 
$$
where $\vec{m}=(m_1,m_2,\dots m_{d^2})$ and $\vec{f}(\vec{m})=(f_1(\vec{m}),f_2(\vec{m}),\dots f_{d^2}(\vec{m}))$. The initial condition for the above differential equations is given by $\vec{m}(0)$ which is obtained as the limit 
$$
m_\alpha(0)=\lim_{N\to\infty}\omega( m_\alpha^N)\, .
$$
The fact that the initial condition is obtained from a well-defined quantum state $\omega$ means that we have $|m_\alpha(0)|\le 1$ for all $\alpha=1,2,\dots d^2$. \\

The functions $f_\alpha(\vec{m})$ are made by polynomial terms and by the functions $\Gamma_\alpha$ which are continuous and differentiable by assumption. As such, we have that $\vec{f}$ is continuous and differentiable in the whole $\mathbb{R}^{d^2}$, i.e., $\vec{f}\in C^1(\mathbb{R}^{d^2})$. By the fundamental existence and uniqueness theorem, we can thus conclude that the system of differential equations has a unique solution $\vec{m}(t)$ in the (right) maximal interval of existence $t\in[0,T)$, for $T>0$. 

In order to show that for the above system of equations, $T=\infty$, we need to demonstrate that $\vec{m}(t)$ is contained in a compact set $K\subset \mathbb{R}^{d^2}$. Indeed, whenever $\vec{m}(t)$ belongs to a compact set (i.e., whenever this is bounded), one can conclude that $T=+\infty$. This is the contraposition of the statement that, whenever $T<\infty$, there must exist a time $t\in(0,T)$ such that the solution of the differential equation $\vec{m}\notin K$ with $K$ any compact set in $\mathbb{R}^{d^2}$ (see, e.g., Theorem 3 in Chapter 2 of Ref.~\cite{Perko13}). In few words, we need to show that all the $m_\alpha(t)$ remain bounded. 

To this end, we will compare the time evolution of the variable $m_\alpha$ with the time evolution of the average of the operators $v_\alpha$ that can be obtained through an effective dynamics. Let us thus consider the auxiliary dynamical generator (see also Theorem \ref{theorem2} in the main text)
\begin{equation}
\tilde{\mathcal{L}}_t\left[\cdot \right]= i\left[\tilde{H},\cdot\right]+\sum_{\ell}\Gamma^2_\ell(\Delta^\ell(t))\mathcal{D}_\ell^{\rm Loc}\left[\cdot\right]\, ,
    \label{app:ql-gen}
\end{equation}
where $\mathcal{D}_\ell^{\rm Loc}$ is the dissipator introduced in Eq.~\eref{D_loc}, $\Delta_\ell(t)$ is the linear combination of mean-field variables  
$$
\Delta_\ell(t)=\sum_\beta r_{\ell\beta} m_\beta(t)\, , 
$$
and 
$$
\tilde{H}=\sum_{k=1}^N \sum_{\alpha=1}^{d^2} \epsilon_\alpha v_\alpha^{(k)}+\sum_{k=1}^N \sum_{\alpha,\beta=1}^{d^2} h_{\alpha \beta} \left(m_\alpha(t) v_\beta^{(k)}+m_\beta(t) v_\alpha^{(k)}\right)\, .
$$
Since the functions $m_\alpha(t)$ are well-defined in the interval $[0,T)$, the above generator is also well-defined in such interval. The above generator acts on the different single-particles separately and implements a permutation invariant dynamics. We now calculate the Heisenberg equations of motion for the single-particle observables $v_\alpha^{(k)}$ at a given site $k$. We find that 
$$
\frac{d}{dt}v_\mu^{(k)}=\sum_{\nu=1}^{d^2} G_{\mu \nu} v_\nu^{(k)}\, , 
$$
where we have 
$$
G_{\mu \nu}=i\sum_{\alpha=1}^{d^2} \epsilon_\alpha a_{\alpha \mu}^\nu+i\sum_{\alpha, \beta=1}^{d^2} (h_{\alpha \beta}m_\alpha a_{\beta\mu}^\nu+h_{\alpha\beta}m_\beta a_{\alpha\mu}^\nu)+\sum_{\ell=1}^q \Gamma_\ell^2(\Delta_\ell)M_{\ell \mu }^\nu \, .
$$
In the above equations, we have dropped the time dependence from all operators and mean-field variables for compactness. 
Taking the expectation value of the operators $v_\mu$ with a translation-invariant quantum state $\omega$, we find the following system of differential equations 
$$
\frac{d}{dt}\omega (v_\mu)=\sum_{\nu=1}^{d^2} G_{\mu\nu}\omega (v_\nu)\, ,
$$
and we pick the initial state to be such that $\omega(v_\mu)(0)=m_\mu(0)$. Inspecting the structure of the functions $f_\mu$, it is possible to see that the mean-field equations can actually be recast as 
$$
\frac{d}{dt}m_\mu =\sum_{\mu,\nu=1}^{d^2} G_{\mu\nu} m_\nu \,.
$$
We thus introduce the functions $y_\mu=\omega(v_\mu)-m_\mu$, for which we find the following system of differential equations
$$
\frac{d}{dt}y_\mu =\sum_{\mu,\nu=1}^{d^2} G_{\mu\nu} y_\nu \,.
$$
This is a system of first-order linear differential equations with time-dependent coefficients and thus, since $y_\mu(0)=0$ $\forall \mu$ by construction, we have that $y(t)\equiv 0$. This allows us to conclude that  $m_\mu (t)=\omega(v_\mu)(t)$. Then, we note that $\omega(v_\mu)(t)=\omega (v_\mu(t))$ and since the operator dynamics $v_\mu(t)$ is implemented by a time-dependent contractive map we have that $\|v_\mu(t)\|\le \|v_\mu\|=1$, which in turns implies
$$
|m_\mu(t)|=|\omega(v_\mu)(t)|\le 1 \, .
$$
\end{proof}

\subsection{Proof of Lemma 5}\label{proof_Lemma-aux}

\begin{lemma}

The convergence of the squared operator-valued rates to the same rates computed in their mean-field scalar function is dominated by the convergence of the mean-field operator to the mean-field scalar functions, namely we have that
\begin{eqnarray*}
& |\omega\left(A^\dagger e^{t\mathcal{L}_N}\left[(\Gamma_\ell^2(\Delta_N^\ell)-\Gamma_\ell^2(\Delta_\ell(t)))X\right] B\right)| \\
& \le  C \| X \| \sum_{\alpha=1}^{d^2} |r_{\ell \alpha}| \sqrt{\omega(A^{\dagger} e^{t \linn}[(m_{\alpha}^N-m_{\alpha}(t))^2]A)}\sqrt{\omega(B^{\dagger}B)} \, ,
\end{eqnarray*}
where $C=2\gamma(\delta_{\ell})\gamma'(\delta_\ell)$.
\end{lemma}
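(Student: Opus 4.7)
The plan is to rewrite $\Gamma_\ell^2(\Delta_N^\ell) - \Gamma_\ell^2(\Delta_\ell(t))$ in a factored form that exhibits the scalar differences $m_\alpha^N - m_\alpha(t)$ explicitly, and then to apply the generalized Cauchy--Schwarz inequality of Lemma \ref{Lemma-dilation} to reduce the estimate to the quantity $\omega\!\bigl(A^\dagger e^{t\mathcal{L}_N}[(m_\alpha^N - m_\alpha(t))^2]A\bigr)$. Since $\Delta_\ell(t)\in\mathbb{R}$ commutes with every operator, the first step is to split
\[
\Gamma_\ell^2(\Delta_N^\ell) - \Gamma_\ell^2(\Delta_\ell(t)) = \Gamma_\ell(\Delta_N^\ell)\bigl[\Gamma_\ell(\Delta_N^\ell) - \Gamma_\ell(\Delta_\ell(t))\bigr] + \bigl[\Gamma_\ell(\Delta_N^\ell) - \Gamma_\ell(\Delta_\ell(t))\bigr]\Gamma_\ell(\Delta_\ell(t)).
\]

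Next, I would expand $\Gamma_\ell(\Delta_N^\ell) - \Gamma_\ell(\Delta_\ell(t))$ using the power series of Assumption \ref{Gamma} together with the telescoping identity $a^n - b^n = \sum_{k=0}^{n-1} a^{n-1-k}(a-b)\,b^k$. Because $b=\Delta_\ell(t)$ is a scalar, the telescoping is unambiguous and the factor $a-b = \Delta_N^\ell - \Delta_\ell(t) = \sum_\alpha r_{\ell\alpha}(m_\alpha^N - m_\alpha(t))$ can be pulled out to either side, yielding
\[
\Gamma_\ell(\Delta_N^\ell) - \Gamma_\ell(\Delta_\ell(t)) = W_\ell \sum_{\alpha=1}^{d^2} r_{\ell\alpha}\bigl(m_\alpha^N - m_\alpha(t)\bigr),\qquad W_\ell := \sum_{n\ge 1} c_\ell^n \sum_{k=0}^{n-1} (\Delta_N^\ell)^{n-1-k}\Delta_\ell(t)^k,
\]
with $\|W_\ell\|\le \gamma'(\delta_\ell)$ since $\|\Delta_N^\ell\|\le \delta_\ell$ and $|\Delta_\ell(t)|\le \delta_\ell$ by Lemma \ref{lemma_bound_mf}.

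Substituting back, the operator inside the evolution becomes a sum of $2d^2$ terms of the type $r_{\ell\alpha}\,Z\,(m_\alpha^N - m_\alpha(t))\,X$, where $Z$ is either $\Gamma_\ell(\Delta_N^\ell)W_\ell$ or $\Gamma_\ell(\Delta_\ell(t))\,W_\ell$, both satisfying $\|Z\|\le \gamma(\delta_\ell)\gamma'(\delta_\ell)$. At this point I would invoke Lemma \ref{Lemma-dilation}: for the unital completely positive Heisenberg map $\Phi = e^{t\mathcal{L}_N}$ and any self-adjoint $M$, it provides the generalized Cauchy--Schwarz bound
\[
\bigl|\omega\!\bigl(A^\dagger\, \Phi[Z M X]\, B\bigr)\bigr| \le \|Z\|\,\|X\|\,\sqrt{\omega\!\bigl(A^\dagger\, \Phi[M^2]\, A\bigr)}\,\sqrt{\omega(B^\dagger B)}.
\]
Applied term by term with $M = m_\alpha^N - m_\alpha(t)$ (self-adjoint because the $v_\alpha$ are Hermitian, the $r_{\ell\alpha}$ are real, and $m_\alpha(t)\in\mathbb{R}$), and then summed over $\alpha=1,\dots,d^2$ and over the two pieces of the splitting, this produces exactly the claimed estimate with the overall prefactor $2\gamma(\delta_\ell)\gamma'(\delta_\ell)$.

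The main obstacle is the generalized Cauchy--Schwarz step itself: controlling a double operator sandwich $Z M X$ inside the action of a CP-unital evolution requires the dilation-type argument of Lemma \ref{Lemma-dilation} (Kadison--Schwarz together with a Cauchy--Schwarz on $\omega$), rather than an elementary inequality. Once that lemma is granted, the remainder of the argument is an algebraic rearrangement followed by standard norm bookkeeping using Assumption \ref{Gamma}; in particular, the factor of two in the constant simply traces back to the two summands produced by the initial $\Gamma_\ell^2$ splitting.
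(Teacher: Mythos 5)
Your proposal is correct and follows essentially the same route as the paper: factor out $\Delta_N^\ell-\Delta_\ell(t)=\sum_\alpha r_{\ell\alpha}(m_\alpha^N-m_\alpha(t))$ via the telescoping power-series identity, bound the remaining commuting factor in norm by $2\gamma(\delta_\ell)\gamma'(\delta_\ell)$, and apply the dilation-based Cauchy--Schwarz inequality of Lemma \ref{Lemma-dilation} term by term. The only cosmetic difference is that the paper writes the difference of squares as the single product $[\Gamma_\ell(\Delta_N^\ell)-\Gamma_\ell(\Delta_\ell(t))][\Gamma_\ell(\Delta_N^\ell)+\Gamma_\ell(\Delta_\ell(t))]$ rather than your two-summand splitting, which is equivalent since $\Gamma_\ell(\Delta_\ell(t))$ is a scalar.
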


\begin{proof}
Let us start considering the difference inside the action of the generator. We have that 
$$
\Gamma_\ell^2(\Delta_N^\ell)-\Gamma_\ell^2(\Delta_\ell(t))=[\Gamma_\ell(\Delta_N^\ell)-\Gamma_\ell(\Delta_\ell(t))][\Gamma_\ell(\Delta_N^\ell)+\Gamma_\ell(\Delta_\ell(t))]\, . 
$$
Exploiting the power series decomposition of the function $\Gamma_\ell$, we can rewrite 
\begin{eqnarray*}
\Gamma_\ell(\Delta_N^\ell)-\Gamma_\ell(\Delta_\ell(t)) & =\sum_{n=0}^\infty c_\ell^n \left[(\Delta_N^\ell)^n-\Delta_\ell^n(t)\right]\\
& =(\Delta_N^\ell-\Delta_\ell(t))\sum_{n=0}^\infty c_\ell^n \sum_{j=0}^{n-1} (\Delta_N^\ell)^{n-j-1}\Delta_\ell^j(t)\, .
\end{eqnarray*}
Expanding for the definition of $\Delta_N^\ell$, and $\Delta_\ell(t)$, we finally have 
$$
\Gamma_\ell(\Delta_N^\ell)-\Gamma_\ell(\Delta_\ell(t))=\sum_{\alpha}r_{{\ell\alpha}}(m_\alpha^N-m_\alpha(t))\sum_{n=0}^\infty c_\ell^n \sum_{j=0}^{n-1} (\Delta_N^\ell)^{n-j-1}\Delta_\ell^j(t)\, .
$$
Combining everything, we find
$$
\Gamma_\ell^2(\Delta_N^\ell)-\Gamma_\ell^2(\Delta_\ell(t))=\sum_{\alpha}r_{{\ell\alpha}}(m_\alpha^N-m_\alpha(t))Q_\ell\, , 
$$
where 
$$
Q_\ell=[\Gamma_\ell(\Delta_N^\ell)+\Gamma_\ell(\Delta_\ell(t))]\sum_{n=0}^\infty c_\ell^n \sum_{j=0}^{n-1} (\Delta_N^\ell)^{n-j-1}\Delta_\ell^j(t)\, .
$$
We can thus write 
\begin{eqnarray*}
I= &  \omega\left(A^\dagger e^{t\mathcal{L}_N}\left[(\Gamma_\ell^2(\Delta_N^\ell)-\Gamma_\ell^2(\Delta_\ell(t)))X\right] B\right) \\
= & \sum_{\alpha}r_{\ell\alpha}\, \omega\left(A^\dagger e^{t\mathcal{L}_N}\left[(m_\alpha^N-m_\alpha(t))Q_\ell X\right]B\right)\, ,
\end{eqnarray*}
and using the generalized Cauchy-Schwarz inequality in Lemma \ref{Lemma-dilation}, and taking the norm bound for $\|X\|$ and $\|Q_\ell\|$ we find 
$$
|I|\le \|X\| \|Q_\ell\|\sum_{\alpha} |r_{\ell\alpha}|\, \sqrt{\omega\left(A^\dagger e^{t\mathcal{L}_N}\left[(m_\alpha^N-m_\alpha(t))^2\right]A\right) }\sqrt{\omega(B^\dagger B)}\, .
$$
Finally, we note that  
$$
\|Q_\ell\|\le 2\gamma(\delta_\ell)\gamma'(\delta_\ell) \, ,
$$
where 
$$
\gamma'(\delta_\ell)=\sum_{n=0}^\infty n|c_\ell^n|\delta_\ell^{n-1} \, .
$$
Note that we use $\delta_\ell$ here since both $\|\Delta_N^\ell\|$ and, because of Lemma ~\ref{lemma_bound_mf}, also $|\Delta_\ell(t)|$ are smaller than or equal to $\delta_\ell$. 
Clearly, both $\gamma(\delta_\ell)$ and $\gamma'(\delta_\ell)$ are finite. 
\end{proof}

\subsection{Lemma 6}
\label{app_lemma6}
\begin{lemma}
\label{Lemma-dilation} Given any completely positive and unital map $\Lambda[\cdot]$ on the quasi-local algebra $\mathcal{A}$ and a state $\omega$, we have that 
$$
 | \omega\left(A^\dagger \Lambda[C^\dagger D]B\right) | \le \sqrt{\omega(A^\dagger \Lambda[C^\dagger C]A)}\sqrt{\omega\left(B^\dagger \Lambda[D^\dagger D]B\right)}
$$
\end{lemma}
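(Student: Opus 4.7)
The plan is to reduce this operator-valued Cauchy--Schwarz-type inequality to the ordinary Cauchy--Schwarz inequality in a Hilbert space, using a Stinespring dilation of $\Lambda$ together with the GNS construction associated to the state $\omega$. The structural fact at the heart of the argument is that a completely positive unital map on a $C^*$-algebra admits a dilation in which it becomes a compression of a $*$-representation; composed with a cyclic representation of $\omega$, this converts the expression $\omega(A^\dagger \Lambda[C^\dagger D] B)$ into a plain inner product of two vectors, after which the conclusion is immediate.

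Concretely, I would proceed in four steps. First, let $(\mathcal{H}_\omega, \pi_\omega, \Omega)$ denote the GNS triple of $\omega$, so that $\omega(X) = \langle \Omega, \pi_\omega(X)\Omega\rangle$, and consider the composition $\tilde\Lambda := \pi_\omega \circ \Lambda : \mathcal{A} \to \mathcal{B}(\mathcal{H}_\omega)$, which is again completely positive and unital. Second, invoke Stinespring's dilation theorem for $\tilde\Lambda$: this yields a Hilbert space $\mathcal{K}$, a $*$-representation $\pi : \mathcal{A} \to \mathcal{B}(\mathcal{K})$, and an isometry $V : \mathcal{H}_\omega \to \mathcal{K}$ (isometry precisely because $\Lambda$ is unital, forcing $V^*V = \mathbb{I}$) satisfying $\tilde\Lambda(X) = V^*\pi(X) V$ for every $X \in \mathcal{A}$. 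Third, using multiplicativity of the $*$-representation $\pi$, rewrite
\begin{equation*}
\omega\!\left(A^\dagger \Lambda[C^\dagger D] B\right) = \langle \Omega, \pi_\omega(A)^* V^* \pi(C)^* \pi(D) V \pi_\omega(B)\Omega\rangle = \langle \xi_{A,C}, \xi_{B,D}\rangle_{\mathcal{K}},
\end{equation*}
where $\xi_{A,C} := \pi(C) V \pi_\omega(A)\Omega$ and analogously for $\xi_{B,D}$. Fourth, apply the ordinary Cauchy--Schwarz inequality in $\mathcal{K}$, noting that $\|\xi_{A,C}\|_{\mathcal{K}}^2 = \langle \pi_\omega(A)\Omega, V^*\pi(C^\dagger C) V \pi_\omega(A)\Omega\rangle = \omega(A^\dagger \Lambda[C^\dagger C] A)$, and similarly $\|\xi_{B,D}\|_{\mathcal{K}}^2 = \omega(B^\dagger \Lambda[D^\dagger D] B)$. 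Multiplying the two norms gives exactly the claimed bound.

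The only non-trivial input is the existence of the Stinespring dilation, which I regard as the ``hard part'' only in that it must be cited rather than proved in line; everything else is bookkeeping and relies on the multiplicativity $\pi(C^\dagger D) = \pi(C)^*\pi(D)$, the isometry property $V^* V = \mathbb{I}$, and positivity of $\omega$. As an alternative route that avoids invoking Stinespring explicitly, one can define the sesquilinear form $\mathcal{B}\bigl((A,C),(B,D)\bigr) := \omega(A^\dagger \Lambda[C^\dagger D] B)$ on pairs and verify positive semi-definiteness directly: the Gram-type matrix $[C_i^\dagger C_j]_{ij}$ is positive in $M_n(\mathcal{A})$, hence by $n$-positivity of $\Lambda$ so is $[\Lambda[C_i^\dagger C_j]]_{ij}$, and sandwiching with $(A_i)$ produces a positive element of $\mathcal{A}$ on which the positive functional $\omega$ is non-negative; the ordinary Cauchy--Schwarz for positive semi-definite sesquilinear forms then delivers the same conclusion.
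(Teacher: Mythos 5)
Your proof is correct and follows essentially the same route as the paper: both dilate the completely positive map via Stinespring so that $\Lambda[C^\dagger D]$ factorizes into a product of two operators inside a single representation, and then apply the ordinary Cauchy--Schwarz inequality (the paper uses the unitary-plus-ancilla-state form $\Lambda[C]=\tau(U^\dagger C\otimes{\bf 1}\,U)$ with the product state $\omega\otimes\tau$, while you use the isometry form $V^*\pi(\cdot)V$ composed with the GNS representation of $\omega$, which is a cosmetic difference). Your alternative argument via $n$-positivity and positive semi-definite sesquilinear forms is also valid but is not the paper's route.
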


\begin{proof}
The proof of the above Lemma (see also the proof in Ref.~\cite{BenattiEtAl18}) exploits the Stinespring dilation theorem. This states that, given any completely positive map, there exists a unitary operator $U$ acting on an enlarged algebra $\mathcal{A}\otimes \mathcal{B}$ and a state $\tau$ acting only on the algebra $\mathcal{B}$ such that 
\begin{equation}
\Lambda[C]=\tau \left(U^\dagger C\otimes {\bf 1} U\right)\, .
\label{dilation}
\end{equation}
Considering this, we can write 
$$
\omega\left(A^\dagger \Lambda[C^\dagger D]B\right)=\omega\otimes \tau\left(A^\dagger \otimes {\bf 1}[U^\dagger (C^\dagger\otimes {\bf 1}) (D\otimes {\bf 1})U]B\otimes {\bf 1}\right)\, , 
$$
and, using the Cauchy-Schwarz inequality, we have 

\begin{equation}
\eqalign{
|\omega\left(A^\dagger \Lambda[C^\dagger D]B\right)|\le&\sqrt{\omega\otimes \tau\left(A^\dagger \otimes {\bf 1}[U^\dagger (C^\dagger C\otimes {\bf 1}) U]A\otimes {\bf 1}\right)} \times \\
&\sqrt{\omega\otimes \tau\left(B^\dagger \otimes {\bf 1}[U^\dagger (D^\dagger D\otimes {\bf 1}) U]B\otimes {\bf 1}\right)}.}
\end{equation}
Finally, recalling the relation in Eq.~\eref{dilation} we can go back to the map $\Lambda$ to obtain 
$$
|\omega\left(A^\dagger \Lambda[C^\dagger D]B\right)|\le\sqrt{\omega\left(A^\dagger \Lambda[C^\dagger C]A\right)} \sqrt{\omega\left(B^\dagger \Lambda[D^\dagger D]B\right)}\, ,
$$
which concludes the proof. 
\end{proof}
 
\bibliography{DM_bib2}        \bibliographystyle{unsrt}

\end{document}